\documentclass[acmsmall,screen,nonacm]{acmart}

\usepackage[T1]{fontenc}
\usepackage{syntax}
\usepackage{mathtools}
\usepackage{caption}
\usepackage{enumitem}
\usepackage{wasysym}
\usepackage{listings}
\usepackage{multido}
\usepackage{stmaryrd}
\usepackage{galois}
\usepackage{tikz-cd}
\usepackage{thm-restate}
\usepackage{longtable}
\usepackage{mathtools}
\usepackage{subcaption}
\usepackage[most]{tcolorbox}
\usepackage{tikz}
\usetikzlibrary{arrows.meta,positioning}
\usepackage{wrapfig}
\usepackage[math]{blindtext}
\usepackage{setspace}
\usepackage{placeins}
\usepackage{afterpage}

\usepackage{float}

\usepackage[textsize=tiny,colorinlistoftodos,textwidth=1.3cm]{todonotes}

\definecolor{azure}{HTML}{0080FF}
\definecolor{dgreen}{HTML}{228B22}

\newcommand{\newextmathcommand}[2]{\newcommand{#1}{\ensuremath{#2}\xspace}
}

\newcommand{\magmoid}{magmoid\xspace}
\newcommand{\magmoids}{magmoids\xspace}

\newcommand{\IR}{\textsc{ImpADT}\xspace}

\newcommand{\pcf}{\textsc{PCF}\xspace}

\newcommand{\nn}{\mathbb{N}}

\newcommand{\zz}{\mathbb{Z}}
\newcommand{\zzi}{\overline{\mathbb{Z}}}
\newcommand{\qq}{\mathbb{Q}}

\newcommand{\bb}{\mathbb{B}}

\newcommand{\pp}{\mathcal{P}}
\newcommand{\dd}{\mathcal{D}}
\newcommand{\ww}{\mathcal{W}}
\newcommand{\ii}{\mathcal{I}}

\newextmathcommand{\Data}{\mathrm{Data}}
\newextmathcommand{\Metrics}{\mathcal{M}}
\newextmathcommand{\ListInt}{\mathtt{ListInt}}
\newextmathcommand{\Cons}{\mathtt{Cons}}
\newextmathcommand{\Nil}{\mathtt{Nil}}
\newextmathcommand{\Some}{\mathtt{Some}}
\newextmathcommand{\None}{\mathtt{None}}
\newextmathcommand{\unit}{1}

\newextmathcommand{\Cst}{\mathtt{Cst}}
\newextmathcommand{\Var}{\mathtt{Var}}
\newextmathcommand{\Add}{\mathtt{Add}}
\newextmathcommand{\Mul}{\mathtt{Mul}}

\newextmathcommand{\sconcat}{\mathtt{f}}

\newextmathcommand{\cardm}{\mathtt{card}}
\newextmathcommand{\len}{\mathtt{len}}

\newextmathcommand{\abs}{\mathrm{abs}}

\newextmathcommand{\catex}{\mathbb{C}}
\newextmathcommand{\catexbis}{\mathbb{D}}

\newextmathcommand{\costm}{\mathcal{C}}
\newextmathcommand{\statespace}{\mathrm{Hw}}
\newextmathcommand{\statem}{{\mathcal{S}t}}\newextmathcommand{\counterm}{\mathcal{K}}

\DeclareMathOperator{\lfp}{\mathrm{lfp}\xspace}
\DeclareMathOperator{\gfp}{\mathrm{gfp}\xspace}
\newextmathcommand{\pleq}{\mathbin{\dot\leq}}
\newextmathcommand{\pgeq}{\mathbin{\dot\geq}}

\newcommand{\letin}[4]{\mathrm{let}_{#1}~#2\shortleftarrow#3~\mathrm{in}~#4}

\newcommand{\eqdef}{\overset{\Delta}{=}}

\newcommand{\Fix}{\mathrm{Fix}}

\newcommand{\candf}{\hat{f}}

\newcommand{\id}{\mathrm{id}}
\newcommand{\Id}{\mathrm{Id}}
\newcommand{\Ob}{\mathrm{Ob}}

\newextmathcommand{\Hom}{\mathrm{Hom}}
\newextmathcommand{\End}{\mathrm{End}}

\newextmathcommand{\Set}{\mathbf{Set}}
\newextmathcommand{\Ord}{\mathbf{Ord}}
\newextmathcommand{\Pos}{\mathbf{Pos}}
\newextmathcommand{\CPO}{\mathbf{CPO}}
\newextmathcommand{\CLat}{\mathbf{CLat}}
\newextmathcommand{\Kleisli}{\mathcal{K}\ell}
\newextmathcommand{\syntcat}{\mathcal{L}}

\newextmathcommand{\Vars}{\mathcal{V}}
\newextmathcommand{\Locs}{\mathcal{L}}
\newextmathcommand{\wnodes}{\mathcal{N}}
\newextmathcommand{\Prog}{\mathtt{Prog}}
\newextmathcommand{\Funs}{\mathcal{F}}
\newextmathcommand{\Types}{\mathcal{T}}
\newextmathcommand{\BasicTypes}{\mathcal{B}}
\newextmathcommand{\Conss}{\mathcal{C}}

\newextmathcommand{\ite}{\mathrm{ite}}

\newextmathcommand{\Mem}{\mathrm{Mem}}
\newextmathcommand{\aMem}{\mathrm{Mem}^\sharp}

\makeatletter
\newcommand{\sesquisharp}{{\mathbin{\mathpalette\sesquisharp@{}}}}
\newcommand{\sesquisharp@}[2]{\vcenter{\hbox{\begin{tikzpicture}[baseline=(current bounding box.center), x=1ex,y=1ex, line width=0.08ex, line cap=round]
      \node[inner sep=0pt] at (0,0) {$\m@th#1\sharp$};
      \node[inner sep=0pt] at (.35,.12) {$\m@th#1\sharp$};
    \end{tikzpicture}}}}
\makeatother

\newextmathcommand{\funvec}{\vec{\mathbf{f}}}\newextmathcommand{\argvec}{\vec{\mathbf{x}}}

\NewCommandCopy{\rawPhi}{\Phi}
\renewcommand{\Phi}{\mathrm{\rawPhi}}
\NewCommandCopy{\rawPsi}{\Psi}
\renewcommand{\Psi}{\mathrm{\rawPsi}}

\newcommand{\sem}[1]{\llbracket#1\rrbracket}
\newcommand{\semop}[1]{\llbracket#1\rrbracket_{\Phi}}
\newcommand{\semstmt}[1]{\llbracket#1\rrbracket_{\lambda}}
\newcommand{\widesemstmt}[1]{\left\llbracket#1\right\rrbracket_{\lambda}}
\newcommand{\semexpr}[1]{\llbracket#1\rrbracket_{\mathbb{E}}}
\newcommand{\semaexp}[1]{\llbracket#1\rrbracket_{\mathrm{AExp}}}
\newcommand{\sembexp}[1]{\llbracket#1\rrbracket_{\mathbb{B}}}

\newcommand{\semtypes}[1]{\llbracket#1\rrbracket_{\mathcal{T}}}
\newcommand{\asem}[1]{\llbracket#1\rrbracket^\sharp}
\newcommand{\asemop}[1]{\llbracket#1\rrbracket_{\Phi}^\sharp}
\newcommand{\asemstmt}[1]{\llbracket#1\rrbracket_{\lambda}^\sharp}
\newcommand{\asemexpr}[1]{\llbracket#1\rrbracket_{\mathbb{E}}^\sharp}

\newcommand{\asembexp}[1]{\llbracket#1\rrbracket_{\mathbb{B}}^\sharp}

\newcommand{\mathcmt}[1]{\ensuremath{\Lbag\text{#1}\Rbag}\xspace}

\tikzset{
  no line/.style={draw=none,
    commutative diagrams/every label/.append style={/tikz/auto=false}}}

\AtEndPreamble{\theoremstyle{acmdefinition}
\newtheorem{remark}[theorem]{Remark}}
\AtEndPreamble{\theoremstyle{acmplain}
\newtheorem*{theorem*}{Theorem}}
\AtEndPreamble{\theoremstyle{acmplain}
\newtheorem*{proposition*}{Proposition}}
\AtEndPreamble{\theoremstyle{acmplain}
\newtheorem{fact}{Fact}}

\begin{document}

\title[Abstract Compilation as Abstraction of Operator Semantics, applied to Cost Analysis]
{Abstract Compilation as Abstraction of Operator Semantics,\\ applied to Cost Analysis}

\author{Louis Rustenholz}
\email{louis.rustenholz@imdea.org}
\orcid{0000-0002-1599-2431}
\affiliation{\institution{Universidad Polit\'{e}cnica de Madrid}
  \city{Madrid}
  \country{Spain}
}
\affiliation{\institution{IMDEA Software Institute}
  \city{Pozuelo de Alarc\'{o}n}
  \country{Spain}
}

\author{Alessio Mansutti}
\orcid{0000-0002-1104-7299}
\affiliation{\institution{IMDEA Software Institute}
  \city{Pozuelo de Alarc\'{o}n}
  \country{Spain}
}

\author{Pedro L\'{o}pez-Garc\'{i}a}
\orcid{0000-0002-1092-2071}
\affiliation{\institution{Spanish Council for Scientific Research (CSIC)}
  \city{Madrid}
  \country{Spain}
}
\affiliation{\institution{IMDEA Software Institute}
  \city{Pozuelo de Alarc\'{o}n}
  \country{Spain}
}

\author{Félix Ridoux}
\authornote{Work performed under previous IMDEA affiliation.}
\orcid{0009-0002-6312-0483}
\affiliation{\institution{Computer Science Laboratory of Sorbonne University (LIP6)}
  \city{Paris}
  \country{France}
}
\affiliation{\institution{STMicroelectronics}
  \city{Crolles}
  \country{France}
}

\author{Niki Vazou}
\orcid{0000-0003-0732-5476}
\affiliation{\institution{IMDEA Software Institute}
  \city{Pozuelo de Alarc\'{o}n}
  \country{Spain}
}

\author{Manuel V. Hermenegildo}
\orcid{0000-0002-7583-323X}
\affiliation{\institution{Universidad Polit\'{e}cnica de Madrid}
  \city{Madrid}
  \country{Spain}
}
\affiliation{\institution{IMDEA Software Institute}
  \city{Pozuelo de Alarc\'{o}n}
  \country{Spain}
}

\renewcommand{\shortauthors}{Louis Rustenholz et al.}
\authorsaddresses{}

\begin{abstract}
Least fixpoints are fundamental to program semantics, but they
abstract away the recursive structure that generated them.
We introduce \emph{operator semantics}: a semantic intermediate representation between syntax and classical denotational semantics, which treats programs as operators.
Abstract compilation is then understood as the act of abstracting such operators.
We develop higher-order abstract domains for functions, operators, and
programs themselves, in which composition is the
key novel primitive, together with a categorical framework for constructing sound, precise,
and modular abstract compilers.  We instantiate this framework in the
context of recurrence-based static cost analysis, developing
solver-independent, optimal recurrence extraction techniques for
recursive programs over algebraic data types,
that support general function unknowns and catamorphic metrics,
a broad class of size metrics beyond traditional approaches.

\end{abstract}

\begin{CCSXML}
<ccs2012>
  <concept>
    <concept_id>10003752.10010124.10010138.10010143</concept_id>
    <concept_desc>Theory of computation~Program analysis</concept_desc>
    <concept_significance>500</concept_significance>
  </concept>
  <concept>
    <concept_id>10003752.10010124.10010131</concept_id>
    <concept_desc>Theory of computation~Program semantics</concept_desc>
    <concept_significance>500</concept_significance>
  </concept>
  <concept>
    <concept_id>10003752.10010124.10010131.10010137</concept_id>
    <concept_desc>Theory of computation~Categorical semantics</concept_desc>
    <concept_significance>300</concept_significance>
  </concept>
  <concept>
    <concept_id>10003752.10010124.10010138.10011119</concept_id>
    <concept_desc>Theory of computation~Abstraction</concept_desc>
    <concept_significance>300</concept_significance>
  </concept>
  <concept>
    <concept_id>10011007.10010940.10010992.10010998.10011000</concept_id>
    <concept_desc>Software and its engineering~Automated static analysis</concept_desc>
    <concept_significance>300</concept_significance>
  </concept>
  <concept>
    <concept_id>10011007.10011006.10011041</concept_id>
    <concept_desc>Software and its engineering~Compilers</concept_desc>
    <concept_significance>300</concept_significance>
  </concept>
</ccs2012>
\end{CCSXML}

\ccsdesc[500]{Theory of computation~Program analysis}
\ccsdesc[500]{Theory of computation~Program semantics}
\ccsdesc[300]{Theory of computation~Categorical semantics}
\ccsdesc[300]{Theory of computation~Abstraction}
\ccsdesc[300]{Software and its engineering~Automated static analysis}
\ccsdesc[300]{Software and its engineering~Compilers}

\keywords{Abstract interpretation; least fixpoints; operator
  semantics; abstract compilation; higher-order abstract domains;
  static cost analysis; recurrence extraction; catamorphic metrics.}

\maketitle

\section{Introduction}
\label{sec:introduction}

Least fixpoints are central to the semantics of programs.  They
provide elegant characterisations of recursive
behaviours, and form the basis of foundational reasoning techniques such as abstract interpretation~\cite{Cousot79}.
However, from the perspective of program analysis, least fixpoints
also discard valuable information: the recursive structure that
generated them,
which may be exploited to produce concise representations of expressive invariants
and guide analyses.
This paper explores a different semantic viewpoint: rather than taking
fixpoints as the primary semantic objects, and discarding the semantic
operator that generated them as a mere technical auxiliary, we
promote it to a first-class object.
Our viewpoint can be summarised as: \ \textbf{Programs $\equiv$ Operators.}

Once this viewpoint has been adopted, it is natural to view Galois connections between operator spaces,
i.e. abstractions of operators, as \emph{abstractions of programs themselves}, recursive structure included.
We may view this as a form of \emph{abstract compilation},
applicable to a wide variety of notions of programs and abstractions.
The viewpoint is in fact symmetric: operators arising from functional
equations may be
interpreted as programs, further encouraging the application of static
analysis beyond traditional programming languages (e.g. in cyberphysical systems or systems biology).
Our own motivation to abstract programs is summarised as
``simplifying problems gives access to powerful techniques to solve them'':
increasingly abstracted programs are amenable to increasingly powerful
analyses, potentially up to complete methods
(e.g. decidable logic fragments, some classes of recurrence equations),
enabling the use of diverse analysis portfolios.

To implement abstract compilers based on these ideas, we build upon the classical abstract
interpretation methodology,
but develop original \emph{higher-order abstract domains}, whose elements are interpreted as
functions, operators and \emph{programs themselves}.
This distinguishes our goals and perspective from established uses of operators and semantic
equations
(e.g.~\cite{Bourdoncle93-long} in abstract interpretation or the immediate consequence operator in logic programming~\cite{EmKo76}),
as well as from previous work on \emph{abstract compilation}, discussed below.

\subsubsection*{Application to Cost Analysis}

A primary motivation for this work stems from challenges in recurrence-based static cost analysis~\cite{Wegbreit75,Le88-shortest,Rosendahl89-shortest,granularity-short,caslog-short,low-bounds-ilps97-short,resource-iclp07,ciaopp-sas03-journal-scp-shortest,AlbertAGP11a-short,plai-resources-iclp14-short,gen-staticprofiling-iclp16-extrashortest,montoya-phdthesis-short,resource-verification-tplp18-shortest,LommenGiesl23-shortest}
which consists in extracting recurrence equations from programs, whose solutions
bound resource usage (e.g. execution time, memory consumption)
as functions of numerical descriptions of inputs, called \emph{(size) metrics}.
More generally, recurrence extraction and resolution is a powerful technique
to infer \emph{non-linear numerical invariants}~\cite{kovacs-phdthesis,kovacs17-short,kincaid2018,LommenGiesl23-shortest,Wang-PRS23-short}.
Historically, these approaches have been constrained by the capabilities of
available solvers, traditionally focused on exact resolution of well-behaved equations. Unfortunately, such equations do not capture the
full range of complex behaviours arising in programs,
which may be multivariate, non-linear recursive, or non-monotone;
contain conditional expressions; require parametric optimisation to model non-determinism, etc.
Recent advances in recurrence solving, particularly
postfixpoint-based~\cite{order-recsolv-sas24,Goharshady-ESOP2025,absfun-STTT26,PLDI2026-supermartingalesuniquefixedpoints},
have considerably broadened the class of analysable equations, but recurrence
extraction has yet to fully exploit these developments.

Rather than incrementally adapting to successive solver improvements, we seek a
solver-independent notion of \emph{optimal} recurrence extraction. We answer this
question by making precise the long-standing intuition that ``recurrences are
abstractions of programs''.
Crucially, our notion of soundness compares operators rather than their fixpoints.
This fixed-point delaying approach yields optimal abstract programs---generalised
recurrence equations---that remain recursive, are often finitely representable,
and are amenable to further analysis by arbitrary techniques.

As a direct benefit of our semantic considerations, the extraction techniques
we propose yield great flexibility in the \emph{type} of the functional unknowns that can be
considered.
Additionally, we identify a class of metrics for which operator abstraction can be practically computed,
which we call \emph{catamorphic (collections of) metrics}, including metrics unsupported by established approaches.

\subsubsection*{Abstract Compilation}

The term \emph{abstract compilation} appears to have been coined
by~\cite{pracai-jlp}, and has reappeared regularly in the literature since,
with slightly different definitions. Our paper shows that these techniques may be viewed as different implementations and
syntactic representations of a \emph{single} underlying ideal semantic
notion: optimal abstraction of operator semantics (for varying function
types). Reviewing them thus situates our contribution while showcasing applications.

Abstract compilation was initially a technique to speed up analysers by
avoiding ``interpretation overhead'': partially evaluating the abstract
interpreter with respect to a (state-level) abstract domain and
program~\cite{DebrayWarren88,warren:TR88,pracai-jlp,BF-abscomp-96} yields an
abstract program in the original language whose \emph{execution} performs the
analysis for any abstract input.
Later work instead extracts abstract programs in order to \emph{analyse} them
by other techniques: \cite{RossignoliS06} extracts recursive Boolean
functions detecting non-cyclical data structures in object-oriented
languages,
and~\cite{Ancona-10-abscomp-ty-clp,Ancona-11-abscomp-ty-clp,Ancona-12-abscomp-ssi}
coinductive CLP(X) for elaborate type analyses.
In~\cite{jacquemin-arxiv19-abscomp-float}, the term denotes an ``intermediate
technique between testing and static analysis'', executing programs over
values representing e.g. ranges of floating-point numbers;
related possible future applications of abstract compilation include
non semantic-preserving compilers, e.g. speeding up numerical analysis
loops by (principled) precision loss.

Our framework captures and generalises these works thanks to the richness of
Galois connections of operators, together with flexible notions of function
spaces and abstract composition: this enables \emph{control-flow}
abstractions rather than merely value abstractions (as in partial evaluation
with respect to a fixed interpreter and value domain), and state
``abstractions'' that cannot arise from Galois connections at the state
level, with correctness-by-construction guarantees.

Abstract compilation is also related to \emph{program transformation by
abstract interpretation}; further comparisons are given in
Section~\ref{sec:rel-work}. For now, observe that, thanks to fixpoint
delaying, the operators we consider often admit \emph{finite representations}
(contrasting with the infinite trace sets and non-computable
syntax/semantics adjunctions of~\cite{Cousot02}), while enjoying more
convenient algebraic properties than more syntactic objects (e.g. syntax
trees themselves, or the free algebra domains of~\cite{lesbre2024compiling},
which are preorders rather than complete lattices).
In that sense, operator semantics may be viewed as an \emph{intermediate
choice between syntax and classical semantics}, identifying equivalent
presentations of programs while preserving recursive structure.

\subsubsection*{Main contributions}

\begin{itemize}[leftmargin=*,topsep=0pt]
  \item \emph{We propose operator semantics as an intermediate semantic representation of programs}
    (Section~\ref{sec:operator-semantics}),
    parameterised by flexible notions of computational effects and abstractions.
Unlike standard denotational semantics, operator semantics explicitly preserves
    recursive structure while abstracting away language-specific syntax.
We exemplify (Sec.~\ref{sec:ir-lang}, Fig.~\ref{fig:opsem-ir-num-monadic}) on an imperative-style
    language with algebraic data types (ADTs),
and show how \emph{order-enriched monads} not only enable invariant-based reasoning
    (Proposition~\ref{prop:clat-image-KT}),
but also capture analysis techniques, such as the \emph{counter} approach to
    non-linear invariant inference, via an order-enriched monad transformer~(Sec.~\ref{subsubsec:counter-monad-transformer}).

  \item \emph{We introduce abstract compilation as an abstraction of operators}
    (Section~\ref{sec:abs-op-for-abs-comp}).
To enable sound, precise and modular abstract compilers, \emph{we develop a systematic categorical framework} to design and study \emph{higher-order abstract
    domains} representing operators as well as abstraction \emph{constructors}.

    In particular, we identify and study \emph{algebraic properties of abstract composition}:
while it is non-associative in general,
    requiring implementer caution, best bracketings can be identified under some assumptions
    (Proposition~\ref{prop:best-bracketing}).
We also define a notion of \emph{abstract monads} (Definition~\ref{def:abstract-monad},
    Theorem~\ref{theorem:optimal-oplax-monad-by-GC}), promoting classical abstract domains to abstract effects
    composable with other effects. Precision advantages of
    abstract monadic composition are identified.

  \item \emph{We instantiate the framework for static cost analysis} of recursive programs over ADTs
    (Section~\ref{sec:application-and-implementation}),
    deriving sound and optimal recurrence extraction techniques that support broad classes of metrics
(applying \emph{size abstractions} derived from
    Theorems~\ref{theorem:kan-hom-domain-codomain-abstraction}~and~\ref{theorem:size-abstraction-kleisli-oplax-endofunctor}).
\end{itemize}

\section{Illustrative Example}
\label{sec:illustrative-example}

We now give the intuition behind \emph{operator semantics} and
\emph{abstract compilation} through a small illustrative example,
before presenting the formal framework.
Consider a simple program
manipulating symbolic and numerical data, such as the program of
Fig.~\ref{subfig:example-prog-concrete-syntax}, which performs (very
naive) symbolic differentiation
and evaluation of expressions.
Suppose we want to analyse it, for example to
infer bounds on its execution cost. To do so, we first need a semantic
description of the program together with a cost model, i.e. a way to
measure execution costs in terms of some resource of interest (e.g. time or a more abstract
resource).  For this example, we choose a cost model that assigns a
unit cost $1$ to function calls,
and logarithmic cost to integer arithmetic.\footnote{For simplicity, we only count $\ell(x)+\ell(y)$ for
additions and $\ell(x)\cdot\ell(y)$ for multiplication, where
$\ell(n)=1+\lfloor \log_2(\max(1,\abs(n))) \rfloor$ is essentially the
bit-length of an integer $n\in\zz$. All other operations have cost
$0$.}

The operator $\Phi$ in Fig.~\ref{subfig:example-prog-concrete-operator}
is the operator semantics for the program in Fig.~\ref{subfig:example-prog-concrete-syntax}.
Given a
  triple $\funvec = (\funvec_{{\tt diff}}, \funvec_{{\tt eval}},
  \funvec_{{\tt diffeval}})$
of arbitrary denotations associated with program functions
  ({\tt diff}, {\tt eval}, and {\tt diffeval}),
$\Phi(\funvec)$
  returns a triple of denotations obtained by performing a single
  semantic unfolding of every function definition
  simultaneously. Classical denotational semantics would interpret the recursive functions
  {\tt diff}, {\tt eval}, and {\tt diffeval} as components of a \emph{fixed point}
  of $\Phi$. We instead preserve $\Phi$ itself, and allow its evaluation outside of fixpoints.
The semantic operators considered in this paper are parameterised by
monads, which can be viewed as
  a modular way of specifying what kind of computations semantic
  functions perform.  Two effects are composed in our example: the powerset monad $\pp(-)$ that
  models nondeterminism by allowing functions to return sets
  of possible results (reminiscent of collecting semantics approaches
  in abstract interpretation), and a cost monad $\costm(-)$ (where
  $\costm(X)=X\times \zz$) that augments every result with an execution
  cost. Their composition therefore represents computations producing
  multiple possible results together with their associated costs.
For example, the semantic representation of the output of a call to {{\tt diff}}
  with an input expression $\Add(e_1,e_2)$ is a set of pairs, where the first element is the result and the second
  element, $2+c_1+c_2$, is its associated cost: the cost of the two
  recursive calls to {{\tt diff}} plus the costs of those calls,
  $c_1+c_2$, where these costs are provided
  by the input denotation $\funvec_{{\tt diff}}$.
  For function {\tt eval}, the cost of performing addition, namely $\ell(r_1)+\ell(r_2)$, is also added to the
  cost. $\Phi$ may be viewed as \emph{a semantic representation of the
  program itself}, control-flow included (for the recursion points
  selected when defining the operator semantics), with only syntactic
  details abstracted away.

  \definecolor{ConcreteBlue}{HTML}{EAF3FF}
\definecolor{AbstractOrange}{HTML}{FFF3E0}

\newcommand{\PipeSep}{2cm}

\newlength{\ExamplePanelHeight}
\setlength{\ExamplePanelHeight}{9.5cm}
\newcommand{\PanelWidth}{.49\textwidth}
\newcommand{\PanelSep}{\hspace{.005\textwidth}}

\tcbset{
  concretepanel/.style={
    enhanced,
    colback=ConcreteBlue,
    colframe=blue!55!black,
    boxrule=0.6pt,
    arc=2pt,
    left=3pt,right=3pt,top=3pt,bottom=3pt,
    width=\linewidth,
    height=\ExamplePanelHeight,
    valign=top
  },
  abstractpanel/.style={
    enhanced,
    colback=AbstractOrange,
    colframe=orange!75!black,
    boxrule=0.6pt,
    arc=2pt,
    left=3pt,right=3pt,top=3pt,bottom=3pt,
    width=\linewidth,
    height=9.2cm,  valign=top
  }
}
\begin{figure*}[t]
\centering
\scriptsize

\begin{tikzpicture}[
  pipebox/.style={
    rounded corners,
    draw,
    thick,
    inner xsep=2pt,
    inner ysep=3pt,
    align=center,
    minimum height=1.05cm
  },
  arr/.style={-{Latex[length=1.8mm]}, very thick},
  lab/.style={
    font=\scriptsize,
    align=center,
    inner sep=0pt,
},
  every node/.style={font=\scriptsize}
]

\node[pipebox, fill=ConcreteBlue, text width=.115\textwidth] (prog)
 {\textbf{(a) Concrete}\\\textbf{program}\\[2pt]
\scriptsize \IR\\[-2pt]syntax};

\node[pipebox, fill=ConcreteBlue, text width=.155\textwidth, right=1.4cm of prog] (op)
     {\textbf{(b) Concrete}\\\textbf{operator $\Phi$}\\[2pt]
\scriptsize symbolic values\\[-2pt]+ costs};

\node[pipebox, fill=AbstractOrange, text width=.155\textwidth, right=1.8cm of op] (absop)
     {\textbf{(c) Abstract}\\\textbf{operator $\Phi^\sharp$}\\[2pt]
\scriptsize numerical sizes\\[-2pt]+ costs};

\node[pipebox, fill=AbstractOrange, text width=.115\textwidth, right=1.4cm of absop] (chc)
 {\textbf{(d) Abstract}\\\textbf{program}\\[2pt]
\scriptsize (e.g. CHC\\[-2pt]encoding)};

\draw[arr] (prog) -- node[above, lab] {Operator \\ semantics \vspace{.4em} } (op);

\draw[arr] (op) -- node[above, lab] {Abstract \\ compilation \vspace{.4em}}
           node[below, font=\scriptsize\itshape, align=center] {size abstraction \\ $(nb,\ell_{\max})$} (absop);

\draw[arr] (absop) -- node[above, lab] {Syntax \\[-.75em] \phantom{.} \vspace{.4em}} (chc);

\end{tikzpicture}

\vspace{.1em}
\noindent
\begin{subfigure}[t]{.42\textwidth} \begin{tcolorbox}[concretepanel]
\textbf{(a) Concrete program: \IR syntax}
\begin{lstlisting}[morekeywords={type,int,def,if,then,else,match,return,true,false},basicstyle=\ttfamily\tiny,columns=fullflexible]
type Expr = {
  Cst(int) | Var | Add(Expr,Expr) | Mul(Expr,Expr)
}

def diff(e: Expr) -> Expr {
  let res: Expr = top;
  match (e) {
    Cst(c)      => { res = Cst(0) }
  | Var         => { res = Cst(1) }
  | Add(e1, e2) => { res = Add(diff(e1), diff(e2)) }
  | Mul(e1, e2) => {
    res = Add(Mul(diff(e1),e2), Mul(e1,diff(e2))) }
  }
  return res }

def eval(e: Expr, x: int) -> int {
  let res: int = top;
  match (e) {
    Cst(c)      => { res = c }
  | Var         => { res = x }
  | Add(e1, e2) => { res = eval(e1,x) + eval(e2,x) }
  | Mul(e1, e2) => { res = eval(e1,x) * eval(e2,x) }
  }
  return res }

def diffeval(e: Expr, x:int) -> int {
  return eval(diff(e), x)
}
\end{lstlisting}
\end{tcolorbox}
\phantomcaption
\label{subfig:example-prog-concrete-syntax}
\end{subfigure}\PanelSep \begin{subfigure}[t]{.57\textwidth} \begin{tcolorbox}[concretepanel]
\textbf{(b) Concrete operator semantics $\Phi$ (with $\pp\costm(-)$ monad)}
$$\Phi = \semop{\Prog}^{\pp\costm} \in \End\Big(\prod_{f\in\Funs}\Mem_{\Vars_{f,in}} \to \pp\costm(\Mem_{\Vars_{f,out}})\Big)$$
\vspace{-.5em}
\begin{align*}
          \big(\Phi(\funvec)\big)_{{\tt diff}} \colon
             \semtypes{{\tt Expr}} & \to \pp\costm(\semtypes{{\tt Expr}})\\
\Cst(c) & \mapsto \{(\Cst(0),\, 0)\}\\
\Var & \mapsto \{(\Cst(1),\, 0)\}\\
\Add(e_1,e_2) & \mapsto
                \big\{
                     \big(\Add(r_1,r_2),\, 2+c_1+c_2\big) \\&\phantom{\mapsto\quad}
                     \,\big|\,
                     (r_1,c_1)\in \funvec_{{\tt diff}}(e_1),\, (r_2,c_2)\in \funvec_{{\tt diff}}(e_2)
                \big\}\\
             \Mul(e_1,e_2) & \mapsto
                \big\{
                     \big(\Add(\Mul(r_1,e_2),\Mul(e_1,r_2)),\, 2+c_1+c_2\big) \\&\phantom{\mapsto\quad}
                     \,\big|\,
                     (r_1,c_1)\in \funvec_{{\tt diff}}(e_1),\, (r_2,c_2)\in \funvec_{{\tt diff}}(e_2)
                \big\}
        \end{align*}
        \vspace{-1.5em}
        \begin{align*}
          \big(\Phi(\funvec)\big)_{{\tt eval}} \colon
             \semtypes{{\tt Expr}}\times\zz & \to \pp\costm(\zz)\\
(\Cst(c),\,x) & \mapsto \{(c,\, 0)\}\\
(\Var,\,x) & \mapsto \{(x,\, 0)\}\\
(\Add(e_1,e_2),\,x) & \mapsto
                \big\{
                     (r_1+r_2,\, 2+c_1+c_2+\ell(r_1)+\ell(r_2)) \\&\phantom{\mapsto\quad}
                     \,\big|\,
                     (r_1,c_1)\in \funvec_{{\tt eval}}(e_1),\, (r_2,c_2)\in \funvec_{{\tt eval}}(e_2)
                \big\}\\
             (\Mul(e_1,e_2),\,x) & \mapsto
                \big\{
                     (r_1 \cdot r_2,\, 2+c_1+c_2+\ell(r_1)\cdot\ell(r_2)) \\&\phantom{\mapsto\quad}
                      \,\big|\,
                     (r_1,c_1)\in \funvec_{{\tt eval}}(e_1),\, (r_2,c_2)\in \funvec_{{\tt eval}}(e_2)
                \big\}
        \end{align*}
        \vspace{-1.5em}
        \begin{align*}
          \big(\Phi(\funvec)\big)_{{\tt diffeval}} \colon
             \semtypes{{\tt Expr}}\times\zz & \to \pp\costm(\zz)\\[-2pt]
(e,\,x) &\mapsto
               \left\{
                    (r,c_1+c_2)
                      \;\middle|\;
                      {\tiny\begin{matrix*}[l]
                        (e',c_1)\in \funvec_{{\tt diff}}(e),\\
                        (r,c_2)\in \funvec_{{\tt eval}}(e',x)
                      \end{matrix*}}
               \right\}
        \end{align*}
\end{tcolorbox}
\phantomcaption
\label{subfig:example-prog-concrete-operator}
\end{subfigure}\vspace{.04em} \noindent
\begin{subfigure}[t]{.66\textwidth} \begin{tcolorbox}[abstractpanel]
\textbf{(c) Abstract operator
(size abstraction)}
\scriptsize
$$\Phi^\sharp = \semop{\Prog}^{\sharp_{sz},\pp\costm} \in \End\Big(\prod_{f\in\Funs}\aMem_{\Vars_{f,in}} \to \pp\costm(\aMem_{\Vars_{f,out}})\Big)$$
$\big(\Phi^\sharp(\funvec^\sharp)\big)_{{\tt diff}} \colon \zz^2 \to \pp\costm(\zz^2)$
        \begin{align*}
           & \begin{pmatrix} n \\ l \end{pmatrix}
\mapsto \ite\left(n=1~\mathrm{and}~l\geq 0,\,
                    \Big\{\Big(\begin{pmatrix} 1 \\ 1 \end{pmatrix},\, 0\Big)\Big\},\,
                    \varnothing
                \right)\\
& \cup\bigcup
\left\{\;
                     \begin{Bmatrix*}[l]
\left(\begin{pmatrix} 1+n'_1+n'_2\\ \max(l'_1,l'_2) \end{pmatrix},\, 2+c_1+c_2\right), \\[8pt]
\left(\begin{pmatrix} 3+n'_1+n'_2+n_1+n_2\\ \max(l'_1,l'_2,l_1,l_2) \end{pmatrix},\, 2+c_1+c_2\right)
                     \end{Bmatrix*}
\;\middle|\;
\begin{matrix*}[l]
                     n_1,n_2\geq 1,\; n = 1+n_1+n_2,\\
                     l_1,l_2\geq 0,\; l = \max(l_1,l_2),\\
                     \big((n'_1,l'_1), c_1\big)\in \funvec^\sharp_{{\tt diff}}(n_1,l_1),\\
                     \big((n'_2,l'_2), c_2\big)\in \funvec^\sharp_{{\tt diff}}(n_2,l_2)
                     \end{matrix*}
                     \,
              \right\}
        \end{align*}

        $\big(\Phi^\sharp(\funvec^\sharp)\big)_{{\tt eval}} \colon \zz^2\times\zz \to \pp\costm(\zz)$
        \begin{align*}
           & \left(\begin{pmatrix} n \\ l \end{pmatrix},\,x\right)
\mapsto \ite\Big(n=1~\mathrm{and}~l\geq 0,\, \ite\Big(l=0,\,
\big\{(x,0)\big\},\,
\big\{(r,0)\,|\,\ell(r)=l\big\}
                 \Big),\,
                \varnothing
                \Big)\\
& \cup\bigcup
\left\{\;
                     \begin{Bmatrix*}[l]
\big(r_1 + r_2,\, 2+c_1+c_2+ \ell(r_1)+\ell(r_2)\big), \\[6pt]
\big(r_1 \cdot r_2,\, 2+c_1+c_2+\ell(r_1)\cdot\ell(r_2)\big) \\
                     \end{Bmatrix*}
\,\middle|\,
{\tiny\begin{matrix*}[l]
                     n_1,n_2\geq 1,\; n = 1+n_1+n_2,\\
                     l_1,l_2\geq 0,\; l = \max(l_1,l_2),\\
                     \big(r_1, c_1\big)\in \funvec^\sharp_{{\tt eval}}(n_1,l_1),\\
                     \big(r_2, c_2\big)\in \funvec^\sharp_{{\tt eval}}(n_2,l_2)
                     \end{matrix*}}
                     \,
              \right\}
        \end{align*}

        $\big(\Phi^\sharp(\funvec^\sharp)\big)_{{\tt diffeval}} \colon \zz^2\times\zz \to \pp\costm(\zz)$
        \begin{align*}
          \big((n,l),\,x\big)
          & \mapsto
               \left\{
                    (r,\,c_1+c_2)
                      \;\middle|\;
                      {\tiny\begin{matrix*}[l]
                        ((n',l'),\,c_1)\in\funvec_{{\tt diff}}\big((n,l)\big),\\
                        (r,c_2)\in\funvec_{{\tt eval}}\big((n',l'),\,x\big)
                      \end{matrix*}}
               \right\}
\end{align*}
\end{tcolorbox}
\phantomcaption
\label{subfig:example-prog-operator-size-abs}
\end{subfigure}\PanelSep \begin{subfigure}[t]{.33\textwidth} \begin{tcolorbox}[abstractpanel]
\textbf{(d) CHC representation of $\Phi^\sharp$}
\begin{lstlisting}[basicstyle=\ttfamily\tiny,columns=fullflexible,mathescape=true]
diff(1, L, 1, 1, 0) :- L >= 0.
diff(N, L, Nd, Ld, C) :-
  N = 1 + N1 + N2, N1 >= 1, N2 >= 1,
  L = max(L1, L2), L1 >= 0, L2 >= 0,
  diff(N1, L1,  Nd1, Ld1, C1),
  diff(N2, L2,  Nd2, Ld2, C2),
  ( (Nd = 1 + Nd1 + Nd2,
     Ld = max(Ld1, Ld2))
  ;
    (Nd = 3 + Nd1 + Nd2 + N1 + N2,
    Ld = max(Ld1, Ld2, L1, L2) ),
  C = 2 + C1 + C2.

eval(1, L, X, R, 0) :- L >= 0, $\ell$(R) = L.
eval(1, 0, X, X, 0).
eval(N, L, X, R, C) :-
  N = 1 + N1 + N2, N1 >= 1, N2 >= 1,
  L = max(L1, L2), L1 >= 0, L2 >= 0,
  eval(N1, L1, R1, C1),
  eval(N2, L2, R2, C2),
  ( (R = R1 + R2,
     C = $\ell$(R1) + $\ell$(R2) + 2 + C1 + C2)
  ;
    (R = R1 * R2,
     C = $\ell$(R1) * $\ell$(R2) + 2 + C1 + C2) ).

diffeval(N, L, X, R, C) :-
  diff(N, L, Nd, Ld, C1),
  eval(Nd, Ld, X, R, C2),
  C = C1 + C2.
\end{lstlisting}
\end{tcolorbox}
\phantomcaption
\label{subfig:example-prog-chc-size-abs}
\end{subfigure}\vspace{-2em}
\caption{
Illustrative example of \emph{abstract compilation} by \emph{abstraction of operator semantics}.\\
The selected operator abstraction, \emph{size abstraction}, is the first step
of our recurrence extraction pipeline.
}
\label{fig:motivating-example}
\end{figure*}
 \afterpage{\FloatBarrier}

\vspace{-.25em}
\subsubsection*{First Operator Abstraction: Size Abstraction}
Fig.~\ref{subfig:example-prog-operator-size-abs} shows the operator
$\Phi^\sharp$ resulting from an \emph{abstract compilation} of $\Phi$
that performs a \emph{size abstraction},
i.e. which transforms a program operating on concrete data
into a program operating over purely numerical quantities
describing this data (called \emph{metrics} on the original data,
either user-provided or inferred).
In our example, each expression $e\in \semtypes{{\tt Expr}}$ is
replaced
by a pair representing its \emph{number of nodes} $nb(e)$
and the \emph{maximum bitlength of its \Cst nodes} $\ell_{\max}(e)$.
The operator $\Phi^\sharp$ may be interpreted as a
(non-deterministic) numerical program, retaining the recursive
structure of the original program; it can be analysed by
techniques whose scope is restricted to numerical programs.
Note that many choices of syntactic representation are valid for such abstract programs:
constrained Horn clauses (CHCs~\cite{anal-peval-horn-verif-2021-tplp-short}), recursive logic formulae in SMT format,
syntax trees of numerical programs,
recurrence equations on set-valued functions, etc.
This is an implementation decision, that can be governed
by convenience, and available tools to perform desired downstream analyses
(e.g. verification of invariants by SMT solvers, inference of them by
abstract interpretation with numerical domains, search of candidate
invariants through executions, etc.).
Fig.~\ref{subfig:example-prog-chc-size-abs} shows one possible
syntactic encoding of $\Phi^\sharp$ using CHCs (a convenient format
for expressing recursive relations, which can be \emph{executed} under some conditions).

\vspace{-.25em}
\subsubsection*{Second Operator Abstraction: Interval Effect}
While numerical programs are amenable to various invariant inference techniques,
it is typically difficult to obtain highly non-linear invariants, and this example
requires it.
To obtain these, our approach supports a further operator abstraction,
where the powerset monad $\pp(-)$ is replaced by an interval \emph{abstract monad}.
In our example, the operator of Fig.~\ref{subfig:example-prog-operator-size-abs}
is further abstracted into an operator in
$\End(\prod_{f\in\Funs}\aMem_{\Vars_{f,in}} \to \ii\costm(\aMem_{\Vars_{f,out}}))$,
which may be interpreted as a \emph{system of recurrence equations of interval-valued functions}
(where interval-valued functions can themselves be split into lower and upper bound functions).
For instance, the system extracted from our example contains the following
equation\footnote{\label{footnote:fasible-vectors}
  As an interesting feature, consider the guards of this equation, and the parity constraints
  that occur: optimal size abstraction ensures that vectors of sizes are ``feasible'', i.e. that
  we do not consider size vectors if they do not actually arise from possible data.
In practice, this feasibility relation is overapproximated by a first analysis pass over
  the metric definitions.
} on the upper bound of the \emph{cost} of {{\tt eval}},
which also depends on bounds on the values produced by {{\tt eval}}.
\vspace{-.3em}{\scriptsize
\begin{equation*}(f_{{\tt eval}})_c^{ub}(n,l,x) = \begin{cases}
    -\infty & \text{if $n < 1$ or $l < 0$ (or $n$ even)}\\
    0 & \text{if $n = 1$ and $l \geq 0$}\\
\raisebox{1.8em}{\(\displaystyle
      \sup_{\substack{1 \leq n_1,n_2 \leq n-1\\ 1+n_1+n_2=n \\0\leq l_1,l_2\leq l\\\max(l_1,l_2)=l}}
    \)}
\begin{pmatrix*}[l]
          2 + (f_{{\tt eval}})_c^{ub}(n_1,l_1,x)+(f_{{\tt eval}})_c^{ub}(n_2,l_2,x) \\
          \phantom{2} {}+{}
           {\displaystyle \max_{\substack{\Diamond\in\{+,\times\},\\b\in\{lb,ub\}}}\Big(
             \ell\big((f_{{\tt eval}})_v^{b}(n_1,l_1,x)\big) ~\Diamond~ \ell\big((f_{{\tt eval}})_v^{b}(n_2,l_2,x)\big)
           \Big)}
         \end{pmatrix*}
& \text{if $n\geq 3$ and $l \geq 0$ (and $n$ odd)}
  \end{cases}
\vspace{-.4em}
\end{equation*}
}
\noindent Of course, such systems of equations display challenging features, such as the parametric
maximisation $\sup_{...}$, and we cannot simply compute their \emph{exact} solution by using
traditional Computer Algebra Systems (CAS). However, these equations can be tackled by using \emph{postfixpoint-based}
techniques (e.g.~\cite{order-recsolv-sas24,absfun-STTT26}) which can be used to infer and
verify bounds on the solution to these equations.
In that case, we can prove the following bounds (valid everywhere, and not merely asymptotically),
{
$(f_{{\tt diff}})^{ub}(n,l,x) \leq ((n^2, \max(1,l)), 2n)$, $(f_{{\tt eval}})^{ub}(n,l,x) \leq  (2^{n\cdot \max(l,\ell(x))},\, n^2 \cdot \max(l, \ell(x))^2)$,
and
$(f_{{\tt diffeval}})^{ub}(n,l) = (2^{n^2\cdot \max(l,\ell(x))},\, 2n + n^4 \cdot \max(l, \ell(x))^2 ))$.
}

Nevertheless, if the obtained equations cannot be solved, they can be further abstracted or
overapproximated, thus obtaining less precise but simpler equations: as in
classical abstract interpretation, the operator-based
viewpoint on abstraction of programs themselves allows analysis designers to explore a whole
landscape of semantic abstractions with different precision/complexity trade-offs,
but the framework itself does not commit to coarse approximation a priori.

\vspace{-1mm}
\section{Preliminaries}\label{sec:preliminaries}
\vspace{-1pt}

For brevity, the fundamental notions of order theory, including Galois
Connections (GCs, for short), are assumed and
recalled in Appendix~\ref{subsec:appendix-order}.
Section~\ref{subsec:preliminaries-order-theory} introduces only
fundamental intuition, notation, and a useful construction.
Section~\ref{subsec:preliminaries-categories} presents fundamental
notions of category theory, used throughout the paper. For clarity the
simplest
notions
are summarised in tables and presented in more detail in
Appendix~\ref{subsec:app-category-theory}.
Section~\ref{subsec:ADT-intepretation} recalls interpretation of ADTs
as initial algebras, and introduces \emph{catamorphic functions}.
Section~\ref{sec:boring-notation} adds standard notation.

\vspace{-2mm}
\subsection{Order Theory and Abstract Interpretation}\label{subsec:preliminaries-order-theory}We denote by $\Pos(X,Y)$ the set of all monotone functions from poset $X$ to poset $Y$,
by $\pleq$ the pointwise order, and by $\End(X)$ the poset $\Pos(X,X)$ ordered pointwise.
Unless said otherwise, products of posets are ordered coordinate-wise (product order),
and (po)sets of functions are ordered pointwise.
We denote GCs by $C\galois{\alpha}{\gamma} A$, with the left adjoint below,
going from left to right.

Broadly speaking, \emph{abstract interpretation}~\cite{Cousot21-book} is a theory for
reasoning about programs by computing sound abstractions of their
semantics (defined in a ``concrete'' lattice) via primitives performing
computations within a (simpler) ``abstract'' lattice. GCs offer a
principled approach for designing these primitives, as they provide a
notion of \emph{best} abstraction, rather than a collection of
\emph{sound} ones.
We recall the following construction of GC in function space, lifted from GCs of values.

\vspace*{-2.5mm}

\begin{example}
   \label{ex:GC-lift-2GCs}
   Let $C_1 \galois{\alpha_1}{\gamma_1} A_1$ and $C_2 \galois{\alpha_2}{\gamma_2} A_2$ be GCs.
   Then, $\Pos(C_1,C_2) \galois{\alpha_2\circ (-) \circ \gamma_1}{\gamma_2\circ (-) \circ \alpha_1} \Pos(A_1,A_2)$ is a GC.
When $C_1=C_2$ and $A_1=A_2$, we call this construction ``\End-lifting''.
\end{example}

\vspace{-3.5mm}
\subsection{Collections of Abstract Functions and Weak Notions of Categories}\label{subsec:preliminaries-categories}
\vspace*{-3.5mm}
\begin{longtable}{p{0.21\textwidth}p{0.74\textwidth}}
\toprule
  \textbf{Notion} & \textbf{Meaning}\\
  \toprule
  \endfirsthead
\toprule
  \textbf{Notion} & \textbf{Meaning} \\

  \toprule
  \endhead
\midrule
  \multicolumn{2}{r}{\textit{Continued on next page}} \\
  \endfoot
\bottomrule
  \endlastfoot
Magmoid
    & A structure \catex with objects $\Ob(\catex)$, homsets $\catex(X,Y)$ (sets of arrows, a.k.a.~morphisms), and composition $\circ_{X,Y,Z}\colon\catex(Y,Z)\times\catex(X,Y)\to\catex(X,Z)$. \\
  Unital magmoid
    & A magmoid with identity morphisms $\id_X \in \catex(X,X)$ for every object~$X$, satisfying $f\circ\id_X=f=\id_Y\circ f$. \\
  \scalebox{0.96}{Associative magmoid}
    & A magmoid satisfying $(h\circ g)\circ f = h\circ(g\circ f)$ for all compatible $f$, $g$, $h$. \\
  Category
    & A unital and associative magmoid. \\
  (Pre)functor
    & A prefunctor $F \colon \catex \to \catex'$ maps objects and arrows from a magmoid to another; a functor additionally preserves composition and identities. We add the prefix \emph{endo-} (e.g. \emph{endofunctor}) whenever $\catex=\catex'$. \\
  Prenatural transfor.
    & Between two prefunctors $F,G \colon \catex \to \catex'$, it is denoted as $\tau \colon F \Rightarrow G$. It is a collection $\tau_X \in \catex'(FX,GX)$ indexed by objects $X \in \Ob(\catex)$. \\
  \scalebox{0.9}{Natural transformation}
    & A prenatural transformation with $\tau_Y\circ Ff=Gf\circ\tau_X$ for every $f\in\catex(X,Y)$. \\
  \bottomrule
\end{longtable}
\vspace*{-1em}
{\tiny\phantom{.}}

\noindent Examples of categories relevant to this paper include:
  the category \Set of sets and (arbitrary) functions;
  the category \Pos of posets and monotone functions;
  the category \CPO of cpos and monotone functions;
  the category \CLat of complete lattices and \emph{monotone} functions;
  the category $\CLat_\sqcup$ of complete lattices and \emph{additive} functions;
  for a given (typed) programming language, its \emph{syntactic category} $\syntcat$, whose objects are the types in $\syntcat$, and whose arrows are (definable) functions between them.
We also later recall the definition of the \emph{Kleisli category} $\Kleisli_T$ for a monad.

\paragraph*{{``Category ${{}\galois{\alpha}{\gamma}{}}$ Magmoid''}}
Both categories and magmoids arise naturally in higher-order abstract interpretation, through the study of compositionality.
In classical denotational semantics, the structure of composition yields a \emph{category}, which enjoys the strong structural properties of associativity and unitality.
However, when abstracting this semantics, its strong structure may disappear entirely, or relations may hold only up to inequality; we obtain a \emph{magmoid}.

To be more precise, consider a ``higher-order concrete domain'' $(C,\leq)$, whose values are thought of as functions, equipped with a composition operation $\circ :C\times C \to C$ monotone in both arguments, associative, and unital.
Then, for a Galois connection $(C,\leq)\galois{\alpha}{\gamma}(A,\sqsubseteq)$, the \emph{best} ``abstract composition'' operation on $A$, given by $g \circ^\sharp f \coloneqq \alpha(\gamma(g) \circ \gamma(f))$, is in general \emph{non}-associative and \emph{non}-unital.
Similarly, it is well-known that the best abstraction of a composition is not the composition of best abstractions: we have only $\alpha(g \circ f)\sqsubseteq\alpha(g) \circ^\sharp \alpha(f)$.

\paragraph*{Monadic composition}

Given a $T:\catex\to\CLat$ representing concrete or abstract computations, one can perform higher-order abstract interpretation in function spaces of the form ${(T(X)\to T(Y), \pleq)}$.
We will work instead with function spaces of the form $(X\to T(Y), \pleq)$, as they offer design and precision advantages, at the cost of requiring a more sophisticated, \emph{monadic} form of composition.

\begin{definition}[Premonad]\label{def:premonad}
  A~\emph{premonad} on a category~$\catex$ is a triple
  $(T(-),\eta_T,\circ_T)$, where
  \begin{enumerate}
    \item $T(-) \colon \Ob(\catex)\to\Ob(\catex)$ is a map (a mapping-on-objects)
    \item $(\eta_{T,X} \colon \catex(X,TX))$
    is a collection of maps indexed by $X\in\Ob(\catex)$ (the
    \emph{unit} of the premonad),
    \item $(\circ_{T,(X,Y,Z)} \colon \catex(Y,TZ)\times\catex(X,TY)\to\catex(X,TZ))$
    is a collection indexed by~${X,\!Y\!,Z\in\Ob(\catex)}$,
  \end{enumerate}
such that
(4) $(g \circ_T f) \circ u = g \circ_T (f \circ u)$ for every $u\in\catex(W,X)$, $f\in\catex(X,TY)$, $g\in\catex(Y,TZ)$,
and
(5) $\eta_{T,Y}\circ_T f = f = f\circ_T \eta_{T,X}$ for all $f\in\catex(X,TY)$
  (i.e.~$\circ_T$ is unital, with units given by $\eta_T$).

\end{definition}

\begin{definition}[Kleisli magmoid]\label{def:kleisli-magmoid}
  For a premonad $(T(-),\eta_T,\circ_T)$ on a category $\catex$, we denote by $\widetilde{\Kleisli_T}$
  the \emph{Kleisli (unital) magmoid} associated with $T$, where $\Ob(\widetilde{\Kleisli_T})=\Ob(\catex)$, $\widetilde{\Kleisli_T}(X,Y)=\catex(X,TY)$, and the composition is given by $\circ_T$.
  We thus refer to $\circ_T$ as the \emph{Kleisli composition} of the premonad.
\end{definition}

\begin{definition}[Monad]\label{def:monad}
  A \emph{monad} $(T(-), \eta_T, \circ_T)$ is a premonad with associative Kleisli composition~$\circ_T$.
  Equivalently, it is a premonad such that the Kleisli \magmoid is a category.
\end{definition}

When $T$ is a monad, we call $\widetilde{\Kleisli_T}$ the \emph{Kleisli category} associated with $T$,
and write $\Kleisli_T$ instead (dropping the tilde).
This ``composition first'' definition of monad is equivalent to the classical definition via natural transformations, see Appendix~\ref{subsec:proof-monad-by-kleisli-equivalent-to-monad-by-mult} for a proof.
Many other equivalent definitions exist.
In our context, it is more convenient to start from
the Kleisli composition $\circ_T$ as the primary operation, as it is a central
primitive when designing higher-order abstract domains.

\subsubsection*{Order enrichment}

In the context of program analysis, it is useful to be able to \emph{compare
morphisms}, associating to $f \leq g$ meanings such as ``$f$ is more
precise than $g$'' ($f$ is lower), ``$g$ is more abstract than $f$'', or
roughly ``$g$ induces more behaviours than $f$''.
In other words, it is useful to add a poset structure on homsets; a process
known as \emph{order-enrichment}.

\begin{definition}[Order-enrichment]\label{def:order-enrichment}
  An \emph{ordered-enriched}...\\
  \begin{tabular}{rcp{0.78\linewidth}}
    magmoid & is
      & a magmoid~$\catex$ together with a poset structure $(\catex(X,Y),\leq_{\catex(X,Y)})$ on each homset, such that each composition operation $\circ$ is monotone in both inputs;\\
    category & is & a category that is order-enriched as a magmoid;\\
    (pre)functor & is & a (pre)functor $F \colon \catex \to \catex'$ between two ordered-enriched magmoids satisfying $f \leq g \implies F(f) \leq F(g)$, i.e. $F(-)$ is monotone on arrows;\\
    (pre)monad & is & a (pre)monad $(T(-),\eta_T,\circ_T)$ with an ordered-enriched Kleisli magmoid~$\widetilde{\Kleisli_T}$. In particular, $\circ_T$ is monotone in both arguments.
  \end{tabular}
\end{definition}

Our notion of order-enriched (pre)monad is a generalisation of the homonymous notion from~\cite{Hasuo15,GoncharovSchroder13}.
As in these works, we do not require the base $T(-)$ to be order-enriched.
In fact, Definition~\ref{def:premonad} just requires $T(-)$ to be a mapping on objects.

\begin{example}[Powerset]\label{example:powerset}
  Consider the Powerset monad $(\pp(-), \eta_\pp, \circ_\pp)$ on~\Set,
  with unit $\eta_\pp$ given by singletons $\eta_{\pp,X}(x) \coloneqq \{x\}$, and
  Kleisli composition $g \circ_\pp f \coloneqq x \mapsto \bigcup_{y\in f(x)} g(y)$
  for every ${f \in \Set(X,\pp(Y))}$ and $g \in \Set(Y, \pp(Z))$.
Its Kleisli category $\Kleisli_\pp$ can be enriched with the pointwise order $\dot\subseteq$
  defined from the complete lattice structure $(\pp(-), \subseteq)$:
  given $f,g \in \Set(X,\pp(Y))$, $f \mathbin{\dot\subseteq} g$
  whenever $f(x) \subseteq g(x)$ for all $x \in X$; which we can see as ``$f$ being more precise than $g$''.
\end{example}

  As observed by~\cite{Jacobs03}, when $T:\Ob(\Set)\to\Ob(\Set)$, a quick way to provide an order
  structure on all homsets $\Set(X,TY)$ consists in selecting a poset
  structure $\leq_{TY}$ on all images $TY$, and to extend this order
  pointwise as $\leq_{\Set(X,TY)}:=\pleq_{TY}$.
In this situation, we will say that $(T(-),\eta_T,\circ_T)$ is a
  premonad that is
  \emph{pointwise order-enriched via a poset structure
  on objects}.
To express the fact that all chosen orders $(TY,\leq_{TY})$ have additional
  properties we will simply say so directly, e.g.
  ``a premonad pointwise order-enriched via a \emph{complete lattice}
  structure on $T$-objects''.
For simplicity, in this paper, we will only consider such kind of
  (pointwise) order-enrichment on Kleisli magmoids.

\subsection{Algebraic Data Types, their Initial Algebra Semantics, and Catamorphisms}
\label{subsec:ADT-intepretation}

Throughout the paper, we write $\Types$ for a set of \emph{type names}, $\BasicTypes \subseteq \Types$ for a set of \emph{basic type names} (e.g., {\textbf{int}})
and $\Conss$ for a set of \emph{constructor names}.
Every basic type $\tau \in \BasicTypes$ comes with a preselected denotation
$\semtypes{\tau}$ (e.g., $\semtypes{\textbf{int}} = \zz$).
For non-basic types from $\Types \setminus \BasicTypes$,
the denotation $\semtypes{-}$ will be given via algebraic data types (ADTs).
We now recall the standard initial-algebra approach to semantics of ADTs,
in the context of the category~\Set for simplicity.

\subsubsection*{Polynomial endofunctors corresponding to ADTs}
A collection of (possibly mutually recursive) ADTs is a tuple $(\tau_1,\dots,\tau_n)$
of type names from $\Types \setminus \BasicTypes$, together with a
function {\tt cnstr} assigning to each $\tau_i$ a finite set of \emph{constructors}
of the form $(C,\tau_{1}',\dots,\tau_{k}')$, with $C \in \Conss$ and each $\tau_j' \in \{\tau_1,\dots,\tau_n\} \cup \BasicTypes$. A constructor name $C$ may appear in at most one of the sets ${\tt cnstr}(\tau_i)$, and there in at most one constructor.
This syntactic definition of ADTs induces an endofunctor $F:\Set^n\to\Set^n$,
whose components have the form
$F_i(X_1,\dots,X_n)=\sum_{c\in{\tt cnstr(t_i)}}\prod_{j=1}^{k_c}A_{c,j}$,
where, for $c = (C,\tau_1',\dots,\tau_{k_c}')$:
if $\tau_{j}' = \tau_\ell$ for some $\ell \in [1..n]$,
then $A_{c,j} = X_\ell$;
otherwise $\tau_{j}'$ is a basic type name and $A_{c,j} = \semtypes{\tau_j'}$.
Here, $\Sigma$ and $\Pi$ respectively corresponds to disjoint unions
and Cartesian products (or more generally coproducts and products in a
category other than \Set). Intuitively, sums models constructor
choice, products model constructor arguments, and occurences of the
$X_\ell$ model recursion.

\begin{example}
  The  type of list of integers given by ${\tt {\bf type}~ListInt~=~\{~\Nil~|~\Cons({\bf int}, ListInt)~\}}$
  induces the functor $F \colon X\to 1 + \zz\times X$.
  The two components of the coproduct correspond to the
  $\Nil$ and $\Cons$ constructors, respectively.
\end{example}

\subsubsection*{Initial algebra semantics of ADTs}
Endofunctors $F:\Set^n\to\Set^n$ induced by syntactic definitions of ADTs
are an instance of the notion of (multivariate) \emph{polynomial
functors} (see, e.g.,~\cite{PierreLMCS2014}).
They allow to define the denotation $\semtypes{-}$ of ADTs as initial algebras.

\begin{definition}[$F$-algebras]
  Given a category $\catex$ and an endofunctor $F:\catex\to\catex$,
  an $F$-algebra is a pair $(A,a)$ where $A$ is an object of $\catex$
  and $a$ is a morphism $F(A) \to A$.
\end{definition}

\begin{example}\label{ex:sum-algebra}
  For \ListInt and its corresponding functor $F \colon X\to 1 + \zz\times X$,
  an example of $F$-algebra is given by $(\zz, {{\tt sum}})$
  where
  ${{\tt sum}} \coloneqq 1 + \zz \times \zz \to \zz$ is defined as
  ${{\tt sum}}({{\tt inl}}()) \coloneqq 0$ and ${{\tt sum}}({{\tt inr}}(x,s)) \coloneqq x+s$.
  (where we write ${{\tt inl}}$ and ${{\tt inr}}$ for the coproduct
  injections).
\end{example}

\begin{wrapfigure}[4]{r}{0.15\textwidth}
  \vspace{-15pt} \begin{tikzcd}
    FA \arrow[r, "Ff"]\arrow[d, "a"] & Fb \arrow[d, "b"]\\
    A \arrow[r, "f"] & B
  \end{tikzcd}
\end{wrapfigure}
The collection of $F$-algebras form a category, where the morphisms
between $(A, a)$ and $(B, b)$ are maps $f \in \catex(A,B)$ such that
the diagram on the right commutes.
An initial algebra is an initial object in this category.
Explicitly, an initial algebra is an $F$-algebra
$(\mu F, \textit{in})$ such that for all $F$-algebras
$(A,a)$, there is a unique map
$h\in\catex(\mu F, A)$ such that
$h \circ \textit{in} = a \circ F(h)$.
The initial algebra of an
endofunctor is unique (up to unique isomorphism) whenever it exists.

\begin{proposition}[Lambek's Theorem~\cite{Lambek1968}]
If an endofunctor $F$ admits an initial algebra $\textit{in} \colon F(\mu F) \to \mu F$, then it is
  an isomorphism $\textit{in} \colon F(\mu F) \cong \mu F$.
\end{proposition}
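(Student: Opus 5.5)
The plan is to exhibit an explicit inverse to $\textit{in}$, obtained from initiality by equipping $F(\mu F)$ with a suitable algebra structure. Consider the $F$-algebra $(F(\mu F), F(\textit{in}))$: its carrier is $F(\mu F)$ and its structure map is $F(\textit{in}) \colon F(F(\mu F)) \to F(\mu F)$, which is well-typed because $F$ is an endofunctor. By initiality of $(\mu F, \textit{in})$ there is a unique morphism $h \colon \mu F \to F(\mu F)$ satisfying
\[
  h \circ \textit{in} = F(\textit{in}) \circ F(h).
\]
This $h$ is the candidate inverse.

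Next I would show that $\textit{in} \circ h = \id_{\mu F}$. The map $\textit{in}$ itself is an $F$-algebra morphism from $(F(\mu F), F(\textit{in}))$ to $(\mu F, \textit{in})$, since the required square reads $\textit{in} \circ F(\textit{in}) = \textit{in} \circ F(\textit{in})$ and holds trivially. Composing algebra morphisms gives an algebra morphism, so $\textit{in} \circ h$ is an algebra endomorphism of $(\mu F, \textit{in})$; explicitly, functoriality of $F$ (preservation of composition) yields
\[
  (\textit{in}\circ h)\circ \textit{in} = \textit{in}\circ F(\textit{in})\circ F(h) = \textit{in} \circ F(\textit{in}\circ h).
\]
The identity $\id_{\mu F}$ is also an algebra endomorphism, using $F(\id)=\id$. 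Uniqueness in the universal property of the initial algebra then forces $\textit{in} \circ h = \id_{\mu F}$.

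For the other composite, I would use the defining equation of $h$ together with the first identity:
\[
  h\circ \textit{in} = F(\textit{in})\circ F(h) = F(\textit{in}\circ h) = F(\id_{\mu F}) = \id_{F(\mu F)}.
\]
Hence $h$ is a two-sided inverse and $\textit{in}$ is an isomorphism. Everything here is routine; the only step requiring insight is choosing the algebra $(F(\mu F), F(\textit{in}))$ and recognising that the uniqueness half of initiality, applied to endomorphisms of $\mu F$, is what yields $\textit{in}\circ h=\id$. The argument uses only that $F$ preserves identities and composition, so it holds in any category $\catex$, not just \Set.
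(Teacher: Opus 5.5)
Your proof is correct: it is the standard argument. You equip $F(\mu F)$ with the algebra structure $F(\textit{in})$, use uniqueness on endomorphisms of $(\mu F,\textit{in})$ to obtain $\textit{in}\circ h=\id_{\mu F}$, and derive $h\circ\textit{in}=\id_{F(\mu F)}$ by functoriality. The paper itself gives no proof and simply cites Lambek's theorem, so your argument is exactly the classical one and, as you note, it holds in any category.
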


Initial algebras provide semantics to ADTs: given $(\tau_1,\dots,\tau_n)$
and the corresponding endofunctor $F:\Set^n\to\Set^n$, we take
the initial algebra $(\mu F, in)$ and
define $\semtypes{\tau_i} \coloneqq (\mu F)_i$ for each $i\in[1..n]$,
where $(\mu F)_i$ is the $i$-th component of $\mu F$.
This is justified by the next proposition.

\begin{proposition}[\cite{LehmannSmyth1981algebraicsem}]
  Endofunctors $F$ induced by syntactic definitions of ADTs (and more generally polynomial
  functors $\Set^n\to\Set^n$) admit an initial algebra.
\end{proposition}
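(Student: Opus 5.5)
The plan is to build the initial algebra of $F:\Set^n\to\Set^n$ as the colimit of the initial $\omega$-chain, in the style of Adámek's construction. The key requirement is that $F$ preserves colimits of $\omega$-chains, i.e.\ that $F$ is $\omega$-cocontinuous.

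\textbf{Step 1: the chain.} Let $0=(\varnothing,\dots,\varnothing)$ be the initial object of $\Set^n$, and let $!\colon 0\to F0$ be the unique map. Iterating $F$ gives the chain
\[
0 \xrightarrow{!} F0 \xrightarrow{F!} F^20 \xrightarrow{F^2!} \cdots .
\]
Let $L$ be its colimit in $\Set^n$, with injections $c_k\colon F^k0\to L$. Colimits in $\Set^n$ are computed componentwise, so each $L_i$ is a quotient of $\coprod_k (F^k0)_i$. Here all connecting maps are injective (an easy induction, since $F$ preserves injections), so $L_i$ is simply an increasing union.

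\textbf{Step 2: $F$ preserves this colimit.} This is the main obstacle. I would show that polynomial functors preserve filtered (in particular $\omega$-chain) colimits in $\Set^n$, by structural induction on the shape $F_i(X)=\sum_c\prod_{j=1}^{k_c}A_{c,j}$.
\begin{itemize}
\item Constant functors $\semtypes{\tau}$ and projections $X\mapsto X_\ell$ preserve all connected colimits, and an $\omega$-chain is connected.
\item Coproducts commute with all colimits.
\item Finite products in \Set commute with filtered colimits. This is the one place where finiteness of each constructor's arity $k_c$ is used, and I would verify it by hand: every element of $\prod_j \mathrm{colim}_k X^{(k)}_j$ has each of its finitely many components already present at some stage, so the whole tuple is present at the maximum of those stages. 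Injectivity and surjectivity of the comparison map follow the same way.
\end{itemize}
The sum over constructors is finite as well, but infinite coproducts would also be harmless.

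\textbf{Step 3: structure map and initiality.} By Step 2, the canonical map $\mathrm{colim}_k F^{k+1}0\to FL$ is an isomorphism. Its domain is the colimit of the shifted chain, which is again $L$, so we obtain $\textit{in}\colon FL\to L$ characterised by $\textit{in}\circ F(c_k)=c_{k+1}$.

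For initiality, take any algebra $(A,a)$ and define a cocone $h_k\colon F^k0\to A$ by $h_0=\,!$ and $h_{k+1}=a\circ F(h_k)$. Compatibility with the connecting maps holds by induction. The induced map $h\colon L\to A$ then satisfies $h\circ\textit{in}=a\circ Fh$, which one checks on each $F(c_k)$ using Step 2. Uniqueness holds because any algebra morphism $g$ must satisfy $g\circ c_k=h_k$, again by induction on $k$.

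The result is the initial algebra $(\mu F,\textit{in})=(L,\textit{in})$. By Lambek's Theorem, $\textit{in}$ is moreover an isomorphism, which matches the intended reading of $\semtypes{\tau_i}$ as finite constructor terms.
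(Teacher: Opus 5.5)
The paper gives no proof of its own and only cites Lehmann--Smyth, whose argument is exactly yours: build the initial algebra of an $\omega$-cocontinuous functor as the colimit of the initial chain $0\to F0\to F^2 0\to\cdots$, after checking that constants, projections, sums and finite products preserve colimits of $\omega$-chains. Your proposal is correct and follows that route. As you point out, Step~2 depends on finite arities; this covers the paper's ADT functors and finitary polynomial functors, but for polynomial functors in the broader sense with infinite exponents (e.g.\ $X\mapsto 1+X^{\nn}$) the chain would have to be continued transfinitely.
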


\begin{example}\label{ex:initial-algebra-listint}
  Considering again {\tt ListInt} and $F \colon X\to 1 + \zz\times X$,
  the initial $F$-algebra $(\mu F, in)$
  exists, and $\semtypes{\ListInt} \coloneqq \mu F$ corresponds to the expected
  interpretation of lists of integers, where we have
  by $in({{\tt inl}}()) = \Nil$ and $in({{\tt inr}}(x,l)) = \Cons(x,l)$.
\end{example}

\subsubsection*{Catamorphisms}
In addition to giving a way of defining the semantics to ADTs, $F$-algebras
can be interpreted as (generators of) inductive definitions.
We will say that a function is \emph{catamorphic} when it arises from
such an inductive definition.

\begin{definition}[Catamorphism~\cite{Bananas91}]
  Let $(\mu F, \textit{in})$ be the initial $F$-algebra.
For any $F$-algebra $(A,a)$, the unique map $\mu F\to A$ induced
  by the initiality of $\textit{in}:F(\mu F)\to\mu F$ is called the
  \emph{catamorphism} induced by $a$, and is denoted by
  $cata_a:\mu F\to A$.
\end{definition}

\begin{definition}[Catamorphic function]
  \label{defi:catamorphic-function}
  We say that a function $f:\mu F \to A$ is \emph{catamorphic}
  whenever there is an $F$-algebra $(A,a)$ such that
  $cata_a = f$.
In other words, a function $f:\mu F\to A$ is catamorphic whenever
  there exists a function $a:FA \to a$ with
  $f \circ in = a \circ F(f)$.
\end{definition}

\begin{example}
  The function ${{\tt sum}}:F\zz\to \zz$ from Example~\ref{ex:sum-algebra}
  induces a catamorphism $cata_{\tt sum}:\ListInt\to\zz$ that computes the sum of the elements of a list.

\end{example}

\subsection{Notation}\label{sec:boring-notation}

$\bb=\{\bot,\top\}$ denotes the Booleans,
$\zz$ the integers, and $\nn$ the non-negative integers, including $0$.
The function $\ite \colon \bb\times X\times X \to X$ is defined by
$\ite(\top,x,y)=x$ and $\ite(\bot,x,y)=y$.
Given a (pre)monad $(T,\eta_T,\circ_T)$ we define the shorthand
$\big(\letin{T}{x}{e_1}{e_2}\big) := \big((x\mapsto e_2) \circ_T e_1\big)$,
which we use in a right-associative way:
$\big(\letin{T}{x}{e_1}{\letin{T}{x}{e_2}{e_3}}\big) := \big(\letin{T}{x}{e_1}{(\letin{T}{x}{e_2}{e_3})}\big)$.
The subscript $T$ is omitted in the case of the identity monad. We denote the restriction of a function $f:X\to Y$ to a subdomain $X'\subseteq X$ by $\pi_{X'}(f):X'\to Y$.
We denote by $\pi_i(x_1,\ldots,x_n) = x_i$ the $i$-th projection of the tuple $(x_1,\ldots,x_n)$.
The disjoint union of sets is denoted by $\uplus$, or simply by $+$ when the context is clear.
For a map $m$, element $x$ and value $y$, $m[x \leftarrow y]$ denotes the map obtained by updating $m$ at $x$ to $y$, extending the domain and codomain if necessary.

\section{The \IR language}
\label{sec:ir-lang}

\begin{figure}[t]
  \begin{minipage}[][7.5cm]{0.49\linewidth}
    {\tt\footnotesize
    \begin{tabular}{@{}ll@{}}
Prog    $\Coloneqq$ (typedef)*(fundef)* & (programs)\\[5pt]
typedef $\Coloneqq$                              & (type \\
        \quad{\bf type} $\tau$ = \{                    & \phantom{(}definition)\\
        \qquad\phantom{($\mid$ }consdef ($\mid$ consdef)* &\\
        \quad\}
      &\\[5pt]
consdef $\Coloneqq$ $C$( (ty)* ) & (constructor)
      \\[5pt]
ty $\Coloneqq$ {\bf int} $\mid$ $\tau$ & (type lexemes)
      \\[5pt]
fundef $\Coloneqq$ & (function \\
        \quad{\bf def} $f$( ($X_{\text{arg}}$: ty)* ) -> ty \{ & \phantom{(}definition)\\
        \qquad(stmt)*;                      &\\
        \qquad{\bf return} ${X_{\text{res}}}$ &\\
        \quad\}
\end{tabular}
    }

    \vfill

    \hrule

    \vspace{3pt}

    \begin{center}
      \begin{minipage}{0.95\linewidth}
        \footnotesize
        $f$ is a user-defined function name from $\Funs$\\
        $\tau$ is a user-defined type name from $\Types \setminus \{\textbf{int}\}$\\
        $C$ is a user-defined constructor name from $\Conss$\\
        $X$, $X_{\text{arg}}$ and $X_{\text{res}}$ are program variables from $\Vars$\\
        $\Diamond_{\mathrm{a}} \in \{{}+{},{}-{},{}\cdot{}\}$,
        \
        $\Bowtie \in \{<,\leq,=,\geq,>\}$,
        \
        $\Diamond_{\mathrm{b}} \in \{\texttt{and},\texttt{or}\}$
      \end{minipage}
    \end{center}
  \end{minipage}\vrule{}\begin{minipage}[][7.5cm]{0.49\linewidth}{\tt\footnotesize
    \begin{tabular}{ll@{}}
stmt $\Coloneqq$
                  {\bf let} $X$: ty = expr          & (var.\ decl.)\\
      \quad$\mid$ $X$ = expr                        & (assignment)\\
      \quad$\mid$ {\bf if} cond {\bf then} \{ stmt \}
                                                    &(conditional)\\
      \phantom{\quad$\mid$ {\bf if} cond} {\bf else}\hspace{2.5pt} \{ stmt \}
                                                    &\\
      \quad$\mid$ {\bf match} expr \{               & (match)\\
        \qquad\quad{}mtcc ($\mid$ mtcc)*            &\\
        \qquad\}                                    &\\
      \quad$\mid$ stmt; stmt                        & (sequence)\\
      \quad$\mid$ {\bf skip}                        & (no-op)\\
      \color{azure}\quad$\mid$ {\bf assume}(cond) & \color{azure}(assume)
      \\[5pt]
mtcc $\Coloneqq$
        $C$( ($X$)* ) => stmt                  & (match case)\\[5pt]
expr $\Coloneqq$
                    $X$                                 & (variable)\\
      \quad$\mid$   $C$( (expr)* )                      & (construction)\\
      \quad$\mid$   $f$( (expr)* )                      & (function call)\\
      \quad$\mid$   expr $\Diamond_{\mathrm{a}}$ expr   & (arithm. operation)\\
      \quad$\mid$   $c$                                 & (constant $c \in \zz$)\\
\color{azure}\quad$\mid$   ($\top$ : ty)        & \color{azure}(top)
      \\[5pt]
cond $\Coloneqq$
                    expr $\Bowtie$ expr                 & (comparison)\\
      \quad$\mid$   cond $\Diamond_{\mathrm{b}}$ cond   & (Boolean operation)\\
      \quad$\mid$   {\bf true} $\mid$ {\bf false}       & (Boolean constants)
    \end{tabular}
    }
    \vfill
  \end{minipage}
  \caption{Syntax of a our \IR language. Elements in {\color{azure}azure} allow for non-deterministic functions.}
  \label{fig:syntax-ir-num-det}
\vspace{-4mm}
\end{figure}

Fig.~\ref{fig:syntax-ir-num-det}
introduces the syntax of \IR,
a simple imperative-style language with user-defined algebraic data types
and recursive functions as the only form of iteration (for now).
Throughout the paper, this language serves as a running example for our notions of operator semantics and abstract compilation.
Its recursive functions provide a natural setting in which to introduce operator semantics, while its algebraic data types allow us to present the main application of abstract compilation studied in this paper: the compilation of programs manipulating symbolic values into programs manipulating purely numerical quantities describing these values.

We interpret \IR as a typed, lexically scoped, store-manipulating language.
The type system is standard; we omit the inference rules and simply assume that every
function, expression, and variable in a program has an associated type,
and that every program is well-typed.
Our only primitive type is \textbf{int}: there are boolean expressions, but no boolean variables.
Each program is also equipped with a finite set \Locs of program points (or locations), placed immediately before and
after each statement: these are called the input and output locations of that statement, respectively.
Expressions and conditions inherit the input location of their enclosing
statement. Program points $\ell\in\Locs$ are often omitted for readability (this is the case in~Fig.~\ref{fig:syntax-ir-num-det}, see the semantics of Fig.~\ref{fig:opsem-ir-num-monadic} for some occurrences).
Finally, we assume pattern matching (deconstruction) to be exhaustive: every \textbf{match} contains exactly one case for each constructor of the matched type.

Let $\Types$ denote the finite collection of types. Given a finite set of variables $\Vars$ and a
type $t\in \Types$, we denote by $\Vars_t$ the set of variables of type $t$, so that we have the
partition $\Vars = \sum_{t\in\Types} \Vars_t$.
The corresponding set of memories (or stores) is denoted
$\Mem_\Vars := \prod_{t\in\Types}\big(\Vars_t \to \semtypes{t}\big)$, where
$\semtypes{t}$ is the set of values of type $t$ (see Section~\ref{subsec:ADT-intepretation}).
Since the language is lexically scoped, each $\ell\in\Locs$
determines a set of variables in scope, denoted $\Vars_\ell$. Likewise, each function $f$ has
associated sets of input and output variables, denoted $\Vars_{f,in}$ and $\Vars_{f,out}(=\{X_{res}\})$ respectively.

In classical abstract interpretation fashion, \IR will be extended with non-deterministic primitives
(\texttt{top}, \texttt{assume}) for modelling uncertainty (e.g. unknown inputs, underspecified
functions, user-provided assumptions, etc.).

\vspace*{-2mm}
\section{Operator Semantics: Programs $\equiv$ Operators}
\label{sec:operator-semantics}

This paper brings forward the idea of seeing programs as monotone operators on lattices.
To start, consider the simple denotational semantics approach
that interprets each function
${f \in \Funs}$
as a partial function
$\sem{f}:\Mem_{\Vars_{f,in}} \rightharpoonup \Mem_{\Vars_{f,out}}$
or, equivalently, as a total function into
$\big(\Mem_{\Vars_{f,out}}\big)_\bot := \Mem_{\Vars_{f,out}} \cup \{\bot\}$,
where $\bot$ is a value representing ``{\tt undefined}''.
Because functions may be recursive, these denotations cannot be defined independently.
Instead, one first constructs an operator
\vspace*{-1mm}
\begin{equation}\label{eq:operator-semantics}
  \Phi_{{\tt Prog}} :
    \Big(\prod_{f\in \Funs} \Mem_{\Vars_{f, in}} \to \big(\Mem_{\Vars_{f, out}}\big)_\bot\Big)
    \to
    \Big(\prod_{f\in \Funs} \Mem_{\Vars_{f, in}} \to \big(\Mem_{\Vars_{f, out}}\big)_\bot\Big),
\vspace*{-2mm}\end{equation}that performs a single recursion step (a single semantic unfolding of all recursive definitions).
Finally, classical denotational semantics proceeds by taking a suitable fixed point of
$\Phi_{{\tt Prog}}$. For example, under the usual information ordering on partial functions,
$\Phi_{{\tt Prog}}$ is monotone, and its least fixed point gives the meaning of the program.

Let us stop one step earlier: what we call \emph{operator semantics} simply consists in
\emph{reifying} this operator (and generalisations of it) into a semantic object, interesting in its
own right, instead of forgetting about it after taking the fixpoint.
In other words, the operator semantics $\semop{{\tt Prog}} := \Phi_{{\tt Prog}}$ is a
\emph{fixpoint-delaying semantics}: one recovers classical denotational semantics by taking~$\lfp~\semop{{\tt Prog}}$.

\vspace*{-1mm}
\subsection{Operator Semantics for the \IR language}

{
\begin{figure}\footnotesize \begin{align*}
    \semop{{\tt Prog}}\, &
\in \End\Big(\prod_{f\in \Funs} \Mem_{\Vars_{f, in}} \to T\big(\Mem_{\Vars_{f, out}}\big)\Big)
       \\[5pt]
& = \funvec
\mapsto \left(\argvec \mapsto
            \Big(\big(\eta_T \circ \pi_{\{X_{\text{res}}\}}\big)\circ_T \semstmt{f_{\text{body}}}(\funvec)\Big)(\argvec)
         \right)_{f \in \Funs}
\end{align*}
  Above, we assume that {\tt Prog} contains a definition {\tt def $f$(...) -> ty \{ $f_{\text{body}}$; \textbf{return} $X_{\text{res}}$  \} } for every for ${\tt f} \in \Funs$.

  \vspace{5pt}
  \hrule height 0.08em
  \smallskip
  \hrule height 0.08em

  \begin{align*}
  \semstmt{^{\ell_{in}}{\tt stmt}^{\ell_{out}}}\,&
       \in \Big(\prod_{f\in \Funs} \Mem_{\Vars_{f, in}} \to T\big(\Mem_{\Vars_{f, out}}\big)\Big)
       \to \Big(\Mem_{\Vars_{\ell_{in}}} \to T\big(\Mem_{\Vars_{\ell_{out}}}\big)\Big)\\
\semstmt{{\bf skip}}(\funvec)
    & = m \mapsto
      \eta_T(m) \\
  \semstmt{{\tt s1 ; s2}}(\funvec)
    & = m \mapsto
      \big(\semstmt{{\tt s2}}(\funvec)\circ_T \semstmt{{\tt s1}}(\funvec)\big)(m) \\
  \semstmt{ X = {\tt expr}}(\funvec)
    & = m \mapsto
        \letin{T}
              {a}{\semexpr{{\tt expr}}(\funvec)(m)}
              {\eta_T\big(m[X \leftarrow a]\big)} \\
\semstmt{{\tt {\bf if}~cond~{\bf then}~\{ s1 \}~{\bf else}~\{ s2 \}}}(\funvec)
    & = m \mapsto
        \letin{T}
              {b}{\sembexp{{\tt cond}}(\funvec)(m)\\[-3pt]&\phantom{~=m\mapsto}}
              {\ite\big(b,\,\semstmt{{\tt s1}}(\funvec)(m),\,\semstmt{{\tt s2}}(\funvec)(m)\big)}
            \\
\widesemstmt{
    \begin{aligned}
          &\tt {\bf match}~expr~\{ \dots\\
          &\quad{}\mid C_i(X_{i1},\dots,X_{ik}) {\tt~=>~stmt_i}\,\\
          &\ \quad{}\dots\}
    \end{aligned}}\hspace{-5pt}(\funvec) & = m \mapsto \letin{T}{a}{\semexpr{{\tt expr}}}{
    \left(
      \begin{aligned}
        &\text{match $a$ with}
        \dots\\
        &\quad{}C_i(e_1,\dots,e_k)\ {\tt =>}\\
        &\qquad\semstmt{{\tt stmt_i}}(\funvec)(m[X_{ij} \gets e_j : j \in [1..k]])\\
        &\dots
      \end{aligned}
    \right)}
  \end{align*}

  \vspace{5pt}
  \hrule height 0.08em
  \smallskip
  \hrule height 0.08em

  \begin{align*}
  \semexpr{^\ell{\tt expr}}^{\tau} &
       \in \Big(\prod_{f\in \Funs} \Mem_{\Vars_{f, in}} \to T\big(\Mem_{\Vars_{f, out}}\big)\Big)
       \to \Big(\Mem_{\Vars_{\ell}} \to T\big(\semtypes{\tau}\big)\Big)\\[5pt]
\semexpr{{\tt X}}(\funvec)(m)
         & = \eta_T\big(m[{\tt X}]\big)\\
\semexpr{C({\tt expr_1, \dots, expr_k})}(\funvec)(m)
         & = \letin{T}{a_1}{{\semexpr{{\tt expr_1}}}(\funvec)(m)}
               {\dots~
                 \letin{T}{a_k}{{\semexpr{{\tt expr_k}}}(\funvec)(m)
                 \\[-3pt]&\phantom{={}}}
                {C(a_1, \dots, a_k)}}\\
\semexpr{{\tt f(expr_1, \dots, expr_k)}}(\funvec)(m)
            &=  \letin{T}{a_1}{{\semexpr{{\tt expr_1}}}(\funvec)(m)}
               {\dots~
                 \letin{T}{a_k}{{\semexpr{{\tt expr_k}}}(\funvec)(m)
                 \\[-3pt]&\phantom{={}}}
                {\funvec_{\tt f}(a_1, \dots, a_k)}}\\
\semexpr{{\tt expr_1}\, \Diamond_{\mathrm{a}}\, {\tt expr_2}}(\funvec)(m)
          &=  \letin{T}{a_1}{\semexpr{{\tt expr_1}}(\funvec)(m)}
             {\letin{T}{a_2}{\semexpr{{\tt expr_2}}(\funvec)(m)}}
             {\eta_T(a_1\,\Diamond_{\mathrm{a}}\,a_2)}\\
\semexpr{{c}}(\funvec)(m)
         & = \eta_T(c)
  \end{align*}

  \vspace{5pt}
  \hrule height 0.08em
  \smallskip
  \hrule height 0.08em

  \begin{align*}
  \sembexp{^\ell {\tt cond}} &
       \in \Big(\prod_{f\in \Funs} \Mem_{\Vars_{f, in}} \to T\big(\Mem_{\Vars_{f, out}}\big)\Big)
       \to \Big(\Mem_{\Vars_{\ell}} \to T\big(\bb\big)\Big)\\[5pt]
\sembexp{{\tt expr_1}\, \Bowtie\, {\tt expr_2}}(\funvec)(m)
          &=  \letin{T}{a_1}{\semexpr{{\tt expr_1}}(\funvec)(m)}
             {\letin{T}{a_2}{\semexpr{{\tt expr_2}}(\funvec)(m)}}
             {\eta_T(a_1\,\Bowtie\,a_2)}\\
       \sembexp{{\tt cond_1}\, \Diamond_{\mathrm{b}}\, {\tt cond_2}}(\funvec)(m)
          &=  \letin{T}{b_1}{\sembexp{{\tt cond_1}}(\funvec)(m)}
             {\letin{T}{b_2}{\sembexp{{\tt cond_2}}(\funvec)(m)}}
             {\eta_T(b_1\,\Diamond_{\mathrm{b}}\,b_2)}\\
       \sembexp{{b}}(\funvec)(m)
         & = \eta_T(b), \quad\ \text{for } b \in \{\textbf{true},\textbf{false}\}
  \end{align*}

  \vspace*{-2mm}
  \caption{Operator semantics of the \IR language of Fig.~\ref{fig:syntax-ir-num-det}
    for a generic monadic effect $T(-)$. For readability, we omit labels of program points, type arguments~$\tau$ in $\semexpr{\cdot}^\tau$, and projections of variables that go out of scope. Variable declarations are reflected by lexical scope
    and otherwise handled like variable assignment.
  }
  \label{fig:opsem-ir-num-monadic} \vspace*{2mm}
\end{figure}

 \begin{figure}\footnotesize\centering
  \begin{align*}
    \semstmt{{\color{azure}\textbf{assume}(cond)}}(\funvec)(m) & = \ite\big(\textbf{true} \in \sembexp{cond}(\funvec)(m),\ \{m\},\ \emptyset\big)\\
    \semexpr{{\color{azure}(\top : \tau)}}(\funvec)(m) & = \semtypes{\tau}
  \end{align*}
  \vspace*{-5mm}
  \caption{Extension of \IR with {\color{azure}non-deterministic primitives}, instantiating $T(-)$ as $\pp(-)$.}\label{fig:opsem-ir-num-ndet-extension}
  \vspace*{-3mm}
\end{figure}

 \afterpage{\FloatBarrier}
}

Consider the \IR language described in Fig.~\ref{fig:syntax-ir-num-det}, excluding for
the time being the elements in {\color{azure}azure}.
Its operator semantics for well-typed programs is given in Fig.~\ref{fig:opsem-ir-num-monadic}.
To achieve the aforementioned flexibility,
instead of defining this semantics as in~\eqref{eq:operator-semantics} using $\big(\Mem_{\Vars_{f,out}}\big)_\bot$, we parametrise it over
a monad $(T, \eta_T, \circ_T)$ in $\Set$.
The semantics in~\eqref{eq:operator-semantics} is then only
a special case, obtained by selecting the \emph{Maybe} monad $(-)_\bot$.
The interested reader can find this instantiation in Appendix~\ref{sec:operator-semantics-maybe-monad} (see Fig.~\ref{fig:opsem-ir-num-fail-extension}).

\begin{remark}\label{remark:maybe-monad}
  The Maybe monad~$((-)_\bot, \eta_\bot, \circ_\bot)$ is not merely a monad in $\Set$:
  it induces a natural order structure, namely
  the flat order structure on all
  objects $X_\bot$, defined by $X_\bot=\{\bot\}\uplus X$
  and $x \leq_{\text{flat}} y \iff (x = \bot \text{ or } x = y)$.
This order is inherited by the products and function spaces, and preserved by $\circ_{(-)_\bot}$, ensuring that $\semop{\Prog}$ is monotone.
Moreover, all $(X_{\bot},\leq_{\text{flat}})$ are CPOs, and this also lifts
  to products and piecewise orders.
  This guarantees that the classical denotational semantics $\lfp\semop{\Prog}$ can be defined
  (and that, furthermore, it can be computed by transfinite iteration).

  We also note that $(X_\bot, \leq_{\text{flat}})$ can be embedded in $(\pp(X),\subseteq)$
  via $\bot\mapsto \varnothing$, $x\mapsto\{x\}$, which allows moving from the Maybe monad
  to the (more general) Powerset monad;
  this is related to the collecting semantics approach in abstract interpretation~\cite{Cousot77,amato-collencting-2020}.
\end{remark}

Starting from the operator semantics of Fig.~\ref{fig:opsem-ir-num-monadic},
one can progressively strengthen the order-theoretical assumptions on the monad $(T,\eta_T,\circ_T)$,
not only to recover the classical denotational semantics via fixpoint, but also enable
invariant-based proof techniques via Knaster-Tarski.

\begin{proposition}[Order-enriched monads provide monotonicity]
  \label{prop:pos-monads-monot}
Assume that $T$ is an order-enriched monad,
  i.e. a monad such that the Kleisli homsets $\catex(X,TY)$ carry a
  poset structure, and that $\circ_T$ is monotone.
Then, $\semop{\Prog}$, $\semstmt{{\tt stmt}}$, $\semexpr{{\tt expr}}$, $\sembexp{{\tt cond}}$.
  are all monotone in $\funvec$, for the natural coordinate-wise and
  pointwise lifts of the order on Kleisli arrows, respectively for
  any well-typed program, statement and arithmetic/boolean expression.
\end{proposition}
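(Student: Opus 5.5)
The argument is a structural induction on the syntax of well-typed expressions, conditions and statements. We prove the following statement, uniformly over all program points $\ell$ and types $\tau$: if $\funvec \leq \funvec'$ coordinate-wise, then $\semexpr{{\tt expr}}(\funvec) \leq \semexpr{{\tt expr}}(\funvec')$ in the Kleisli homset $\catex(\Mem_{\Vars_\ell}, T\semtypes{\tau})$, and likewise for $\sembexp{\cdot}$ and $\semstmt{\cdot}$. The claim for $\semop{\Prog}$ then follows in one step. Its component at $f$ is $(\eta_T\circ\pi_{\{X_{\text{res}}\}})\circ_T \semstmt{f_{\text{body}}}(\funvec)$, and by the definition of order-enriched monad (Definition~\ref{def:order-enrichment}) $\circ_T$ is monotone in its second argument. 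The induction hypothesis for the body therefore gives monotonicity of each component, hence of the tuple for the product order.

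First we isolate one tool that every case will use. Every clause of Fig.~\ref{fig:opsem-ir-num-monadic} is built from the following pieces:
\begin{enumerate}
\item constant Kleisli arrows, which do not depend on $\funvec$ (such as $\eta_T$, $\eta_T\circ(a\mapsto m[X\leftarrow a])$, and the pure operations $\Diamond_{\mathrm a}$, $\Bowtie$, $\Diamond_{\mathrm b}$, constructors, and $\ite$ on already-computed values);
\item the projections $\funvec\mapsto\funvec_{\tt f}$, which are monotone by definition of the product order;
\item sub-denotations, which are monotone by the induction hypothesis;
\item Kleisli composition $\circ_T$, which is monotone in both arguments.
\end{enumerate}
A composite of monotone maps of $\funvec$ under a binary monotone operation is monotone. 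So for {\bf skip}, constants, variables and sequencing, the result is immediate.

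The genuine obstacle is the $\mathrm{let}$ construct when the bound variable occurs in the body together with the ambient memory $m$. Take for instance $\mathrm{let}_T\, a_1\leftarrow e_1(m)\ \mathrm{in}\ \mathrm{let}_T\, a_2 \leftarrow e_2(m)\ \mathrm{in}\ \dots$, or the match and if clauses whose continuation depends on both $m$ and the bound value. These are not literally Kleisli composites of the sub-denotations as arrows $\Mem\to T(-)$; they are $(a\mapsto k(m,a))\circ_T u$, applied pointwise at each $m$. To handle them, we rewrite each such term as a Kleisli composite of global arrows using the strength-like reformulation. The term $m\mapsto (a\mapsto k(m,a))\circ_T e(m)$ equals $\hat k\circ_T \langle \id, e\rangle$, where $\hat k(m,a)=k(m,a)$ and $\langle\id,e\rangle(m)= (a\mapsto\eta_T(m,a))\circ_T e(m)$. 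The remaining question is whether monotonicity of $\circ_T$ on global homsets transfers to this pointwise, parametrised usage.

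Here we use the standing convention that the order is pointwise order-enriched via a poset structure on $T$-objects. Under this convention, for fixed $m$ we compare $(a\mapsto k(m,a))\circ_T e(m)$ with $(a\mapsto k'(m,a))\circ_T e'(m)$. We view $e(m)$ as a Kleisli arrow $1\to TA$ and use axiom (4) of Definition~\ref{def:premonad} to identify $\big((g\circ_T f)\circ u\big)$ with $g\circ_T(f\circ u)$, taking $u$ to be the point $m\colon 1\to\Mem$. Then:
\begin{itemize}
\item $e(m)\leq e'(m)$ holds as points, by the pointwise order;
\item $k(m,-)\leq k'(m,-)$ holds by the induction hypothesis, since the continuation for each fixed $m$ is itself a sub-denotation evaluated at an updated memory, or a constant;
\item monotonicity of $\circ_T$ in $\catex(1,-)$ and $\catex(A,-)$ then yields the inequality at $m$.
\end{itemize}
Ranging over all $m$ gives the pointwise inequality. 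The match clause is handled by the same argument, with $k(m,a)$ defined by cases on the constructor of $a$; each branch is monotone by induction, and a case split on a fixed $a$ preserves the order. The if clause is handled identically, with $\ite(b,\cdot,\cdot)$ for fixed $b$. The non-deterministic extensions (\texttt{assume}, $\top$) are monotone by inspection for $\pp$. This completes the induction.
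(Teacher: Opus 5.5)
The paper gives no explicit proof of this proposition, and your structural induction is the routine argument it leaves implicit: monotonicity of $\circ_T$ in both arguments, combined with the standing convention that Kleisli homsets are ordered pointwise, which you correctly invoke and which is what makes your pointwise treatment of the $\mathrm{let}$-clauses valid. Two small remarks: the strength-like reformulation $\hat k\circ_T\langle\id,e\rangle$ is never actually used; and for nested $\mathrm{let}$s the continuation of an outer binding is itself a $\mathrm{let}$-term rather than ``a sub-denotation or a constant'', so your pointwise step should be applied from the innermost binding outwards, a routine inner induction that leaves the argument correct.
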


\phantom{.}

\begin{proposition}[$\CPO$ structure provides $\lfp$]\label{prop:cpo-image-lfp}
  Suppose that $T$ is an order-enriched monad,
  and that the orders on Kleisli homsets are all CPOs.
Then, $\lfp\semop{\Prog}$ is well-defined.
Moreover, it can be obtained by transfinite iteration from $\bot$,
  that is, $\lfp\semop{\Prog} = \sup_{\alpha:\Ord}(\semop{\Prog})^\alpha(\bot)$.
\end{proposition}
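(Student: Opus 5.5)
The plan is to combine Proposition~\ref{prop:pos-monads-monot} with the classical fixpoint theorem for monotone self-maps of CPOs (Zermelo--Bourbaki/Pataraia style, or the transfinite Kleene iteration of Cousot--Cousot). First I establish that the domain $D \coloneqq \prod_{f\in\Funs}\big(\Mem_{\Vars_{f,in}} \to T(\Mem_{\Vars_{f,out}})\big)$ is itself a CPO. By the standing convention, each factor is the Kleisli homset $\catex(\Mem_{\Vars_{f,in}}, T\Mem_{\Vars_{f,out}})$ with its given order, which is assumed to be a CPO. Finite (indeed arbitrary) products of CPOs, ordered coordinate-wise, are CPOs: a least element is the tuple of least elements, and a directed set (or chain) of tuples has directed projections whose coordinate-wise suprema form the supremum. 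So $D$ is a CPO with bottom $\bot$.

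Next, Proposition~\ref{prop:pos-monads-monot} gives that $\semop{\Prog}\colon D\to D$ is monotone, since $T$ is order-enriched. I then define the transfinite iterates $\Phi^0=\bot$, $\Phi^{\alpha+1}=\semop{\Prog}(\Phi^\alpha)$, and $\Phi^\lambda=\sup_{\beta<\lambda}\Phi^\beta$ at limit ordinals. The key step is to prove by transfinite induction that the sequence is an increasing chain. This ensures each limit supremum exists in the CPO, because chains are directed.

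The induction goes as follows. We have $\Phi^0\le\Phi^1$ since $\bot$ is least. Monotonicity propagates $\Phi^\beta\le\Phi^{\beta+1}$ to successors. At a limit $\lambda$, each $\Phi^\beta\le\Phi^{\beta+1}\le\semop{\Prog}(\Phi^\lambda)$, so $\Phi^\lambda\le\Phi^{\lambda+1}$. By a cardinality argument (Hartogs), the chain must stabilise at some ordinal bounded by the successor cardinal of $|D|$, giving a fixpoint $\Phi^\infty=\sup_\alpha\Phi^\alpha$.

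For leastness, let $x$ be any fixpoint, or indeed any prefixpoint $\semop{\Prog}(x)\le x$. A second transfinite induction shows $\Phi^\alpha\le x$ for all $\alpha$: at successors, $\Phi^{\alpha+1}=\semop{\Prog}(\Phi^\alpha)\le\semop{\Prog}(x)\le x$, and limits are immediate. Hence $\Phi^\infty=\lfp\semop{\Prog}=\sup_{\alpha:\Ord}(\semop{\Prog})^\alpha(\bot)$.

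I expect the main subtlety to be the precise notion of CPO. If ``CPO'' means only chain-complete or $\omega$-complete with bottom, I must check that the limit stages only ever take suprema of chains, which the monotonicity induction guarantees. If it means directed-complete, chains are directed, so the argument applies unchanged. The rest is routine. I will also note that the product-CPO step needs the pointwise/coordinate-wise order of Proposition~\ref{prop:pos-monads-monot} to coincide with the order used there, which holds by the paper's stated conventions.
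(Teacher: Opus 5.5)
Your argument is correct and is the route the paper implicitly relies on (it gives no separate proof of this proposition): monotonicity from Proposition~\ref{prop:pos-monads-monot}, the fact that products of CPOs are CPOs, and the classical transfinite (Kleene/Cousot--Cousot) iteration theorem for monotone maps on chain-complete posets, which you simply prove inline. The one slip is in your closing aside: the argument does \emph{not} extend to merely $\omega$-complete orders, because limit stages need joins of chains of arbitrary (e.g.\ uncountable) length, and monotone maps on $\omega$-cpos can in fact lack fixpoints (take $x\mapsto x+1$ on the countable ordinals); this does not affect your proof, since the paper defines CPOs as posets in which all chains, including the empty one, have joins.
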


\begin{proposition}[$\CLat$ structure provides Knaster-Tarski]\label{prop:clat-image-KT}
  Suppose that $T$ is an order-enriched monad
  and that the orders on Kleisli homsets are all complete lattices.
Then, the set~$\Fix(\semop{\Prog})$ of fixed-points of $\semop{\Prog}$ forms a complete lattice (for the inherited order).
Moreover, pre/postfixpoint provide lower/upper bounds on gfp/lfp:
  for every $\funvec$,
    $\funvec \pleq \semop{\Prog}(\funvec) \implies \funvec \pleq \gfp\semop{\Prog}$
  and
    $\semop{\Prog}(\funvec) \pleq \funvec \implies \lfp\semop{\Prog} \pleq \funvec$.
\end{proposition}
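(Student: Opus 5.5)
The proof reduces to the Knaster--Tarski theorem applied to the operator $\semop{\Prog}$ on the poset
\[
  D \;:=\; \prod_{f\in\Funs} \Big(\Mem_{\Vars_{f,in}} \to T\big(\Mem_{\Vars_{f,out}}\big)\Big),
\]
ordered coordinate-wise, where each factor carries the order of the Kleisli homset $\Set(\Mem_{\Vars_{f,in}}, T(\Mem_{\Vars_{f,out}}))$. So two things are needed: (i) $D$ is a complete lattice, and (ii) $\semop{\Prog}$ is a monotone endomap of $D$.

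For (i), the hypothesis says each Kleisli homset is a complete lattice. A product of complete lattices with the coordinate-wise order is again a complete lattice: suprema and infima are computed componentwise. Hence $D$ is a complete lattice. In the pointwise order-enriched setting used throughout the paper, the homset order is already the pointwise order inherited from the complete lattices $T(Y)$, and this agrees with the order assumed in the hypothesis. No further argument is needed here.

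For (ii), Proposition~\ref{prop:pos-monads-monot} applies directly, since $T$ is an order-enriched monad. It gives that $\semop{\Prog}$ is monotone in $\funvec$ for precisely this coordinate-wise/pointwise order on $D$. If one wanted to re-derive it, the route is a structural induction over Fig.~\ref{fig:opsem-ir-num-monadic}. Each clause is built from $\eta_T$, constants, the projection $\funvec\mapsto\funvec_{\tt f}$ (trivially monotone), and Kleisli compositions $\circ_T$, which are monotone in both arguments by order-enrichment. The \textbf{if} and \textbf{match} clauses are Kleisli compositions whose continuation is a pointwise case split on monotone pieces; this case split is monotone because the order on functions is pointwise. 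The only bookkeeping point is checking that these case splits really are arrows compared pointwise in the homset, and this is where I would be careful.

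With (i) and (ii) in place, Tarski's fixpoint theorem gives that $\Fix(\semop{\Prog})$, with the order inherited from $D$, is a complete lattice. It also gives the standard characterisations
\[
  \lfp\semop{\Prog} = \inf\{\funvec \mid \semop{\Prog}(\funvec)\pleq\funvec\},
  \qquad
  \gfp\semop{\Prog} = \sup\{\funvec \mid \funvec\pleq\semop{\Prog}(\funvec)\}.
\]
Both bound statements follow immediately. A postfixpoint $\funvec$ belongs to the set whose infimum is $\lfp\semop{\Prog}$, so $\lfp\semop{\Prog}\pleq\funvec$. Dually, a prefixpoint lies below the supremum $\gfp\semop{\Prog}$, so $\funvec\pleq\gfp\semop{\Prog}$.

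The only potential obstacle is the well-definedness of the operator on $D$. Proposition~\ref{prop:pos-monads-monot} takes care of this, so the proof is essentially a direct assembly of known results.
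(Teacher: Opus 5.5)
Your proposal is correct and follows the route the paper intends. The paper gives no separate proof: it treats the statement as an immediate consequence of Proposition~\ref{prop:pos-monads-monot} (monotonicity), the fact that product and pointwise orders of complete lattices are complete lattices (Remark~\ref{remark:pointwise-product-orders}), and the Knaster--Tarski theorem. The case-split bookkeeping you flag is settled by the paper's standing convention that all Kleisli homsets are pointwise order-enriched via a choice of order on $T$-objects.
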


\begin{example}[Inductive invariants]
  Consider the Powerset monad in Example~\ref{example:powerset},
  and the program

  {\setlength{\tabcolsep}{3pt}\begin{tabular}{rl}
    $\Prog \coloneqq {}$
    {\tt {\bf def} f(x:{\bf int}) -> {\bf int} \{}& {\tt {\bf let} res:{\bf int} = 0;}\\
    & {\tt {\bf if} x < 0 {\bf then} \{\,res\,=\,x\,\} {\bf else} \{\,res\,=\,$3 \cdot{}$f(x${}-1$)${}+2$\,\};}\\
& {\tt {\bf return} res\,\}}.
  \end{tabular}}
Its operator semantics is
  $\semop{\Prog}(f) = n \mapsto \ite( n < 0, \{n\}, \{3y+2\,|\,y\in f(n-1)\} )$,
  for all $f:\zz\to\pp(\zz)$.
To prove that, for example, {\tt f} preserves parity (that is,
  $f(n)~\mathrm{mod}~2 = n~\mathrm{mod}~2$ for all $n$), one can define
  the candidate function $\candf \colon \zz\to\pp(\zz)$ which sends even
  (resp. odd) numbers to the set of all even (resp. odd) numbers:
  $\candf \coloneqq n \mapsto \ite(n \in 2\zz, 2\zz, 2\zz+1)$.
  We observe that this function a postfixpoint of $\semop{\Prog}$:
$\semop{\Prog}(\candf) = \big(n \mapsto \ite(n < 0, \{n\}, \ite(n \in 2\nn, {6\zz+2}, {6\zz+3}))\big)
   \mathbin{\dot\subseteq} \candf$.

  In other words, postfixpoints may be interpreted as a form of inductive invariants.
  Notice that proofs by postfixpoint techniques do not require an explicit induction scheme:
  this is already provided by Knaster-Tarski and the recursive flow encoded in the operator.
\end{example}

\subsubsection*{Non-determinism}
It is customary in the abstract interpretation literature to extend the languages under study with non-deterministic primitives to create or restrict execution paths.
Path creation is used to model underspecified functions or user-inputs, while path restriction models failing executions such as division by zero, or is used to refine the analysis through user-defined lemmas.

Thanks to support for monadic effects, it is simple to extend the operator semantics of \IR from~Fig.~\ref{fig:opsem-ir-num-monadic} to support these types of primitives.
Just select the Powerset monad $\pp(-)$ in Fig.~\ref{fig:opsem-ir-num-monadic}, and interpret the primitives highlighted in {\color{azure}azure} in Fig.~\ref{fig:syntax-ir-num-det} as in Fig.~\ref{fig:opsem-ir-num-ndet-extension}---the extension is compositional.
Propositions~\ref{prop:pos-monads-monot}~and~\ref{prop:clat-image-KT} still apply, by monotonicity of the interpretations
of the new primitives.

\begin{example}
  Intuitively, with the sequence of statements {\tt x = $(\top : \textbf{int})$; \textbf{assume}($0 \leq {}$x${}\leq 1$)}
  one obtains in the output $\pp(\Mem_{\Vars_{-}})$ two elements, one where {\tt x} is assigned the value $0$,
  and one where it is assigned the value $1$. A function exhibiting such statements is \emph{underspecified}.
  We can also easily model non-total operations such as integer division: an assignment {\tt y = $\frac{1}{\texttt{x}}$}
  can be expressed as {\tt {\bf if} x = $0$ {\bf then} \{ {\bf assume}({\bf false}) \} {\bf else} \{ y = $(\top : \textbf{int})$; {\bf assume }(x${}\cdot{}$y = 1) \}},
  which intuitively sends memories assigning {\tt x} to $0$ to $\emptyset \in \pp(\Mem_{\Vars_{-}})$; signaling a ``panic'' state.
\end{example}

\subsubsection*{Example: probabilistic semantics}
Another instantiation of this framework is given in Appendix~\ref{app:proba-sem},
which illustrates how the \emph{a priori} obvious choice of distribution monad $\dd_{=1}(-)$
must be extended to \emph{subdistributions} $\dd_{\leq 1}(-)$ to interpret recursion, and can be further extended to a monad $\ww_\infty(-)$ pointwise order-enriched by a $\CLat$ structure on objects,
allowing invariant-based reasoning.
An interesting property of that example is that monadic multiplication $\mu_{\ww_\infty}$ no longer coincides with order-theoretical join,
although composition $\circ_{\ww_\infty}$ is still additive in both inputs.

\subsection{Example: Cost Analysis via Monad Transformers}
\label{subsec:eff-cost-monad-transformers}

A limitation of the monadic approach to semantics and analysis of programs
is that monads do not always compose well:
given two functors $T:\catex\to\catex$ and $S:\catex\to\catex$, each equipped with monadic structure
as $(T,\eta_T,\circ_T)$, $(S,\eta_S,\circ_S)$, there may not always exist a monad structure of the
composed functor $TS:\catex\to\catex$.\footnote{In fact, no-go theorems exist in important cases;
see e.g.~\cite{VaraccaWinskelMSCD06} for barriers in combining non-determinism and probabilities.}
There are, however, monads that compose extremely well with any monad, and give rise to
what are called \emph{monad transformers}. In turns out that
several of these monads additionally interact nicely with various forms of
\emph{order-enrichments}. We describe here a few examples that are useful in the context
of cost analysis,
and allow us to integrate analysis techniques within semantics, simply as modular monadic effects.

\subsubsection{The cost monad transformer}

Cost models (the operation of assigning resource usage to statements and expressions) are central to cost analysis~\cite{resource-iclp07}.
Semantically, they can be easily integrated into the monadic framework by using the \emph{cost monad transformers}.

\begin{definition}[Cost Monad Transformer]
  Given a (pre)monad $(M(-),\eta_M,\circ_M)$ on~\Set,
  the \emph{cost (pre)monad transformer} $(M(\costm(-)),\eta_{M\costm},\circ_{M\costm})$
  is defined as follows:
  \begin{itemize}
    \item $M(\costm(-)) \colon \Ob(\Set) \to \Ob(\Set)$ is the mapping-on-objects $X \mapsto M(X \times \zz)$;
    \item the unit is given by $\eta_{M\costm,X}(x) \coloneqq \eta_{M,X}((x,0))$ for every $x \in X$;
    \item the composition is given by $g \circ_{M\costm} f \coloneqq g^+ \circ_M f$, for $f \in \Set(X, M(\costm Y))$ and $g \in \Set(Y, M(\costm Z))$,
    where $g^+ \in \Set(\costm Y, M(\costm Z))$ stands for
    $g^+(y,c) \coloneqq \big(\big(\eta_M\circ((z,c')\mapsto(z,c+c'))\big)\circ_M g\big)(y)$.
  \end{itemize}
\end{definition}

It is easy to see that $(M(\costm(-)),\eta_{M\costm},\circ_{M\costm})$ is indeed a (pre)monad.
We speak of the \emph{cost monad} $(\costm(-),\eta_\costm,\mu_\costm)$ as a special case of the cost monad transformer, where $M$ is the identity monad.

\begin{remark}[Cost equations in recurrence-based cost analysis]
  Consider the cost monad $\costm$, and denote
  by $(-)_s : (X \to \costm X) \to (X \to X)$ and
  $(-)_c : (X \to \costm X) \to (X \to \zz)$ the (post)
  projections on the first and second components respectively,
  which we call the \emph{state} and \emph{cost} projections.
  We~get
  \begin{equation*}
    (g \circ_\costm f)_s = (g_s \circ f_s)
    \quad \text{ and } \quad
    (g \circ_\costm f)_c = (g_c \circ f_s) + f_c,
  \end{equation*}
  where the sum is taken pointwise.
  This indicates that the cost of $g \circ_\costm f$
  corresponds to the cost of $f$ plus the cost of $g$ applied to the result of $f$.
  When $X = \zz$, as it is the case after performing the \emph{size} abstraction
  of a program to track the evolution of
  data structures with respect to some metric,
  the two projections $(-)_s$ and $(-)_c$
  capture the phenomenon described e.g. in~\cite{caslog-short,brockschmidt2014alternating}:
  recurrence-based cost analysis must produce
  \emph{size} equations in addition to \emph{cost} equations to be
  compositional.
\end{remark}

\begin{proposition}[Order preservation]
  Let $(M(-),\eta_M,\circ_m)$ be a (pre)monad, pointwise order-enriched by a \Pos- (resp. \CPO-, resp.~\CLat-) structure on objects.
Then, $(M(\costm(-)),\eta_{M\costm},\circ_{M\costm})$ can also be pointwise order-enriched by a \Pos- (resp. \CPO-, resp.~\CLat-) structure on objects.
\end{proposition}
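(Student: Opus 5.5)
The natural choice of order structure on $M(\costm Y)$ is simply the one already given on $M$-objects. Since $M(\costm Y) = M(Y\times\zz)$ is itself an $M$-object, it already carries the chosen \Pos- (resp.\ \CPO-, resp.\ \CLat-) structure $\leq_{M(Y\times\zz)}$. We then extend it pointwise to Kleisli homsets $\Set(X, M(\costm Y))$. This is literally the same ordered set as the Kleisli homset $\Set(X, M(Y\times\zz))$ of $M$. Hence every structural requirement on objects (poset, CPO, complete lattice) holds immediately. What remains is to check order-enrichment: $\circ_{M\costm}$ must be monotone in both arguments.

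Recall $g\circ_{M\costm} f = g^+\circ_M f$. Monotonicity in $f$ is immediate, because $\circ_M$ is monotone in its second argument.

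For monotonicity in $g$, we proceed in two steps.
\begin{enumerate}
\item Show that $g\mapsto g^+$ is monotone from $\Set(Y,M(\costm Z))$ to $\Set(\costm Y, M(\costm Z))$, both ordered pointwise.
\item Use monotonicity of $\circ_M$ in its first argument.
\end{enumerate}
For step 1, fix $c\in\zz$ and write $s_c\colon (z,c')\mapsto (z,c+c')$. Then $g^+(y,c) = ((\eta_M\circ s_c)\circ_M g)(y)$, that is, $g^+(-,c) = (\eta_M\circ s_c)\circ_M g$. This is an $M$-Kleisli composite with a fixed left factor $\eta_M\circ s_c\in\Set(Z\times\zz, M(Z\times\zz))$. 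It is therefore monotone in $g$, again by monotonicity of $\circ_M$. If $g\pleq g'$, then $g^+(y,c)\leq g'^+(y,c)$ for every $(y,c)$, i.e.\ $g^+\pleq g'^+$. Composing the two monotone steps gives monotonicity of $g\mapsto g^+\circ_M f$.

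We also record that the (pre)monad laws transfer, which the paper already asserts ("easy to see"), so only the order-theoretic part needs checking.

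The only delicate point is step 1: it requires expressing $g^+$, which is defined on the enlarged domain $\costm Y$, as a family of Kleisli composites indexed by $c$. This lets monotonicity of $\circ_M$ be applied without assuming any order-enrichment of the functorial action of $M$. Note that the paper explicitly does not assume $M$ is functorial on orders, so this reformulation is exactly what avoids that assumption.
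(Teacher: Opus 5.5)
Your proof is correct and follows essentially the same route as the paper. The paper reuses the given order on the $M$-objects $M(Y\times\zz)$, gets monotonicity of $\circ_{M\costm}$ in $f$ directly from $\circ_M$, and gets monotonicity in $g$ by showing that $g\mapsto g^+$ is monotone ``reasoning pointwise on each $(y,c)$'' via monotonicity of $\circ_M$; your rewriting $g^+(-,c) = (\eta_M\circ s_c)\circ_M g$ simply makes that one-line pointwise step explicit.
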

\begin{proof}
  If all $MX$ are posets (resp. cpos, resp. complete lattices), this is in particular the case of all $M(X\times \zz)$.
Moreover, the operation $(-)^+$ above is monotone in $g$ by
  reasoning pointwise on each $(y,c)$, and by monotonicity of $\circ_M$,
  hence $\circ_{M\costm}$ is monotone in both arguments.
\end{proof}

\begin{example}
  Consider our $\IR$ language with respect to any monad $(T(-),\eta_T,\circ_T)$ on~\Set,
  and apply the cost monad transformer to obtain a semantics of $\IR$ with respect to the monad $(T(\costm(-)),\eta_{T\costm},\circ_{T\costm})$.
  We can extend the language with {\tt tick(expr)} statements ({\tt expr} of type {\bf int})
  interpreted as follows:
    $\semstmt{{\tt tick(expr)}}(\funvec)(m) := \letin{T\costm}{a}{\semexpr{{\tt expr}}(\funvec)(m)}{(m,\pi_1(a))}$.
  Informally, this states that the cost of executing {\tt tick(expr)} is the value of {\tt expr}.
  Adding now {\tt tick(1)} before every call to a function counts the number of times that function
  is executed.
\end{example}

To consider several types of resources it suffices to compose the cost monad transformer with itself several times,
obtaining a monad with mapping-on-objects $M(\costm^k(-)) \colon X \mapsto M(X \times \zz^k)$.
Instead of adding {\tt tick} statements, one can also make the evaluation of expressions themselves contribute to the cost
(e.g., when the cost model must track the cost of arithmetic operations)
by tweaking the semantics locally.

One can also relate the cost monad transformer with the notion of ``ghost variables'',
which in this case consists in adding a variable ${\tt \_cost}$ initialised at $0$ and
incremented when cost is produced. This variable does not modify the behaviour of the
program, but it can be helpful during program analysis to discover additional variable
relations~\cite{Ghosts-CAV14,Ghosts-VMCAI20}.

\subsubsection{The counter monad transformer}
\label{subsubsec:counter-monad-transformer}

Another technique that is commonly used in cost analysis,
is to
introduce ``counter'' variables, and then formulate recurrences using these counters.
For instance, in recurrence-based loop analysis~\cite{kovacs-phdthesis,kincaid-closed-forms-popl19,regular-path-clauses-hcvs21-short-no-url}, the number of iterations~$k$ \emph{still to be performed} is made the input of an
unknown function~$f$. With a recurrence equation of the form $f(k)(\vec{x}) = e(...f(k-1)(\vec{x})...)$, with $e$ being an expression,
one then describes variable values at loop exit as a function of
their value at loop entry and of the total number of performed iterations.
We can give a novel viewpoint on this approach, as an instance of a \emph{counter monad transformer}.

\begin{definition}[Counter monad transformer]
  Let $(M,\eta_M,\circ_M)$ be a (pre)monad on \Set, assumed to be pointwise order-enriched via a $\CLat$ order structure on objects.
  The \emph{counter (pre)monad transformer} $(\counterm M(-),\eta_{\counterm M},\circ_{\counterm M})$ is defined as follows:
  \begin{itemize}
    \item $\counterm M(-) \colon \Ob(\Set) \to \Ob(\Set)$ is the mapping-on-objects $X \mapsto (\nn \to M(X))$;
    \item the unit is given by $\eta_{\counterm M,X}(x) \coloneqq (n\mapsto \ite(n=0,x,\bot_{MX}))$ for every $x \in X$;
    \item the composition is given by $g \circ_{\counterm M} f = x \mapsto \left(k \mapsto \sqcup_{i+j=k} \big((y \mapsto g(y)(j)) \circ_M f(x)(i)\big)\right)$,
    for $f \in \Set(X, \counterm M Y)$ and $g \in \Set(Y, \counterm M Z)$.
\end{itemize}
\end{definition}
One can see that $(\counterm M(-),\eta_{\counterm M},\circ_{\counterm M})$ is indeed a (pre)monad,
that is also pointwise order-enriched via a \CLat structure on objects.

For some intuition, the counter monad can be seen as a ``dual'' of the cost monad: instead of accumulating costs along the execution by summing costs via compositions, it shares a \emph{budget} in compositions, and somehow accumulates ``promises'': behaviours that can be realised for a given budget. To represent counters counting down to $0$, a possible program instrumentation would be to initialise a variable as $(\top : \textbf{int})$, decrement it when budget is consumed, and \emph{assert} it $0$ at the end of the program.
The program can then be analysed by abstract interpretation using \emph{backward analysis} passes.
The interested reader may have a look at~\cite{ProphecySepLog-POPL19,brahmakshatriya-arxiv26-bwd-prophecy-buildit} for use cases of such
  ``(ghost) prophecy variables'' in other contexts.

Starting from a \CLat order structure, we can combine the counter monad transformer with other effects, such as cost, e.g. obtaining the order-enriched monad $\counterm \pp \costm (-)$,
or by combining several $\counterm(-)$.
Variations are also possible, e.g. replacing the structure $(\nn,+)$ of counters by a different monoid structure. At the language level, one can consider operations on counters, such as resets.

\subsection{Outlook: Applicability of Operator Semantics to Diverse Languages}

Note that the concept of operator semantics is much more general than the case of \IR,
which is merely an example (used in our implementation as an intermediate language for
analysis).

In particular, this (global) fixed-point delaying approach does not
require recursive function calls to be the only form of recursivity in our language.
For instance, supposed we wished to add a $\textbf{while}(\texttt{cond})\{\texttt{stmt}\}$
primitive to our language.
We can interpret it by adding, to the vector of denotations on which $\semop{-}$ operates,
one function variable $\funvec_n$ for each loop in the program (indexed by a
\emph{node} symbol $n\in\wnodes$), and by setting e.g.\footnote{
  While nodes could be interpreted either left or right recursively, leading to functional
  equations with different properties. Similarly, more nodes (e.g. for each basic block)
  could be included, though it does not seem to help in postfixpoint proofs.
},
$\semstmt{{{\tt while}}^n{{\tt(cond)\{stmt\}}}}(\funvec)(m) = \funvec_{n}(m)$,
{{\footnotesize \[
  \semop{{\tt Prog}} \in \End\big(\prod_{f\in \Funs\cup\wnodes} \Mem_{\Vars_{f, in}} \to T\big(\Mem_{\Vars_{f, out}}\big)\big)
,\quad
\funvec \mapsto
  \begin{pmatrix*}[l]
    \left(\argvec \mapsto
    \Big(\big(\eta_T \circ \pi_{\{X_{\text{res}}\}}\big)\circ_T \semstmt{f_{\text{body}}}(\funvec)\Big)(\argvec)
    \right)_{f \in \Funs}
\\
\left(m \mapsto \begin{pmatrix*}[l]\letin{T}
      {b}{\sembexp{{n_{\texttt{cond}}}}(\funvec)(m)\\}
      {\ite\big(b,\, \semstmt{n_{\texttt{body}}}(\funvec)(m),\, \eta_T(m)\big)
    }\end{pmatrix*}
    \right)_{n \in \wnodes}
  \end{pmatrix*},
\]
}}
where $n_{\texttt{cond}}$ and $n_{\texttt{body}}$ refer to the guard and body of the loop indexed by $n\in\wnodes$.

This approach is also not limited to first-order languages: it is straightforward to express
an operator semantics for an higher-order functional language like
\pcf~\cite{PLOTKIN1977-PCF,curien-pcf-chapter-98}, where \emph{\texttt{fix}} commands lead to functional
unknowns (i.e. ``recursion points'') tracked by the operator.

Similarly, we can very naturally support logic programming paradigms, e.g. the predicates
of \textsc{Prolog} providing recursion points. Both monotonic and non-monotonic features
can be modelled (by basing value semantics either on sets of groundings
or on arbitrary sets of terms).

In fact, this idea can be applied to semantics of systems well beyond traditional
programming languages, e.g. to continuous systems and differential equations, using
non-standard analysis approaches~\cite{Hasuo12-nsa-CAV-with-doi}, transforming e.g. the
example of Appendix~\ref{app:proba-sem} into a study of Brownian motion by postfixpoint
techniques, using infinitesimal steps.

\section{Abstraction of Operators for Abstract Compilation}
\label{sec:abs-op-for-abs-comp}

The ``Programs $\equiv$ Operators'' viewpoint enables programs to be compared,
abstracted and combined using purely algebraic operations, while
preserving control flow structure and machine representability.
Since programs become elements of complete lattices, it is natural to consider abstractions of these lattices, leading to \emph{abstract programs}. This opens the possibility of implementing \emph{abstract compilers} as \emph{abstract interpreters} employing higher-order abstract domains built upon \emph{Galois connections in operator space}.
That is,
instead of just abstracting states from~$\pp(\Mem_{\Vars})$,
we abstract operators from $\End\big(\prod_{f\in\mathcal{F}}\Mem_{\Vars_{f,in}} \to T(\Mem_{\Vars_{f,out}})\big)$.

In this Section, we introduce a theoretical toolkit to build and reason about GCs in operator space
and abstract compilers, applicable in general settings.
To present the results, we adopt a
categorical lens, which we believe is well-suited to higher-order abstract interpretation
and to study abstract notions of compositions.
We introduce key notions of this framework in
Section~\ref{subsec:cat-gc}.
Sections~\ref{subsec:state-space-absop} and~\ref{subsec:abstract-monads}
exemplify the approach
by discussing properties of the two GCs below:
{\scriptsize
 \begin{equation*}
\End\Bigg(\prod_{f\in\mathcal{F}}\Mem_{\Vars_{f,in}} \to T(\Mem_{\Vars_{f,out}})\Bigg)
   \underset{\ref{subsec:state-space-absop}}{\galois{}{}}
   \End\Bigg(\prod_{f\in\mathcal{F}}\Mem^\sharp_{\Vars_{f,in}} \to T(\Mem^\sharp_{\Vars_{f,out}})\Bigg)
   \underset{\ref{subsec:abstract-monads}}{\galois{}{}}
   \End\Bigg(\prod_{f\in\mathcal{F}}\Mem^\sharp_{\Vars_{f,in}} \to T^\sharp(\Mem^\sharp_{\Vars_{f,out}})\Bigg)
 \end{equation*}}In the leftmost GC (Section~\ref{subsec:state-space-absop}) the abstraction of operators effectively acts on memories,
while the rightmost GC (Section~\ref{subsec:abstract-monads}) acts on the monad $T$.
In the context of cost analysis, the first step corresponds to what we call a \emph{size abstraction},
and the second enables \emph{recurrence extraction}.

\subsection{Parametric Abstract Domains and Galois Connections, categorically}
\label{subsec:cat-gc}

This section provides categorical tools used to build and study the GCs in
operator space announced above. Section~\ref{subsubsec:soundness-oplaxity}
recalls how oplax functors and transformations model sound computations and their
abstractions; the mechanism giving rise to them, namely transport of
a structure through object-wise GCs, is adapted to construct monad abstractions
in Section~\ref{subsec:abstract-monads}.
Section~\ref{subsubsec:param-gcs-are-functors} then develops \emph{parametric}
GCs as functors into $\CLat_\sqcup$, culminating in the Kan domain/codomain
abstraction (Theorem~\ref{theorem:kan-hom-domain-codomain-abstraction}), the
key ingredient of the state-space abstraction of
Section~\ref{subsec:state-space-absop}.

\subsubsection{``Soundness $\rightsquigarrow$ Oplaxity'': Transfer of Structure by object-wise GCs.}
\label{subsubsec:soundness-oplaxity}

We build upon~\cite{CatAbs-MFPS23} and recall that sound, state-level
abstract domains that work across states of varying signatures---that is, \emph{parametric abstract
domains}---may be viewed as \emph{oplax} functors.
In (order-)enriched category theory, (op)lax notions are weakened versions of
algebraic structures in which some axioms hold only up to inequality. As
observed by~\cite{CatAbs-MFPS23,Steffen92}, this is precisely the slack needed
for \emph{sound} program analysis: oplax notions model \emph{simulations},
sound \emph{interpretations} (abstract semantics), and sound/best
\emph{abstractions} between semantics. \emph{Lax} notions instead correspond to
\emph{completeness}.

\begin{definition}[(Op)lax functor]
  In the context of order-enrichments\footnote{
    Note that we do \emph{not} require (op)lax functors to be
    order-enriched themselves a priori.
    While this is a common assumption in general enriched category theory,
    we follow~\cite{Hasuo15, GoncharovSchroder13},
    and find it to be limiting as it complicates the study of
    important examples such as $\pp(-)$, which is
    not monotone on arrows if viewed as a \Pos-endofunctor.
  }, we define an \emph{oplax} functor
  $F:\catex\to\catex'$ between unital magmoids, where $\catex'$ is order-enriched, as
  a prefunctor such that $F(f \circ g) \leq F(f)\circ F(g)$ and $F(\id_X) \leq \id_{FX}$.
  (Similarly, it is said to be \emph{lax} when the inequalities are in the other direction.)
  An (op)lax functor is \emph{normal} whenever $F(\id_X)=\id_{FX}$ for all $X$.
\end{definition}

In~\cite{CatAbs-MFPS23}, oplax functors arise by \emph{transport} of the structure
of a functor $\sem{-}_c \colon \syntcat \to \CLat_\sqcup$ through an object-wise collection of GCs
$\big(\pp(\sem{X})\galois{\alpha_X}{\gamma_X}A(X)\big)_{X \in \Ob(\syntcat)}$.
More specifically, for the syntactic category $\syntcat$ of some programming language
without recursion, the authors define the ``usual'' set-theoretical semantics as a functor
$\sem{-} \colon \syntcat \to\Set$ that maps, e.g., variables $\Vars$ to memories $\Mem_\Vars$,
and statements to their denotations
(e.g.,~$\sem{\texttt{let x = 3}} = m \mapsto m[x \gets 3]$),
and recover the \emph{collecting semantics} as
$\sem{-}_c \coloneqq \pp(-)\circ\sem{-} \colon \syntcat \to \CLat$.
For a concrete example, consider a purely numerical language with an object-wise collection of GCs between sets of values and hyperboxes. The authors build an \emph{interval} abstract interpretation
functor $A_\ii \colon \syntcat \to\CLat$ mapping, on objects, variables $\Vars$
to interval hyperboxes $\ii(\zz^\Vars)$, and on arrows,
$A_\ii(f) \coloneqq \alpha \circ \sem{f}_c \circ \gamma$,
which yields the \emph{best abstract transformer}. This functor is only \emph{oplax}: composition is not preserved exactly. Rather:
\[
  A_\ii(f \circ g)
 = \alpha \circ \sem{f \circ g}_c \circ \gamma
 = \alpha \circ \sem{f}_c \circ \sem{g}_c \circ \gamma
 \pleq \alpha \circ \sem{f}_c \circ \gamma \circ \alpha \circ \sem{g}_c \circ \gamma
 = A_\ii(f) \circ A_\ii(g),
\]
i.e. the composition of two best abstractions is in general sound but not optimal.
Finally, the soundness of the \emph{abstract transfer functions} $A_{\ii}(f)$
relative to the \emph{concrete transfer functions} $\sem{f}_c$
is captured by an \emph{oplax} natural transformation $\sem{-}_c\Rightarrow A_{\ii}(-)$,
defined next.

\begin{definition}[Oplax natural transformation]
  An \emph{oplax} natural transformation $\alpha:F\Rightarrow G$ between two
  prefunctors $F,G:\catex\to\catex'$, where $\catex'$ is order-enriched, is a prenatural transformation (i.e. a collection of maps $(\alpha_X)_{X\in\Ob(\catex)}$)
  such that $\alpha_Y \circ Ff \leq Gf \circ \alpha_X$ for all $f\in\catex(X,Y)$. \end{definition}

In other words,~\cite{CatAbs-MFPS23} show that transporting a \emph{functor} through GCs
yields an \emph{oplax functor}, made precise in
Proposition~\ref{prop:optimal-oplax-functor-by-GC} below. Compared
to~\cite{CatAbs-MFPS23}, we slightly update the terminology, introducing
\emph{additive} prenatural transformations, so as to conclude that
\emph{additive oplax natural transformations between oplax functors model
sound abstractions between sound domains}: given an object-wise collection of
GCs presented as a pair of prenatural transformations $\alpha:T\Rightarrow
T^\sharp$, $\gamma:T^\sharp\Rightarrow T$, additivity of $\alpha$ is equivalent
to the existence of right adjoints $\gamma_X$ to $\alpha_X$ for each object~$X$.

\begin{definition}[Additive (pre)natural transformation]
  A prenatural transformation $\alpha:F\Rightarrow G$ between two
  prefunctors $F,G:\catex\to\Pos$ (where the targets are posets)
  is \emph{additive} whenever each $\alpha_{X}$ is additive,
  i.e. $\alpha_X(\sqcup_i x_i) = \sqcup_i \alpha_X(x_i)$ whenever these joins exist,
  for all $X \in \Ob(\catex)$.
\end{definition}

\begin{proposition}[\cite{CatAbs-MFPS23}]
  \label{prop:optimal-oplax-functor-by-GC}
  Let $T:\catex\to\CLat$ be an oplax functor,
  and $\big(TX \galois{\alpha_X}{\gamma_X} A_X\big)_{X\in\Ob(\catex)}$ be an object-wise
  collection of Galois connections.
  Then, the following $T^\sharp$ is an oplax functor,
  and $\alpha:T\Rightarrow T^\sharp$ is an additive oplax natural transformation.
This oplax functor is \emph{initial} among all sound abstractions of $T$,
  that is, among all oplax functors $T':\catex\Rightarrow\CLat$ such that
  $\alpha:T\Rightarrow T'$ is oplax natural.
  \begin{align*}
    T^\sharp:
           \catex \to \CLat    &&
               X  \mapsto A_X &&
       (f:X\to Y) \mapsto \alpha_Y \circ Tf \circ \gamma_X.
  \end{align*}
\end{proposition}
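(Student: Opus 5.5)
We check three things in turn, using only the Galois connection inequalities $\id_{TX}\pleq\gamma_X\circ\alpha_X$ (extensivity of $\gamma_X\alpha_X$) and $\alpha_X\circ\gamma_X\pleq\id_{A_X}$ (reductivity), together with monotonicity of $\alpha,\gamma$ and of composition in $\CLat$: that $T^\sharp$ is an oplax functor, that $\alpha$ is an additive oplax natural transformation, and that $T^\sharp$ is initial. Throughout we use that composition in $\CLat$ (monotone maps, pointwise order) is monotone in both arguments. Pre-composition by any map preserves the pointwise order. Post-composition by a monotone map does too.

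\emph{Oplax functoriality.} For $f\in\catex(X,Y)$ and $g\in\catex(Y,Z)$, oplaxity of $T$ gives $T(g\circ f)\pleq Tg\circ Tf$. Pre-composing with $\gamma_X$ and post-composing with the monotone $\alpha_Z$, we get
$T^\sharp(g\circ f)=\alpha_Z\circ T(g\circ f)\circ\gamma_X\pleq \alpha_Z\circ Tg\circ Tf\circ\gamma_X$.
We then insert $\id_{TY}\pleq\gamma_Y\circ\alpha_Y$ between $Tg$ and $Tf$. This uses monotonicity of $\alpha_Z\circ Tg$, which holds because $Tg$ is an arrow of $\CLat$. The result is the bound $\alpha_Z\circ Tg\circ\gamma_Y\circ\alpha_Y\circ Tf\circ\gamma_X=T^\sharp g\circ T^\sharp f$. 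For identities, $T^\sharp(\id_X)=\alpha_X\circ T(\id_X)\circ\gamma_X\pleq\alpha_X\circ\gamma_X\pleq\id_{A_X}$.

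\emph{Additive oplax naturality.} Each $\alpha_X$ is a lower adjoint, so it preserves all existing joins, which gives additivity. For oplaxity we need $\alpha_Y\circ Tf\pleq T^\sharp f\circ\alpha_X=\alpha_Y\circ Tf\circ\gamma_X\circ\alpha_X$. This follows from $\id\pleq\gamma_X\alpha_X$ and monotonicity of $\alpha_Y\circ Tf$.

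\emph{Initiality.} This is the step requiring the most care: it needs the reading ``initial'' $=$ least in the pointwise order among sound abstractions. The claim to prove is that for any oplax $T'$ with $A_X$ as objects and $\alpha:T\Rightarrow T'$ oplax natural, $T^\sharp f\pleq T'f$ for all $f$. Oplax naturality gives $\alpha_Y\circ Tf\pleq T'f\circ\alpha_X$. Pre-composing with $\gamma_X$ yields $T^\sharp f\pleq T'f\circ\alpha_X\circ\gamma_X$. Then reductivity $\alpha_X\gamma_X\pleq\id$, together with monotonicity of $T'f$ (an arrow of $\CLat$), gives $T'f\circ\alpha_X\circ\gamma_X\pleq T'f$.

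The structure is then the two-cell in the poset-enriched category of such abstractions: the identity-on-objects assignment $T^\sharp\Rightarrow T'$ with components $\id_{A_X}$ is (op)lax natural because of this inequality. It is unique because the enrichment is posetal, and its composite with $\alpha$ recovers $\alpha$. Once that framing is fixed, the remaining steps are routine.
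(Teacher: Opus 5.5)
Your proof is correct and follows the paper's argument essentially line for line. It uses the same insertion of $\id_{TY}\pleq\gamma_Y\circ\alpha_Y$ for oplax functoriality, the same bound $\alpha_X\circ\gamma_X\pleq\id_{A_X}$ for units, and the same extensivity step for oplax naturality of $\alpha$. For initiality it uses the same chain $T^\sharp f\pleq T'f\circ\alpha_X\circ\gamma_X\pleq T'f$, read (as the paper does) as the identity-on-objects transformation $T^\sharp\Rightarrow T'$ being oplax. The only check the paper makes that you leave implicit is that each $T^\sharp f$ is monotone, so that $T^\sharp$ really lands in \CLat; this is immediate as a composite of monotone maps.
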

As discussed in Section~\ref{subsec:preliminaries-categories}, transporting
categorical structure through GCs yields \emph{magmoid} structures.
Following the same pattern, Section~\ref{subsec:abstract-monads} will develop
\emph{abstract monads} by transporting order-enriched monads. In contrast
with~\cite{CatAbs-MFPS23}, we apply Proposition~\ref{prop:optimal-oplax-functor-by-GC}
(and variants of it)
starting directly from a \emph{semantic} category rather than the syntactic
category~$\syntcat$. This change is motivated by monadic interpretations of abstract
domains in the context of recursive programs.

\begin{example}[Interval abstraction from a semantic category]
  Let $\pp(-):\Set\to\CLat$ be the powerset functor.
  We can define an oplax functor $\ii_\iota(-):\Set\to\CLat$ directly on $\Set$:
  we impose $\ii_\iota(\Mem_\Vars) = \ii(\zz^\Vars)$ on the sets of interest,
  choosing arbitrary lattices elsewhere while ensuring an object-wise collection of GCs $\pp(-)\galois{}{}\ii_\iota(-)$
  (e.g., the trivial 1-point lattice where we do not wish to use intervals).
  Arrows are given by lifting: $\ii_\iota(f):=\gamma \circ \pp(f) \circ \alpha$.
  This flexibility is permitted because oplaxity does not require preserving isomorphisms.
  We thus obtain a pair of functors, the ``concrete'' $\pp(-)$ and an ``abstract'' oplax
  $\ii_\iota(-):\Set\to\CLat$, together with an additive oplax natural transformation
  $\alpha:\pp(-)\Rightarrow\ii_\iota(-)$.
  By postcomposing with the forgetful functor $U \colon \CLat\to\Set$
  sending lattices to their underlying set,
  these become $\Set$ endofunctors together with a choice of order
  structure on objects. This is the viewpoint later used to define abstract monads.
\end{example}

\begin{remark}[Unnaturality as Choice of Structure]
  \label{remark:unnatural-constructions}
  Oplax functors are \emph{unnatural}: unlike functors, they need not preserve isomorphisms.
  For example, the oplax functor $\ii_\iota(-):\Set\to\CLat$ above maps the isomorphic sets $\Mem_{\{x\}}$ and $\Mem_{\{x,y\}}$ to the non-isomorphic complete lattices $\ii(\zz)$ and $\ii(\zz^2)$. It is
  helpful to view $\ii_\iota$ as the composition of two ingredients:
  first, an analysis-dependent choice $(\iota,i)$ of a complete lattice $\iota(X)$
  and an embedding $i_X:X\to U(\iota(X))$ for each set $X$,
  without naturality restrictions\footnote{$\iota:\Set\to\CLat$ is a prefunctor,
    $U:\CLat\to\Set$ is the forgetful functor,
    $i:\Id_\Set \Rightarrow U\circ\iota$ is a prenatural transformation.};
  second, a canonical construction of ``intervals in a complete lattice'',
  described by a functor $\ii:\CLat\to\CLat$ of the form\footnote{
    Such arbitrary pairs of bounds $[x_{\text{lb}},x_{\text{ub}}]$, without the requirement
    $x_{\text{lb}}\leq x_{\text{ub}}$,
    are known as \emph{Kaucher} intervals.
    Collections of intervals in complete lattices, whether Kaucher or usual,
    form complete lattices.
  } $\ii(L)=(L,\geq) \times (L,\leq)$.
  From this perspective, the symmetry-breaking of $\ii_\iota$ is entirely concentrated
  in the choice of order structure $\iota$, much as $\nn$ and $\qq$
  are isomorphic as sets but not as ordered sets.
  We interpret this unnaturality as a witness
  that analyses may \emph{choose} algebraic structures relevant to the \emph{analysis} itself,
  without having to integrate this structure in concrete semantics, where it
  is irrelevant and may be considered artificial\footnote{
    Nevertheless, such symmetry-breaking could be eliminated  by choosing a semantic category recording more ``low-level'' details actually used by
    implementations to select abstract domains,
    e.g. here remembering in objects both the set of stores and the underlying set of
    variables.
  }.

The freedom offered by $(i,\iota)$ allows considering quite general forms of interval analyses,
exploiting other orders than the usual on $\zz$, e.g. divisibility order,
and diverse orders on non-numerical types. Other abstract domains can be seen to be parameterised by a choice of structure.
\end{remark}

\subsubsection{Parametric GCs are Functors $\catex \to \CLat_\sqcup$}
\label{subsubsec:param-gcs-are-functors}

Parametric GCs (Galois \emph{connectors} in~\cite{cousotpopl14galoiscalculus-short-with-doi})
can be used to modularly construct GCs from given parameters,
making them useful to build GCs in operator space; we study their properties.
While several common parametric GCs take other GCs as parameters
(i.e. additive maps $\alpha$, admitting right adjoints $\gamma$),
the state-space (size) abstractions of
Section~\ref{subsec:state-space-absop} are instead induced by
\emph{mere functions} $m : X \to \mathcal{M}X$, which admit no such
right adjoint. (In cost analysis, these functions are the \emph{metrics}
used to perform the size abstraction.)
This section presents the tool handling this situation,
and moreover incorporates it into a functor that encompasses
multiple previous constructions of GCs in function space.

First observe that ordinary GCs between complete lattices are just \emph{arrows} in
$\CLat_\sqcup$: an additive $\alpha\colon L\to L^\sharp$ in \CLat induces a unique
coadditive $\gamma\colon L^\sharp\to L$; the category $\CLat_\sqcup$ of complete lattices and
additive morphisms is \emph{isomorphic} to the category $\CLat_{\mathrm{GC}}$ of
complete lattices and GCs.
A \emph{parametric GC} is then a prefunctor $\catex\to\CLat_\sqcup$,
mapping \emph{arrows} of a parameter category $\catex$ to arrows of $\CLat_\sqcup$
(and implicitly objects to objects); for $\catex=\CLat_\sqcup$, the parameters are again GCs.

It is natural to ask whether parametric GCs are \emph{functorial}, i.e. how they behave
with respect to composition in the parameter category.
Classical examples (codomain abstraction
${\dot\alpha \colon f\mapsto \alpha \circ f}$; Example~\ref{ex:GC-lift-2GCs}, which yields a functor\footnote{
  The double covariance should be surprising to the reader thinking of the classical
  $\Hom(-,-):\Set^{op}\times\Set\to\Set$ functor, where the arrows $u$ and $v$ in $\Hom(u,v)(f) = v
  \circ f \circ u$ are not ``parallel'' but go in opposite directions.
  For complete lattices, schematically, this would give $(L_1^\sharp \to L_2)\galois{}{}(L_1\to L_2^\sharp)$,
  which is not as useful, as it mixes concrete and abstract worlds.
  The redeeming feature of $\CLat_\sqcup$, enabling the double covariance, is its strong structure:
  any additive map $\alpha$ comes attached with a coadditive map in the reverse direction.
}
${\CLat_\sqcup\times\CLat_\sqcup\to\CLat_\sqcup}$;
pre/post abstraction~\cite{Cousot21-book} as a functor
$\Set\to\CLat_\sqcup$, etc.) confirm that this is the case.
Rather than detailing these examples, we directly present
the \emph{(Kan) domain/codomain abstraction}, which simultaneously generalises these
function-space examples
(including the recently discovered \emph{domain abstraction}~\cite{absfun-STTT26}, which
generalises pre/post abstraction to posets beyond sets and to complete lattices $L$ of
truth values beyond $\bb$), and is the key ingredient of the forthcoming size
abstraction of Theorem~\ref{theorem:size-abstraction-kleisli-oplax-endofunctor}.

\begin{definition}[(Left) Kan extension along a monotone function]
  \label{def:left-kan-ext-order}
  Let $(X,\leq_X)$, $(A, \leq_A)$ be posets, $(L,\leq_L)$ be a complete lattice,
  and $m:X\to A$, $f:X\to L$ be monotone maps.
  The \emph{left Kan extension of $f$ along $m$} is the monotone map
  $\exists_m(f):A \to L$ defined as the left adjoint to precomposition $\gamma_m(-)=(-)\circ m$,
  computed as $\exists_m(f) \coloneqq a \mapsto \sqcup_L \{ f(x)\,|\, m(x) \leq_A a \}$.
\end{definition}

We have $\exists_{m \circ m'}(-) = (\exists_m\circ\exists_{m'})(-)$,
so the parametric GC $\Pos\to\CLat_\sqcup$ transforming $\Pos$ arrows into Galois connections is in
fact a \emph{functor} (Appendix, Proposition~\ref{prop:Kan-is-functorial}).
Moreover, if $X \galois{\alpha}{\gamma} A$ is a GC between complete lattices and
$f\in\Pos(X,L)$, then $\exists_\alpha(f) = f \circ \gamma$:
$\exists_m(-)$ generalises the ``precomposition by $\gamma$'' operation
underlying Example~\ref{ex:GC-lift-2GCs}
to situations in which no right adjoint to $m$ is available. This observation leads to the following theorem.

\begin{restatable}[(Kan) domain/codomain abstraction]{theorem}{TheoremKanHomDomainCodomainAbstraction}
  \label{theorem:kan-hom-domain-codomain-abstraction}
  The following is a functor
\begin{align*}
    K(-,-) : \Pos \times \CLat_\sqcup \to \CLat_\sqcup &&
    (X, L)      \mapsto \Pos(X, L)                  &&
    (m, \alpha) \mapsto \big(f \mapsto \alpha \circ \exists_m(f)\big).
  \end{align*}
\end{restatable}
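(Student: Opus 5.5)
We must check four things: $\Pos(X,L)$ is a complete lattice, each $K(m,\alpha)$ is well-defined and additive, $K$ preserves identities, and $K$ preserves composition. The plan is to factor $K(m,\alpha)$ as the composite of two simpler maps, $\exists_m\colon \Pos(X,L)\to\Pos(A,L)$ followed by postcomposition $\alpha\circ(-)\colon\Pos(A,L)\to\Pos(A,L')$. Each piece is checked separately, and then we use that the two pieces commute in the appropriate sense.

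\emph{Objects and well-definedness.} For a poset $X$ and a complete lattice $L$, $\Pos(X,L)$ with the pointwise order is a complete lattice: a pointwise join of monotone maps is monotone, so joins are computed pointwise. For $m\in\Pos(X,A)$, $\exists_m(f)$ is monotone in $a$, since the indexing set $\{x \mid m(x)\le_A a\}$ grows with $a$. Postcomposition by a monotone $\alpha$ preserves monotonicity. Hence $K(m,\alpha)$ maps $\Pos(X,L)$ into $\Pos(A,L')$.

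\emph{Additivity.} By Definition~\ref{def:left-kan-ext-order}, $\exists_m$ is a left adjoint (to $\gamma_m=(-)\circ m$), so it preserves all joins. One can also see this directly: a join of joins is a join, commuted pointwise. Postcomposition by the additive $\alpha$ preserves pointwise joins because joins in $\Pos(A,L)$ are pointwise. A composite of additive maps is additive, so $K(m,\alpha)\in\CLat_\sqcup$.

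\emph{Identities.} Here $\exists_{\id_X}(f)(a)=\sqcup\{f(x)\mid x\le a\}=f(a)$, using monotonicity of $f$: the join is attained at $x=a$. Together with $\id_L\circ f=f$, this gives $K(\id_X,\id_L)=\id$.

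\emph{Composition.} Take $(m,\alpha)\colon(X,L)\to(A,L')$ and $(m',\alpha')\colon(A,L')\to(B,L'')$. We need
\[\alpha'\circ\exists_{m'}(\alpha\circ\exists_m(f))=\alpha'\circ\alpha\circ\exists_{m'\circ m}(f).\]
This follows from two facts. The first is $\exists_{m'\circ m}=\exists_{m'}\circ\exists_m$, the functoriality of Kan extension stated before the theorem (Proposition~\ref{prop:Kan-is-functorial}). The second is an interchange law: for additive $\alpha$, $\exists_{m'}(\alpha\circ g)=\alpha\circ\exists_{m'}(g)$. The interchange is immediate pointwise, since $\exists_{m'}(\alpha\circ g)(b)=\sqcup\{\alpha(g(a))\mid m'(a)\le b\}=\alpha(\sqcup\{g(a)\mid m'(a)\le b\})$ by additivity of $\alpha$, including the empty join, so that $\alpha(\bot)=\bot$.

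If one prefers not to cite functoriality of $\exists$, it can be proved by adjunction. Right adjoints compose, and $\gamma_{m'\circ m}=\gamma_m\circ\gamma_{m'}$ because $\gamma_m(h)=h\circ m$. By uniqueness of left adjoints, $\exists_{m'\circ m}=\exists_{m'}\circ\exists_m$.

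The only genuinely delicate point is the interchange law. It is exactly where additivity of $\alpha$ is needed: mere monotonicity would give only $\exists_{m'}(\alpha\circ g)\le\alpha\circ\exists_{m'}(g)$, and $K$ would then only be oplax. This is why the second factor of the source category is $\CLat_\sqcup$ rather than $\CLat$. Everything else is routine.
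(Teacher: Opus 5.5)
Your proof is correct and takes essentially the paper's route. Identities come from $\exists_{\id_X}(f)=f$, and composition comes from $\exists_{m'\circ m}=\exists_{m'}\circ\exists_m$ (Proposition~\ref{prop:Kan-is-functorial}) together with the interchange $\alpha\circ\exists_{m'}(g)=\exists_{m'}(\alpha\circ g)$ for additive $\alpha$ (Lemma~\ref{lemma:ex-add-into}). Your explicit checks that $\Pos(X,L)$ is a complete lattice and that $K(m,\alpha)$ is additive fill in what the paper leaves implicit.

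There is one small slip in your closing aside. For merely monotone $\alpha$, your inequality gives $K(m',\alpha')\circ K(m,\alpha)\pleq K(m'\circ m,\alpha'\circ\alpha)$, which is \emph{lax} rather than oplax in the paper's convention. In that case $K(m,\alpha)$ also need not be additive, so the target would have to be $\CLat$.
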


\noindent As a direct consequence of functoriality, $K(\id_Y,\alpha)\circ K(m,\id_L) = K(m,\alpha) = K(m,\id_{L^\sharp})\circ K(\id_X,\alpha)$;
we obtain a commuting square of GCs: \emph{domain abstraction commutes with codomain abstraction}.

The following derived constructions will lead directly to the operator abstraction of
Section~\ref{subsec:state-space-absop}.

\begin{corollary}\label{corr:size-abstraction-bifunctor}
  If $T(-):\Pos\to\CLat_\sqcup$ is a functor, then so is $K(-,T-):\Pos\times\Pos\to\CLat_\sqcup$.
\end{corollary}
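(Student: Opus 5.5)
The plan is to write $K(-,T-)$ as a composite of two functors and use the fact that a composite of functors is a functor. The first is the product functor $\Id_\Pos\times T\colon\Pos\times\Pos\to\Pos\times\CLat_\sqcup$, sending $(X,Y)\mapsto(X,TY)$ and $(m,u)\mapsto(m,Tu)$. The second is the functor $K(-,-)\colon\Pos\times\CLat_\sqcup\to\CLat_\sqcup$ of Theorem~\ref{theorem:kan-hom-domain-codomain-abstraction}. By definition, $K(-,T-)$ is exactly $K\circ(\Id_\Pos\times T)$: on objects it gives $(X,Y)\mapsto\Pos(X,TY)$, and on arrows it gives $(m,u)\mapsto\big(f\mapsto Tu\circ\exists_m(f)\big)$.

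The key steps, in order, are as follows.

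First, check that $\Id_\Pos\times T$ is a functor. Composition and identities in $\Pos\times\Pos$ and in $\Pos\times\CLat_\sqcup$ are computed componentwise. Hence preservation of composition and identities follows directly from the facts that $\Id_\Pos$ is a functor and $T$ is a functor, which is our hypothesis.

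Second, invoke Theorem~\ref{theorem:kan-hom-domain-codomain-abstraction} for $K$.

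Third, conclude by the standard fact that functors compose. Explicitly, for composable pairs $(m,u)$ and $(m',u')$ we have
$K(m\circ m',T(u\circ u'))=K(m\circ m',Tu\circ Tu')=K(m,Tu)\circ K(m',Tu')$,
and $K(\id_X,T\id_Y)=K(\id_X,\id_{TY})=\id_{\Pos(X,TY)}$.

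There is essentially no obstacle here; the only point that needs care is that the types match. We must check that $T$ lands in $\CLat_\sqcup$, so that each $Tu$ is additive and is a legitimate second argument of $K$. We must also check that $T$ is regarded as a functor on the whole category $\Pos$, so that $Tu$ is defined for every monotone $u$. Both hold by hypothesis. As a side remark, each $K(m,Tu)$ is again additive, since it is the composite of the left adjoint $\exists_m$ with postcomposition by an additive map. This is already part of the content of Theorem~\ref{theorem:kan-hom-domain-codomain-abstraction}, so nothing new needs to be proved for it.
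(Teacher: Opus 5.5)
Your proposal is correct and matches the argument the paper leaves implicit: $K(-,T-)$ is the composite $K\circ(\Id_\Pos\times T)$ of the functor of Theorem~\ref{theorem:kan-hom-domain-codomain-abstraction} with a product of functors, so it inherits functoriality. Your type-checking remark, that $T$ must land in $\CLat_\sqcup$ so that each $Tu$ is a legitimate second argument of $K$, is the only point that needs care, and you handle it.
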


\begin{example}
  Corollary~\ref{corr:size-abstraction-bifunctor} applies to covariant predicate functors,
  which generalise $\pp(-)$ and $\ww_\infty(-)$. These are functors $T \colon \Pos\to\CLat_\sqcup$ given by
  $X \mapsto \Pos(X,\Omega)$, $f\mapsto (p \mapsto y \mapsto \sqcup_\Omega \{p(x)\,|\,f(x)\leq y\})$,
where $\Omega \in \CLat$ serves as a truth object.
\end{example}

\begin{corollary}\label{cor:size-abstraction-bifunctor}
  For any functor $T(-):\Set\to\CLat_\sqcup$
  we have a functor $K(-,T-):\Set\times\Set\to\CLat_\sqcup$.
  In particular, any pair $m_1:X_1\to A_1$, $m_2:X_2\to A_2$ of functions between
  sets yields a Galois connection
  $\big(X_1\to TX_1, \pleq_{TX_1}\big)\galois{}{}\big(A_1 \to TA_2, \pleq_{TA_2}\big).$
\end{corollary}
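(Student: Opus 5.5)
The plan is to reduce this to Theorem~\ref{theorem:kan-hom-domain-codomain-abstraction} by regarding sets as discretely ordered posets. Let $D:\Set\to\Pos$ be the functor sending a set $X$ to $(X,=)$ and a function to itself. This is a functor because every function between discrete posets is monotone, and $D$ preserves identities and composition on the nose. Given the functor $T:\Set\to\CLat_\sqcup$, the product $D\times T:\Set\times\Set\to\Pos\times\CLat_\sqcup$ is again a functor. I then define
\[
K(-,T-) \coloneqq K(-,-)\circ(D\times T) : \Set\times\Set\to\CLat_\sqcup .
\]
Functoriality follows at once from Theorem~\ref{theorem:kan-hom-domain-codomain-abstraction}, because a composite of functors is a functor. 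This is the same argument that gives Corollary~\ref{corr:size-abstraction-bifunctor}, but here $T$ is only required to be defined on $\Set$ rather than on $\Pos$, so no monotonicity of $T$ on arrows is needed.

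Next I will unfold what this functor does on objects and arrows. On objects, $\Pos(D X_1, T X_2)$ is the set of \emph{all} functions $X_1\to TX_2$, since the domain is discrete. It is ordered pointwise by $\pleq_{TX_2}$, and it is a complete lattice with joins computed pointwise. On an arrow $(m_1,m_2)$, the order on $DA_1$ is equality, so the Kan extension of Definition~\ref{def:left-kan-ext-order} specialises to $\exists_{m_1}(f)(a)=\sqcup\{f(x)\mid m_1(x)=a\}$. The abstraction map is therefore
\[
\alpha_{m_1,m_2}(f)= a\mapsto T(m_2)\big(\textstyle\sqcup_{TX_2}\{f(x)\mid m_1(x)=a\}\big),
\]
which is an arrow of $\CLat_\sqcup$ and in particular additive.

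For the second sentence of the statement, I read the left-hand side as $(X_1\to TX_2,\pleq_{TX_2})$; the printed $TX_1$ appears to be a typo, since the map $m_2:X_2\to A_2$ must act on the codomain. The Galois connection then comes from the isomorphism $\CLat_\sqcup\cong\CLat_{\mathrm{GC}}$ recalled earlier: an additive map between complete lattices has a unique right adjoint. I expect the concretisation to be given explicitly by $g\mapsto \gamma_{T m_2}\circ g\circ m_1$, where $\gamma_{Tm_2}$ is the right adjoint of $T(m_2)$. This can be confirmed by the chain of equivalences
\[
\exists_{m_1}f\pleq h \iff f\pleq h\circ m_1, \qquad T(m_2)\circ u\pleq g \iff u\pleq \gamma_{Tm_2}\circ g,
\]
composed together.

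The only real point of care is checking that the discrete-order embedding interacts correctly with the Kan extension, i.e.\ that $m(x)\leq a$ collapses to $m(x)=a$. I also need that pointwise function spaces over discrete domains are complete lattices. Both are routine, so I do not expect a genuine obstacle.
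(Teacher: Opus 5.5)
Your proposal is correct and follows the paper's own one-line argument. The paper embeds $\Set$ into $\Pos$ as discrete posets and applies Theorem~\ref{theorem:kan-hom-domain-codomain-abstraction}; your composite $K(-,-)\circ(D\times T)$ makes that step explicit, and the Galois connection then comes from $\CLat_\sqcup\cong\CLat_{\mathrm{GC}}$. Your reading of the left-hand side as $(X_1\to TX_2,\pleq_{TX_2})$ is the right one, and your explicit concretisation $g\mapsto\gamma_{Tm_2}\circ g\circ m_1$ is correct.
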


For the last corollary, we include $\Set$ as a subcategory of $\Pos$ by viewing
sets $X$ as discrete posets $(X,=)$.
The functor $T(-)$ can be instantiated to~$\pp(-)$, $\ww_\infty(-)$, $\pp\costm(-)$,
$\counterm\pp\costm(-)$, etc.\ of Section~\ref{sec:operator-semantics},
each of which sends \Set maps to \emph{additive} maps for the chosen order structure.

\subsection{State-space Abstraction of Operators, by Mere Functions}
\label{subsec:state-space-absop}

This section constructs the \emph{leftmost} GC of the pipeline announced at the
start of Section~\ref{sec:abs-op-for-abs-comp}:
we abstract the state spaces $\Mem_\Vars$, leaving the monad $T$ untouched.
The distinctive feature of this step is that it is induced by \emph{mere
functions} $m_X : X \to \mathcal{M}X$ with no right adjoint required: a
metric need not admit a meaningful concretisation back to concrete data (there
is no canonical program term of a prescribed size).
Concretely, Theorem~\ref{theorem:size-abstraction-kleisli-oplax-endofunctor}
below applies the GCs of Corollary~\ref{cor:size-abstraction-bifunctor} to each
Kleisli homset, and shows that, taken together, these GCs are
\emph{compatible with Kleisli composition}: they assemble into an oplax
endofunctor of $\Kleisli_T$.

\begin{theorem}
  \label{theorem:size-abstraction-kleisli-oplax-endofunctor}
  Consider an arbitrary mapping $\mathcal{M}:\Ob(\Set)\to\Ob(\Set)$,
  and a collection of functions $m = \big(m_X : X \to \mathcal{M}X \big)_{X: \Set}$.
Suppose that $(T(-), \eta_T, \circ_T)$ is a $\Set$-monad,
  order-enriched pointwise via a choice of $\CLat$ structure on $T$-objects,
  and that additionally all arrows $Tf$ are \emph{additive} for this \CLat structure
  (in other words, $T$ factorises as $T=U \circ T_s$ via a functor
  $T_s:\Set\to\CLat_\sqcup$).

  Then, $\mathcal{M}$ induces an \emph{oplax} endofunctor in the corresponding Kleisli category, given by
  \begin{align*}
    \mathcal{M} :
      \Kleisli_T  \to \Kleisli_T &&
              X   \mapsto \mathcal{M}X &&
      (f:X\to TY) \mapsto K(m_X, T m_Y)(f),
  \end{align*}
whose action on arrows is additive.
  Moreover, $m:\Id_{\Kleisli_T} \Rightarrow \mathcal{M}$ is an oplax natural transformation.
\end{theorem}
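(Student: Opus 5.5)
The plan is to first unfold the action of $\mathcal{M}$ on arrows into an explicit formula, and then prove the oplax inequalities pointwise using only monotonicity of $\circ_T$ and the monad axioms.

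\textbf{Explicit formula and additivity.} View each set $X$ as a discrete poset. The left Kan extension of Definition~\ref{def:left-kan-ext-order} then becomes $\exists_{m_X}(f)(a)=\bigsqcup\{f(x)\mid m_X(x)=a\}$. Since $Tm_Y$ is additive, this gives
\[ \mathcal{M}(f)(a)=\bigsqcup\{Tm_Y(f(x))\mid x\in X,\ m_X(x)=a\}. \]
Additivity of $f\mapsto\mathcal{M}(f)$ follows immediately: joins of arrows are pointwise, they commute with the outer join, and $Tm_Y$ is additive. (It is also the composite of additive maps, by Theorem~\ref{theorem:kan-hom-domain-codomain-abstraction}.) The same formula yields the basic inequality
\[ Tm_Y\circ f \;\pleq\; \mathcal{M}(f)\circ m_X , \qquad (\ast) \]
because $Tm_Y(f(x))$ is one of the terms in the join defining $\mathcal{M}(f)(m_X(x))$.

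\textbf{Monad identities.} I will use the action on arrows induced by the monad, $Tu=(\eta_T\circ u)\circ_T\id$. From associativity and unitality of $\circ_T$ together with axiom~(4), I expect the following standard identities:
\begin{enumerate}
  \item[(i)] $Tu\circ(g\circ_T f)=(Tu\circ g)\circ_T f$;
  \item[(ii)] $h\circ_T(Tu\circ f)=(h\circ u)\circ_T f$;
  \item[(iii)] $Tu\circ\eta_{T,X}=\eta_{T,Y}\circ u$.
\end{enumerate}
Each one reduces to rewriting $Tu\circ k$ as $(\eta_T\circ u)\circ_T k$ and reassociating. This is the only step where associativity, i.e.\ the full monad and not merely a premonad, is needed.

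\textbf{Unit and composition.} For the unit, (iii) gives $\mathcal{M}(\eta_{T,X})(a)=\bigsqcup\{\eta_T(m_X(x))\mid m_X(x)=a\}$. This is either $\eta_T(a)$ or $\bot$, according to whether the fiber over $a$ is nonempty, so in both cases $\mathcal{M}(\eta_{T,X})\pleq\eta_{T,\mathcal{M}X}$.

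For composition, fix $x$ with $m_X(x)=a$. Then the following chain holds:
\begin{align*}
Tm_Z((g\circ_T f)(x)) &= ((Tm_Z\circ g)\circ_T f)(x) && \text{by (i)}\\
&\leq ((\mathcal{M}(g)\circ m_Y)\circ_T f)(x) && \text{by $(\ast)$ and monotonicity}\\
&= (\mathcal{M}(g)\circ_T(Tm_Y\circ f))(x) && \text{by (ii)}\\
&\leq (\mathcal{M}(g)\circ_T(\mathcal{M}(f)\circ m_X))(x) && \text{by $(\ast)$ and monotonicity}\\
&= (\mathcal{M}(g)\circ_T\mathcal{M}(f))(a) && \text{by axiom (4)}.
\end{align*}
Taking the join over the fiber of $a$ gives $\mathcal{M}(g\circ_T f)\pleq\mathcal{M}(g)\circ_T\mathcal{M}(f)$. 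Only monotonicity of $\circ_T$ is used here, not additivity.

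\textbf{Oplax naturality.} The component of $m$ at $X$ is the Kleisli arrow $\eta_T\circ m_X$. The required inequality is
\[ (\eta_T\circ m_Y)\circ_T f \;\pleq\; \mathcal{M}(f)\circ_T(\eta_T\circ m_X). \]
The left side equals $Tm_Y\circ f$ by the definition of $T$ on arrows. The right side equals $\mathcal{M}(f)\circ m_X$ by unitality together with axiom~(4). So the claim is exactly $(\ast)$.

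The main obstacle I expect is bookkeeping rather than mathematics. One must make precise that the functor $T_s$ underlying the additivity hypothesis acts on arrows as the monad-induced $Tu$, so that identities (i)--(iii) are available. If that identification is not automatic, I would state it as the intended reading of ``$T$ factorises as $U\circ T_s$''.
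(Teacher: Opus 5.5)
Your proposal is correct and follows essentially the paper's proof: the same ingredients appear, namely the inequality $(\ast)$ (which the paper obtains from $f \pleq \exists_{m_X}(f)\circ m_X$ and monotonicity of $Tm_Y$), the compatibility identities (i)--(iii), naturality of $\eta_T$ for the unit, and the adjunction $\exists_{m_X}\dashv(-)\circ m_X$ to dispose of the fibre. The differences are presentational. You work pointwise with the explicit fibrewise join and transpose at the end, whereas the paper manipulates $\exists_{m_X}$ element-free via its unit/counit lemmas. Your reading of $T_s$ as acting by the monad-induced $Tu$ is exactly what the paper implicitly assumes.
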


See the oplax naturality of $m$ as soundness: every concrete computation
$f : X \to TY$ is simulated, up to~$\pleq$, by its abstraction
$\mathcal{M}(f)$ on abstract states.

\begin{corollary}
  \label{corollary:size-abstraction-gives-nice-GCs}
  $\mathcal{M}$ provides a collection of GCs
  ${(\Mem \!\shortrightarrow\! T(\Mem_{\Vars}))\leftrightarrows(\Mem^\sharp\!\shortrightarrow\! T(\Mem^\sharp))}$,
where $\Mem_\Vars^\sharp:=\mathcal{M}(\Mem_\Vars)$,
  abstracting all the domains used for function denotations in Fig.~\ref{fig:opsem-ir-num-monadic}.
This further lifts
  to GCs abstracting all the objects $\semop{\Prog}$, $\semstmt{{\tt stmt}}$, $\semaexp{{\tt aexp}}$ and $\sembexp{{\tt bexp}}$.
\end{corollary}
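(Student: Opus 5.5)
The first claim is a direct instance of Corollary~\ref{cor:size-abstraction-bifunctor}. The plan is to apply it pointwise to each pair of memory spaces, and then to propagate the resulting Galois connections through products and the \End-lifting of Example~\ref{ex:GC-lift-2GCs}.

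\emph{Step 1: function denotations.} By hypothesis $T$ factorises as $U\circ T_s$ with $T_s:\Set\to\CLat_\sqcup$ a functor. For any two variable sets $\Vars,\Vars'$, take $m_1 := m_{\Mem_\Vars}$ and $m_2 := m_{\Mem_{\Vars'}}$ in Corollary~\ref{cor:size-abstraction-bifunctor}. This gives the arrow $K(m_{\Mem_\Vars}, T_s m_{\Mem_{\Vars'}})$ of $\CLat_\sqcup$, which is exactly the action of $\mathcal{M}$ on Kleisli homsets in Theorem~\ref{theorem:size-abstraction-kleisli-oplax-endofunctor}. Since $\CLat_\sqcup\cong\CLat_{\mathrm{GC}}$, this arrow is a Galois connection
\[(\Mem_\Vars\to T\Mem_{\Vars'},\pleq)\galois{}{}(\Mem^\sharp_\Vars\to T\Mem^\sharp_{\Vars'},\pleq).\]
Its concretisation is $g\mapsto \gamma_{T m}\circ g\circ m_{\Mem_\Vars}$, where $\gamma_{Tm}$ is the right adjoint of the additive map $T m_{\Mem_{\Vars'}}$. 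Because sets are viewed as discrete posets, $\Pos(X,L)$ is the full function space, so these are exactly the domains used in Fig.~\ref{fig:opsem-ir-num-monadic}. They are complete lattices under the pointwise order inherited from the \CLat structure on $T$-objects.

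\emph{Step 2: products.} Coordinate-wise products of Galois connections between complete lattices are again Galois connections, with $\alpha$ and $\gamma$ acting componentwise and additivity preserved. Taking the product over $f\in\Funs$ (and over loop nodes, if present) yields a GC
\[\prod_f(\Mem_{\Vars_{f,in}}\to T\Mem_{\Vars_{f,out}}) \galois{}{} \prod_f(\Mem^\sharp_{\Vars_{f,in}}\to T\Mem^\sharp_{\Vars_{f,out}}).\]
Call it $C_1\galois{\alpha_1}{\gamma_1}A_1$.

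\emph{Step 3: lifting to semantic objects.} For $\semop{\Prog}$, apply the \End-lifting of Example~\ref{ex:GC-lift-2GCs} to $C_1\galois{}{}A_1$. This gives the GC $\Phi\mapsto\alpha_1\circ\Phi\circ\gamma_1$ between $\End(C_1)$ and $\End(A_1)$.

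For $\semstmt{{\tt stmt}}$, $\semexpr{{\tt expr}}$ and $\sembexp{{\tt cond}}$, apply the general two-GC form of Example~\ref{ex:GC-lift-2GCs}. The first GC is $C_1\galois{}{}A_1$. The second is the Step 1 GC for the relevant target:
\begin{itemize}
\item for statements, $(\Mem_{\Vars_{\ell_{in}}}\to T\Mem_{\Vars_{\ell_{out}}})$;
\item for expressions, $(\Mem_{\Vars_\ell}\to T\semtypes{\tau})$, obtained by instantiating $m_2:=m_{\semtypes{\tau}}$;
\item for conditions, $(\Mem_{\Vars_\ell}\to T\bb)$, obtained by instantiating $m_2:=m_{\bb}$.
\end{itemize}
These lifts live in $\Pos(C_1,C_2)$, i.e.\ among monotone maps. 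Proposition~\ref{prop:pos-monads-monot} guarantees that all the semantic objects are monotone in $\funvec$, so they are legitimate elements of these lattices.

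\emph{Main obstacle.} Steps 2 and 3 are routine. The only point needing care is checking that Example~\ref{ex:GC-lift-2GCs} really applies, i.e.\ that both sides are complete lattices and that both abstractions are additive.
\begin{itemize}
\item The spaces are complete lattices because the \CLat structure on $T$-objects is lifted pointwise and then to products.
\item Additivity of each $K(m,Tm)$ comes from its being an arrow of $\CLat_\sqcup$, which Theorem~\ref{theorem:kan-hom-domain-codomain-abstraction} provides.
\item Additivity of $T m$ itself is exactly the factorisation hypothesis through $\CLat_\sqcup$. This is where the assumption that all arrows $Tf$ are additive is used.
\end{itemize}
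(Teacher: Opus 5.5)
Your proposal is correct and follows the same route as the paper, which gives no separate proof of this corollary. The homset action $K(m_X,T m_Y)$ of $\mathcal{M}$ is an arrow of $\CLat_\sqcup$ and hence a Galois connection; it is then propagated coordinatewise to products and through the lifting of Example~\ref{ex:GC-lift-2GCs} (the \End-lifting for $\semop{\Prog}$, the two-GC form for the other semantic objects), and your explicit concretisation $g\mapsto\gamma\circ g\circ m$ and appeal to Proposition~\ref{prop:pos-monads-monot} only spell out what the paper takes for granted.
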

For example, for $\semop{\Prog}$ we get a GC
\begin{equation*}
    \textstyle\End\big(\prod_{f\in\mathcal{F}}\Mem_{\Vars_{f,in}} \to T(\Mem_{\Vars_{f,out}})\big)
    \galois{\alpha_\Phi}{\gamma_\Phi}
    \End\big(\prod_{f\in\mathcal{F}}\Mem^\sharp_{\Vars_{f,in}} \to T(\Mem^\sharp_{\Vars_{f,out}})\big).
  \end{equation*}

This is an abstraction between \emph{programs} operating on concrete stores
$\Mem$ and ``abstract programs'' operating on abstract stores $\Mem^\sharp$,
so that $\alpha_\Phi(\semop{\Prog})$ produces the \emph{optimal abstract program}
corresponding to $\Prog$.
Notice that, unlike classical abstract interpretation, we do \emph{not} require a Galois
connection between $\Mem$ and $\Mem^\sharp$, which are allowed to be mere sets.

Oplaxity of $\mathcal{M}$ gives, in particular, $\mathcal{M}(f \circ_T g) \pleq \mathcal{M}(f) \circ_T \mathcal{M}(g)$:
from the practical viewpoint of abstract compiler design, the $\circ_T$ operation
on $\Mem$-functions can be \emph{soundly overapproximated} by the $\circ_T$ operation on
$\Mem^\sharp$-functions.
In other words, there is a clear \emph{compositional} path to produce sound abstractions
$\semop{\Prog}^\sharp$ of the \emph{optimal abstract compilation} $\alpha_\Phi(\semop{\Prog})$ of
$\Prog$, where, e.g., abstract recursive calls can be simply handled by $\circ_T$.
Using another vocabulary, (effectful) computations on $\Mem$ can be \emph{simulated} by computations on $\Mem^\sharp$.

\subsection{Abstract Composition with Abstract Monads}
\label{subsec:abstract-monads}

This section constructs the \emph{rightmost} GC of the pipeline of
Section~\ref{sec:abs-op-for-abs-comp}: having abstracted state spaces
(Section~\ref{subsec:state-space-absop}), we now abstract the monad $T$
itself, e.g. moving from powersets of sizes to \emph{intervals} of sizes.
Beyond the existence of the resulting GCs, the two main takeaways are
algebraic: abstract Kleisli composition is \emph{non-associative}, but admits
a \emph{best bracketing} (Proposition~\ref{prop:best-bracketing}), and it is
\emph{strictly more precise} than the associative composition of abstract
transfer functions
(Remark~\ref{rem:example-monadic-more-precise}).

Similarly to the situation of Proposition~\ref{prop:optimal-oplax-functor-by-GC},
the class of (order-enriched) monads is not closed under transport through object-wise GCs.
Searching for a monadic analog of the class of \emph{oplax functors} leads us to the
following definition, building upon Definition~\ref{def:premonad} of \emph{premonads}.
For simplicity, we restrict ourselves to GCs between complete lattices
that are precise enough to represent ``singletons'' (more precisely, units $\eta_{T^\natural}$),
and that are \emph{Galois insertions} (i.e. $\alpha\circ\gamma=\id$)\footnote{The Galois
insertion hypothesis could be avoided, at the cost of oplax weakenings of the unitality of
$\widetilde{\Kleisli}_{T^\sharp}$.}.

\begin{definition}[Abstract monads]
  \label{def:abstract-monad}
  An \emph{abstract monad} is a \Set-premonad\footnote{
    Recall Definition~\ref{def:premonad}: it is a $T\colon\Ob(\Set)\to\Ob(\Set)$ and a structure of
    unital Kleisli magmoid with
    $(g \circ_T f)\circ u = g\circ_T (f \circ u)$.
  } $(T(-),\eta_T,\circ_T)$ which is order-enriched pointwise via a \CLat structure on
  $T$-objects.
\end{definition}

In other words, beside order-enrichment, an abstract monad is a monad in
which associativity of $\circ_T$ has been dropped; a monadic analogue, in
spirit, of passing from functors to oplax functors. The next theorem justifies
this weakening: it is what survives transport through GCs.

\begin{theorem}[Optimal abstract monad by transfer of object-wise GC]
  \label{theorem:optimal-oplax-monad-by-GC}
  Let $(T(-),\eta_T,\circ_T)$ be an abstract monad,
  and $\big(TX \galois{\alpha_X}{\gamma_X} A_X\big)_{X\in\Ob(\Set)}$ be an object-wise
  collection of GCs between complete lattices.
Suppose, furthermore, that these GCs are precise enough to represent units,
  in the sense $\gamma_X\circ\alpha_X\circ\eta_{T,X} = \eta_{T,X}$ for all $X$,
  and that these are Galois insertions, i.e. $\alpha_X\circ\gamma_X=\id$.

  Then, we obtain a structure of abstract monad
  $(T^\sharp(-),\eta_{T^\sharp},\circ_{T^\sharp})$
  by setting
  $\eta_{T^\sharp,X}~:=~\alpha_X \circ \eta_{T,X}$ and
  $g \circ_{T^\sharp} f~:=~\alpha_Z \circ \big((\gamma_Z \circ g) \circ_T (\gamma_Y \circ f)\big)$
  for all $f\in\Pos(X,T^\sharp Y)$, $g\in\Pos(Y,T^\sharp Z)$.
Moreover, $\alpha$ induces the following (additive, normal) oplax functor between the corresponding
  Kleisli (unital) magmoids.
\begin{align*}
    \mathcal{A}:
       \widetilde{\Kleisli_T} \Rightarrow \widetilde{\Kleisli_{T^\sharp}} &&
                           X \mapsto X                                 &&
                 (f:X\to TY) \mapsto \alpha\circ f.
  \end{align*}
\end{theorem}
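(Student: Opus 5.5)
The plan is to check the axioms of an abstract monad (Definition~\ref{def:abstract-monad}) one at a time for $(T^\sharp,\eta_{T^\sharp},\circ_{T^\sharp})$, with $T^\sharp X := A_X$ ordered by its given complete lattice structure and Kleisli homsets ordered pointwise. Then I verify the oplax functor laws for $\mathcal{A}$. Throughout I use three facts:
\begin{itemize}
\item the GC identities $\alpha\gamma=\id$ (insertion) and $\id\pleq\gamma\alpha$;
\item additivity of each $\alpha_X$;
\item monotonicity of $\circ_T$ in both arguments, since $T$ is order-enriched.
\end{itemize}

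\textbf{Order-enrichment.} Monotonicity of $\circ_{T^\sharp}$ in both arguments is immediate: $g\circ_{T^\sharp} f$ is a composite of monotone operations, namely postcomposition by $\gamma_Y,\gamma_Z$, then $\circ_T$, then postcomposition by $\alpha_Z$.

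\textbf{Axiom (4).} For $u:W\to X$ we compute
\[
(g\circ_{T^\sharp} f)\circ u=\alpha_Z\circ\big(((\gamma_Z g)\circ_T(\gamma_Y f))\circ u\big)=\alpha_Z\circ\big((\gamma_Z g)\circ_T(\gamma_Y f\circ u)\big)=g\circ_{T^\sharp}(f\circ u).
\]
The middle step is axiom (4) for $T$. This only uses that postcomposition commutes with precomposition.

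\textbf{Unitality (5).} This is where both extra hypotheses are used, and I expect it to be the main obstacle. The hypothesis $\gamma_Y\alpha_Y\eta_{T,Y}=\eta_{T,Y}$ gives $\gamma_Y\circ\eta_{T^\sharp,Y}=\eta_{T,Y}$. Hence
\[
\eta_{T^\sharp,Y}\circ_{T^\sharp} f=\alpha_Y\circ\big(\eta_{T,Y}\circ_T(\gamma_Y f)\big)=\alpha_Y\gamma_Y f=f,
\]
using unitality of $T$ and then the insertion property. Symmetrically,
\[
f\circ_{T^\sharp}\eta_{T^\sharp,X}=\alpha_Y\circ\big((\gamma_Y f)\circ_T\eta_{T,X}\big)=\alpha_Y\gamma_Y f=f.
\]
The subtle point is to use the unit hypothesis in the form $\gamma\circ\eta^\sharp=\eta$, rather than merely $\eta\le\gamma\alpha\eta$. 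Without equality only an oplax unit law would hold, which is what the footnote alludes to.

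\textbf{The functor $\mathcal{A}$.} Normality is the identity $\mathcal{A}(\eta_{T,X})=\alpha_X\eta_{T,X}=\eta_{T^\sharp,X}$, which holds by definition. For oplaxity, take $f:X\to TY$ and $g:Y\to TZ$. Since $f\pleq\gamma_Y\alpha_Y f$ and $g\pleq\gamma_Z\alpha_Z g$, monotonicity of $\circ_T$ and of $\alpha_Z$ gives
\[
\mathcal{A}(g\circ_T f)=\alpha_Z(g\circ_T f)\pleq\alpha_Z\big((\gamma_Z\alpha_Z g)\circ_T(\gamma_Y\alpha_Y f)\big)=\mathcal{A}(g)\circ_{T^\sharp}\mathcal{A}(f).
\]
Additivity of $\mathcal{A}$ on arrows holds because postcomposition by an additive $\alpha$ preserves pointwise joins. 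Joins in the homsets are pointwise, since homsets are ordered pointwise via $\CLat$ structures on objects.

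Finally, I note that the homsets should be taken as all functions $X\to A_Y$ (or monotone maps when $X$ is discrete), so that the pointwise order forms a complete lattice. This gives the required abstract monad structure.
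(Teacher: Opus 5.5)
Your proof is correct and follows the paper's own argument essentially step by step. Monotonicity of $\circ_{T^\sharp}$ comes from that of $\alpha$, $\gamma$ and $\circ_T$, and the compatibility axiom from $\circ/\circ_T$ compatibility. Both unit laws come from $\gamma\circ\alpha\circ\eta_T=\eta_T$, unitality of $\circ_T$ and $\alpha\circ\gamma=\id$. For $\mathcal{A}$, normality holds by definition, oplaxity via $\id\pleq\gamma\circ\alpha$, and additivity as codomain abstraction.

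The only inaccuracy is in your side remark: the paper's footnote attributes the weakened unit laws to dropping the Galois insertion hypothesis, which would give $\eta_{T^\sharp}\circ_{T^\sharp} f=\alpha\circ\gamma\circ f\pleq f$, not to dropping the unit-representation hypothesis. Dropping the latter also yields only an inequality, in the opposite direction.
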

\noindent As in Corollary~\ref{corollary:size-abstraction-gives-nice-GCs}, by composing
$\mathcal{A}:\Kleisli_{T}\Rightarrow\widetilde{\Kleisli_{T^\sharp}}$ with the additive
oplax functor $\mathcal{M}$ of
Theorem~\ref{theorem:size-abstraction-kleisli-oplax-endofunctor},
we can further abstract
programs (e.g. moving from powersets of sizes to intervals of sizes).
As is common in abstract interpretation, we then obtain overapproximating abstract
operator semantics almost mechanically, by simply replacing $\circ_T$ by $\circ_{T^\sharp}$ throughout
figures, yielding sound abstract compilers\footnote{At least as a fallback: optimal abstract compilation
must avoid a too compositional approach, which can lose precision.}.

In some sense, the GCs
$(\Mem^\sharp_{\Vars_{f,in}} \to T(\Mem^\sharp_{\Vars_{f,out}}))\leftrightarrows(\Mem^\sharp_{\Vars_{f,in}} \to T^\sharp(\Mem^\sharp_{\Vars_{f,out}}))$
provided by Theorem~\ref{theorem:optimal-oplax-monad-by-GC},
then lifted to abstractions of $\semop{\Prog}$, $\semstmt{{\tt stmt}}$, etc., are very simple and
extremely natural: just GCs on objects, lifted pointwise, to products, and $\End$-lifted.
However, the value of the abstract monadic viewpoint does not lie merely in the \emph{existence}
of these GCs; it is threefold. First, it provides insights on the \emph{algebraic properties of abstract composition}
$\circ_{T^\sharp}$: because of non-associativity, the implementer must \emph{choose} between
possible bracketings when adapting Fig.~\ref{fig:opsem-ir-num-monadic}. And, as we show below,
there are cases where a \emph{best bracketing} can always be identified
(Proposition~\ref{prop:best-bracketing}), providing guidance to the abstract compiler implementer.
Second, this non-associative composition in monadic function spaces
$(X\to T^\sharp Y)$ is \emph{strictly more precise} than associative composition in function spaces
$\Pos(T^\sharp X, T^\sharp Y)$ (that in abstract interpretation
reflect usual abstract transfer functions, at least for cases like
$T^\sharp=\ii_\iota$ that avoid additional effects like costs and counters); see
Remark~\ref{rem:example-monadic-more-precise}.
Third, more conceptually, it promotes classical state abstractions into \emph{abstract effects},
with modularity advantages: thanks to the
(pre)monad transformers of Section~\ref{subsec:eff-cost-monad-transformers},
our results apply to abstract monads such as $\counterm\ii_\iota\costm(-)$,
enabling abstract compilation to generalised recurrence equations exploiting cost and counters.

To illustrate, let us provide details on the simple case of \emph{intervals} as an abstract effect.

\begin{example}[Details of the $\ii_\iota(-)$ construction and GC with $\pp(-)$]
  Recall that we define $\ii_\iota(-):\Ob(\Set)\to\Ob(\Set)$, equipped with a \CLat order structure
  on objects, by composition of a selection of orders $(\iota,i)$, of the ``intervals in complete
  lattice'' functor $\ii \colon \CLat\to\CLat$, and of the forgetful functor $U:\CLat\to\Set$.
For each complete lattice $L$, there is a GC $\pp(U(L)) \leftrightarrows \ii(L)$,
  given by $\alpha^\pp_L(E) = [\sqcap E,\sqcup E]$ and
  $\gamma^\pp_L([x_{lb},x_{ub}])=\{x\in L\,|\, x_{lb} \leq_L x \leq_L x_{ub}\}$. Moreover, since $\pp(-)$ is additive on arrows, for each set $X$, there is a GC
  $\pp(X)\leftrightarrows\pp((U\circ\iota)(X))$,
  which takes a concrete set
  $E\subseteq X$ to its direct image $\pp(i_X)(E)$ by the ``embedding'' $i_X$;
  the right adjoint to $\pp(i_X)$ is the preimage operation $(i_X)^{-1}$. By composition, this leads to an object-wise collection of GCs $(\pp(X) \galois{\alpha_X}{\gamma_X} \ii_\iota(X))_{X\in\Ob(\Set)}$.
To ensure that these are Galois insertions, one can follow the construction of ``ordinary''
  intervals, and consider only intervals in the image of the lower closure
  $\alpha\circ\gamma$: this collapses to a single $\bot_{\ii(X)}$ all ``improper''
  intervals (those where $\gamma(I)=\varnothing$).

\end{example}

\begin{example}[Interval Abstract Monad]
  Consider the collection of GCs
  $(\pp(-)\galois{\alpha}{\gamma} \ii_\iota(-))$ described above,
  now assumed to be Galois insertions,
  and consider the (concrete) order-enriched monad structure $(\pp(-),\eta_\pp,\circ_\pp)$.
Recall that $\eta_\pp$ produces singletons, so it suffices to assume that all ``embeddings'' $i_X:X\to(U\circ\iota)(X)$ are injective to ensure that all singletons $\{x\}$ can be represented
  by $[i_X(x),i_X(x)]$, i.e. that $\gamma \circ \alpha \circ \eta_\pp = \eta_\pp$.
  Then, Theorem~\ref{theorem:optimal-oplax-monad-by-GC} yields
  the \emph{interval abstract monad} $(\ii_\iota(-),\eta_\ii,\circ_\ii)$,
  where $\eta_{\ii,X}(x) = [i_X(x),i_X(x)]$,
  and where, for $f: X\to\ii_\iota(Y)$ and $g:Y\to\ii_\iota(Z)$, abstract composition is given by
  \[
    (g\circ_\ii f)(x) = \textstyle{\bigsqcup}_{\iota(Z)} \big\{ g(y) \,\big|\, y\in\iota(Y),\, f_{lb}(x) \leq_{\iota(Y)} y \leq_{\iota(Y)} f_{ub}(x) \big\}.
  \]
\end{example}

Note that applying Proposition~\ref{prop:optimal-oplax-functor-by-GC} would instead produce the
  ``intervalisation'' operation that sends a function $f:X\to Y$ in \Set to
  \[
    \ii_\iota(f)
    = \alpha\circ\pp(f)\circ\gamma
    = \Big([x_{lb},\,x_{ub}]
           \mapsto
           \textstyle{\bigsqcup}_{\iota(Y)} \big\{ f(x) \,\big|\, x\in\iota(X),\, x_{lb}\leq_{\iota(X)} x \leq_{\iota(X)} x_{ub} \big\}
      \Big).
  \]
  The oplax functor and abstract monad structures of $\ii_\iota$ are compatible,
  see Prop.~\ref{prop:oplax-functor-and-abstract-monad-compat} in Appendix.

We now substantiate the two algebraic claims made above. The following
counter-example shows that $\circ_\ii$ is indeed non-associative, so an
abstract compiler \emph{must} choose a bracketing.
Proposition~\ref{prop:best-bracketing} shows that this choice is not arbitrary:
one bracketing is always at least as precise as the other.

\begin{example}[Non-associativity of $\circ_\ii$]
  \label{ex:non-assoc}
  Let $\iota$ take the usual order on $\zz$,
  let ${f,g,h \colon \zz\to\ii(\zz)}$ be the functions $f \coloneqq x\mapsto [-1,1]$,
  $g \coloneqq y\mapsto \ite(y<0, [-1,-1], [1,1])$,
  and $h \coloneqq z \mapsto [z^2,z^2]$.
Then, $(g \circ_\ii f)(x) = \sqcup_{-1\leq y \leq 1}\, g(y) = [-1,1]$,
  so $(h \circ_\ii (g\circ_\ii f))(x) = \sqcup_{-1 \leq z \leq 1}[z^2,z^2] = [0,1]$.
However, $(h \circ_\ii g)(y) = \ite({y<0},[1,1],[1,1]) = [1,1]$,
  hence $((h \circ_\ii g)\circ_\ii f)(x) = [1,1]$ is more precise.
\end{example}

\begin{proposition}\label{prop:best-bracketing}
   For all compatible maps $f,g,h$ in $\widetilde{\Kleisli}_\ii$,
   $(h \circ_\ii g) \circ_\ii f {}\mathop{\dot\sqsubseteq}{} h \circ_\ii (g \circ_\ii f)$.\footnote{ In fact, this proposition generalises to all abstract monads derived from $\pp(-)$,
  using the fact that $\alpha$ commutes with $\mu_\pp$, which is just the order-theoretical join.
  However, more exotic multiplications, like $\mu_{\ww_\infty}$ of distributions, also seem to
  display this property.
  At this stage, we have not identified the exact hypotheses enabling this
  ``best bracketing'': we conjecture a relation to \emph{left-additivity}
  properties of $\circ_T$, related to the semi-quantales studied in~\cite{GiacobazziTCS99}.
}
\end{proposition}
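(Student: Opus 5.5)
We unfold both sides using the definition of $\circ_\ii$ from Theorem~\ref{theorem:optimal-oplax-monad-by-GC}, namely $g \circ_\ii f = \alpha \circ \big((\gamma\circ g)\circ_\pp(\gamma\circ f)\big)$, and then compare them with the \emph{concrete} triple composite $(\gamma h)\circ_\pp(\gamma g)\circ_\pp(\gamma f)$, which is unambiguous because $\circ_\pp$ is associative. The aim is to show that the left bracketing $(h\circ_\ii g)\circ_\ii f$ \emph{equals} $\alpha\circ\big((\gamma h \circ_\pp \gamma g)\circ_\pp \gamma f\big)$ whenever the ``$\alpha$ on the left argument'' can be pushed through $\circ_\pp$. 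By contrast, the right bracketing $h\circ_\ii(g\circ_\ii f)$ is only an upper bound of that concrete composite.

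\textbf{Step 1: the left bracketing.} Pointwise, $((h\circ_\ii g)\circ_\ii f)(x) = \alpha\big(\bigcup_{y\in\gamma(f(x))}\gamma((h\circ_\ii g)(y))\big)$. Since $\alpha$ is a left adjoint it preserves unions, so this equals $\bigsqcup_{y\in\gamma f(x)}\alpha\gamma\big((h\circ_\ii g)(y)\big)$. By the Galois insertion hypothesis $\alpha\gamma=\id$, this becomes $\bigsqcup_{y\in\gamma f(x)}(h\circ_\ii g)(y)=\bigsqcup_{y}\alpha\big(\bigcup_{z\in\gamma g(y)}\gamma h(z)\big)$. Using additivity of $\alpha$ once more, we get $\alpha\big(\bigcup_{y\in\gamma f(x)}\bigcup_{z\in\gamma g(y)}\gamma h(z)\big)=\alpha\big(((\gamma h)\circ_\pp(\gamma g)\circ_\pp(\gamma f))(x)\big)$. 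This is the key computation: it uses exactly that $\mu_\pp$ is the join and that $\alpha$ commutes with it, as the footnote suggests.

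\textbf{Step 2: the right bracketing.} Here $(h\circ_\ii(g\circ_\ii f))(x)=\alpha\big(\bigcup_{z\in\gamma\alpha(S)}\gamma h(z)\big)$, where $S=\bigcup_{y\in\gamma f(x)}\gamma g(y)$. By extensivity of the closure, $S\subseteq\gamma\alpha(S)$. Since $\circ_\pp$ is monotone in its right argument and $\alpha$ is monotone, we obtain $\alpha\big(\bigcup_{z\in S}\gamma h(z)\big)\sqsubseteq\alpha\big(\bigcup_{z\in\gamma\alpha S}\gamma h(z)\big)$. The left-hand side is exactly the expression of Step 1, which concludes the comparison pointwise, hence in $\dot\sqsubseteq$.

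\textbf{Main obstacle.} The delicate point is Step 1. One must make sure that the union over $y$ ranges over \emph{elements} of $\gamma(f(x))\subseteq$ the relevant set, where the elements live in the carrier through $i_Y$. One must also make sure that the decomposition of $\alpha_X$ as direct image along $i_X$ followed by $\alpha^\pp$ really is join-preserving on arbitrary unions, so that the additivity argument is legitimate. This holds because both factors are left adjoints. The restriction to proper intervals, made to obtain a Galois insertion, must also be checked to keep $\alpha\gamma=\id$. Everything else is monotonicity.
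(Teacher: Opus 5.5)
Your proof is correct and follows the paper's argument: additivity of $\alpha$ over $\mu_\pp$ (unions) absorbs the intermediate $\gamma\alpha$ in the left bracketing, and extensivity $\id \subseteq \gamma\alpha$ together with monotonicity bounds the right bracketing from below by the same concrete triple composite. One small refinement: in Step 1 you only apply $\alpha\gamma$ to values already of the form $\alpha(\cdot)$, so the identity $\alpha\gamma\alpha=\alpha$ suffices, as in the paper; it holds for any Galois connection, so this proposition does not need the Galois insertion hypothesis.
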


In other words, in Example~\ref{ex:non-assoc} the right-associated bracketing
$[0,1]$ is not an accident: when composing chains of abstract Kleisli arrows
(e.g. sequences of statements, or nested recursive calls), associating
\emph{to the right} is always at least as precise.

\begin{remark}[Monadic composition $\circ_\ii$ is more precise than associative composition]
  \label{rem:example-monadic-more-precise}
  Instead of abstracting monadic function spaces of the form $\zz^r\to\pp(\zz^s)$
  by $\zz^r\to\ii(\zz^s)$, another natural approach would be to embed the former into
  $\Pos(\pp(\zz^r),\pp(\zz^s))$ by Kleisli extension, and then abstract to
  $\Pos(\ii(\zz^r),\ii(\zz^s))$ by \End-lifting. Then, composition is simple and associative,
  although the data structures representing these objects are more complex than for
  $\zz^r\to\ii(\zz^s)$.

  At first sight, it may appear that $\zz^r\to\ii(\zz^s)$ simply \emph{embeds} into
  $\Pos(\ii(\zz^r),\ii(\zz^s))$, by a Galois co-insertion (i.e. $\gamma\circ\alpha = \id$,
  $\alpha\circ\gamma\pleq\id$), defined by
  $\alpha(f)([x_{lb},x_{ub}]) = \sqcup_{x_{lb}\leq x \leq x_{ub}} f(x)$ and
  $\gamma(f^\sharp)(x)=f^\sharp([x,x])$.
This is indeed a co-insertion, so operations in $\zz^r\to\ii(\zz^s)$
  can be \emph{exactly} represented by operations in $\Pos(\ii(\zz^r),\ii(\zz^s))$.
  The catch is that the associative composition $\circ$ in $\Pos(\ii(-),\ii(-))$ is \emph{not} the
  best abstraction
  $g^\sharp \circ^\sharp f^\sharp := \alpha(\gamma(g^\sharp) \circ_\ii \gamma(f^\sharp))$
  of monadic composition.

  As an illustration, consider again Example~\ref{ex:non-assoc}, and
  $h^\sharp,g^\sharp,f^\sharp:\ii(\zz)\to\ii(\zz)$ abstracting $h,g,f$.
  Explicitly, ignoring $\bot$ inputs, we have $f^\sharp([x_{lb},x_{ub}])=[-1,1]$,\\
  {\small\centering
  \begin{tabular}{c c c}
    $g^\sharp([y_{lb},y_{ub}]) = \begin{cases}
       [-1,-1] & \text{if $y_{ub} < 0$}\\
       [-1, 1] & \text{if $y_{lb} \leq 0 \leq y_{ub}$}\\
       [1, 1] & \text{if $0 < y_{lb}$}\\
    \end{cases}$
    &
    and
    &
    $h^\sharp([z_{lb},z_{ub}]) = \begin{cases}
       [z_{ub}^2,z_{lb}^2] & \text{if $z_{ub} < 0$}\\
       [0,\max(z_{ub}^2,z_{lb}^2)] & \text{if $z_{lb} \leq 0 \leq z_{ub}$}\\ [z_{lb}^2,z_{ub}^2] & \text{if $0 < z_{lb}$}
    \end{cases}$
  \end{tabular}}
  In particular, using the usual associative $\circ$,
  $(h^\sharp\circ g^\sharp \circ f^\sharp)([x_{lb},x_{ub}]) = h^\sharp([-1,1]) = [0,1]$.
  However, ${\gamma(h^\sharp \circ^\sharp g^\sharp)(y) = (h \circ_\ii g)(y) = \ite(y<0,[1,1],[1,1])}$,
  hence $((h^\sharp \circ^\sharp g^\sharp) \circ^\sharp f^\sharp)([x_{lb},x_{ub}]) = [1,1]$.
\end{remark}

\section{Application and Implementation}\label{sec:application-and-implementation}

We have applied our abstract compilation framework to the cost analysis
problems of Section~\ref{sec:illustrative-example}, in a prototype abstract
interpreter containing higher-order abstract domains.
Given an \IR program and a collection of
\emph{catamorphic} metrics (see below), this abstract compiler produces purely numerical
non-deterministic programs (Section~\ref{subsec:size-abs-impl-comments}), and
further abstractions of them as functional equations on interval-valued
functions (Section~\ref{subsec:fun-itv-abs-impl-comments}), in several output
formats suited to different backend analyses. Given space constraints, we
briefly highlight the most interesting points of the construction, and defer an
extensive experimental evaluation to future work.

\subsection{Abstract Compilation to Numerical Programs, for Catamorphic Metrics}
\label{subsec:size-abs-impl-comments}

The first step of the pipeline described in Section~\ref{sec:illustrative-example} is \emph{size abstraction},
based on a collection of integer-valued metrics $m:T\to\zz$ on program datatypes. Given such metrics, the
construction of Section~\ref{subsec:state-space-absop} directly yields abstract operator semantics
(Fig.~\ref{fig:opsem-ir-abs-size-version}, Appendix). We briefly sketch how these abstractions can be
implemented, focusing on a few operations of particular interest.

\subsubsection{Catamorphic Metrics}
\begin{wrapfigure}[6]{r}{0.32\linewidth}\vspace{-10pt}
  {\tt\footnotesize \begin{tabular}{@{}ll@{}}
        {\bf metric} $m$( $X_{\text{arg}}$: ty ) -> {\bf int} \{\\
        \qquad{\bf let} $X_{\text{res}}$: {\bf int} = {\bf $(\top : \textbf{int})$}\\
        \qquad$
          \begin{aligned}
                  &\tt {\bf match}~X_{\text{arg}}~\{ \dots\\
                  &\quad{}\mid C_i(X_{i1},\dots,X_{ik}) {\tt~=>~stmt_i}\,\\
                  &\ \quad{}\dots\}
            \end{aligned}
            $\\
        \qquad{\bf return} ${X_{\text{res}}}$ \}&
  \end{tabular}
  }
\end{wrapfigure}
The key abstract transfer functions are those for ADT construction and
deconstruction, denoted $\asem{{\tt cnstr}_i}$ and $\asem{{\tt dnstr}_i}$
respectively: the former maps the size vectors of the arguments of a
constructor $C_i$ to the possible size vectors of the constructed term, the
latter is the converse relation, used to interpret {\tt match}
(see Fig.~\ref{fig:opsem-ir-abs-size-version}, Appendix).
Their discovery is fully automated when the collection of metrics is \emph{catamorphic},
in the sense of Definition~\ref{defi:catamorphic-function}: the vector of sizes of a
term depends only on the sizes of its immediate subterms.
In practice, users provide metrics via a restricted function syntax
akin to the one depicted on the right:
the input $X_{\text{arg}}$ is immediately
deconstructed by a \texttt{match}, and each branch {\tt stmt$_i$} features
only conditionals and integer expressions over an
\emph{extended arithmetic} (e.g., with $\log$ and $\max$). Apart from locally
defined variables and $X_{\text{res}}$, {\tt stmt$_i$} may refer only to the variables
$X_{i1},\dots,X_{ik}$ bound by its match case, and the only permitted
function calls are \emph{metric} calls on those same variables. So, each
match case computes the size of a term from the sizes of its immediate subterms,
as catamorphicity demands.

Simple examples of catamorphic metrics include list length and tree height.
The examples of Section~\ref{sec:illustrative-example} are also catamorphic. Conditionals are allowed, so one can define, e.g.,
metrics counting the number of elements in a collection satisfying some Boolean properties.
Moreover, catamorphic metrics may call other metrics, defined on other types,
allowing complex combinations (e.g., sums of sizes of subterms of distinct types).
A non-example is the number $\cardm$ of \emph{distinct} elements of a list of
integers: $\cardm(\Cons(\mathtt{hd},\mathtt{tl}))$ is not determined by
$\cardm(\mathtt{tl})$ and the value of $\mathtt{hd}$. For such
non-catamorphic metrics, users can either provide the transfer functions
themselves, or supply a
\emph{catamorphic overapproximation} (in the sense of pointwise inclusion) of
the metric.

\subsubsection{Optimal \texttt{skip}: feasible sizes}
\label{subsubsec:size-transfer-skip}

Interestingly, the remaining difficulties are entirely captured by the best
abstraction $\asem{{\tt skip}}$ of ${\tt skip}$. While
$(m^\sharp \mapsto \{m^\sharp\})$ is sound, it is not optimal: the oplax unit
inequality $\mathcal{M}(\eta_{T,X}) \pleq \eta_{T,\mathcal{M}X}$ of
Theorem~\ref{theorem:size-abstraction-kleisli-oplax-endofunctor} is
\emph{strict}. The best abstraction must discard \emph{infeasible} size
vectors, i.e.~vectors realised by no concrete value. (Recall footnote~\ref{footnote:fasible-vectors} of
Section~\ref{sec:illustrative-example}: the parity constraints in the
equations for \texttt{eval} are feasibility information at work.)
This precision matters in practice: spurious size vectors can make inductive
invariant proofs fail even when the candidate invariant holds on all concrete
data. Since the exact feasibility predicate is not computable in general, our
implementation overapproximates it by a preliminary analysis of the metric
definitions.

\subsubsection{FOL abstract domain}

We implement the abstract semantics using a higher-order domain representing nondeterministic functions
$\zz^r\to \pp(\zz^s)$ by first-order logic (FOL) formulae over an extended arithmetic,
interpreted as transition relations $\phi(\vec{x},\vec{x}')$.
This is done by encoding set operations in FOL (unions as disjunctions, etc.),
and recursive calls as uninterpreted functions.
Monadic composition $\circ_\pp$ is replaced by the operation $\exists\vec{y}, \phi(\vec{x},\vec{y})\wedge\psi(\vec{y},\vec{x}')$,
and units are replaced by conjunctions of equalities.
Constructors $\asem{{\tt cnstr}}$
are obtained by a forward analysis pass on metric code. For deconstructors~$\asem{{\tt dnstr}}$, we inverse the relation by swapping inputs and outputs.

\subsubsection{\texttt{Size} Lifter}

Size abstraction is in fact not implemented directly in the FOL domain, but
as a \emph{functor} that lifts any abstract domain for \emph{numerical}
operators to a domain applicable to programs over symbolic data, by supplying
the appropriate interface. Before interpreting the program itself, the
\texttt{Size} lifter \emph{precomputes} $\asem{{\tt cnstr}}$ and
$\asem{{\tt dnstr}}$ by a first analysis pass over the code of the metrics,
using the operations of the underlying numerical domain (including backward
statement semantics).
This architecture has a pleasant consequence, in the same spirit
as~\cite{lesbre2024compiling}: instantiated with domains implementing
essentially identity transformations (reconstruction of syntax trees, FOL or
CHC encodings of concrete semantics), the lifter performs
\emph{transformation by abstract interpretation}. Instantiated with coarser
domains, it produces even simpler abstract programs, giving access to more
powerful backend analyses. The interval-based higher-order domains presented
next provide one such instantiation, and yield systems of generalised
recurrence equations.

\subsection{Interval Effect: Extracting Generalised Recurrence Equations}
\label{subsec:fun-itv-abs-impl-comments}

\subsubsection{Parametric Intervals: Intervals whose Bounds are Functions}

We have implemented domains representing operators in $\End(\prod_{f\in\Funs}\aMem\to\ii(\aMem))$,
whose values can be interpreted (and solved) as recurrence equations.
Interval-valued functions $f:\zz^r\to\ii(\zz^s)$ (or functions in $\Pos(\ii(\zz^r),\ii(\zz^s))$, which is an abstraction thereof),
can be expressed as pairs of functions $[f_{lb},f_{ub}]$.
Abstract values can also be seen as \emph{parametric intervals}: classical intervals
whose bounds are parameterised by some inputs (values and functions),
represented by symbolic expressions. Fig~\ref{fig:machine-rep-param-itv-domain} shows the data structure we used
in the $\End(\prod_{f\in\Funs}\ii(\aMem)\to\ii(\aMem))$ case.

Except for composition, implementations are obtained by lifting standard algorithms for intervals to symbolic expressions rather than concrete integers.
Though relatively simple, these domains are particularly expressive: while output-output relations
cannot be expressed, they can in principle represent arbitrary input-output relations
(non-linear, non-convex, etc.):
this is manageable in our case thanks to our fixed-point delaying operator semantics.
To deal with the problem of expressing for which input values the hyperbox $[f_{lb}(\vec{x}),f_{ub}(\vec{x})]$,
is empty, we use a global boolean guard by default. We also rely heavily on strategies (rewritings, SMT, etc.)
to simplify symbolic expressions.

\begin{figure}\begin{equation*}
    {\scriptsize
      \begin{array}{@{}rl@{}}
d_{eq}
         = \vec{d}
       & \in D^{(\Phi)}_{\Funs}
         \eqdef \prod_{f\in\Funs} D^{(\lambda)}_{\Funs,\Vars_{f,in},\Vars_{f,out}}\\
d
         = {\color{dgreen}{(}}h,~{\color{dgreen}{\varphi)}}
       & \in D^{(\lambda)}_{\Funs,\Vars_{in},\Vars_{out}}
         \eqdef \big(\texttt{NumExpr}_{\Funs,\Vars_{in}} \times \texttt{NumExpr}_{\Funs,\Vars_{in}}\big)^{\Vars_{out}} \color{dgreen}{~\times~\texttt{BoolExpr}_{\Funs,\Vars_{in}}}
       \\
e & \in \texttt{NumExpr}_{\Funs,\Vars_{in}}\\
& = v_{lb}
             \mid v_{ub}
             \quad \text{(variable bound, for $v\in\Vars_{in}$)}
         \\
& \mid \pi_i \circ f_{lb}(\vec{e})
             \mid \pi_i \circ f_{ub}(\vec{e})
             \quad \text{(projected function call, for $f/(r,s)\in\Funs$ and $i\in[1,s]$)}
         \\
& \mid c
             \quad \text{(constant, $c\in\zzi = \{-\infty\}\uplus\zz\uplus\{+\infty\}$)}
         \\
& \mid \Diamond_{\mathrm{a}}(\vec{e})
             \quad \text{(builtin arithmetic operations, $\Diamond_{\mathrm{a}}\in\{\texttt{+, -, *, /, max, min, pow, log2pos}\}$)}
         \\
& \mid \ite(b,e,e)
             \quad \text{(piecewise expressions)}
         \\
& \color{dgreen}{\mid \letin{}{x}{[e,e]}{e'}}
             \quad \text{(let expressions, with $x$ a fresh variable and $e'\in\texttt{NumExpr}_{\Funs,\Vars_{in}\uplus\{x\}}$)}
         \\
b & \in \texttt{BoolExpr}_{\Funs,\Vars_{in}}\\
         & = e~\Bowtie~e
           \mid b~\Diamond_{\mathrm{b}}~b
           \mid {\bf true} \mid {\bf false}
           \quad \text{(basic boolean operations, $\Bowtie \in \{<,\leq,=,\geq,>\}$, $\Diamond_{\mathrm{b}} \in \{\texttt{and},\texttt{or}\}$)}
           \\
         & \color{dgreen}{\mid \ite(b,b,b)}
           \color{dgreen}{\mid \letin{}{x}{[e,e]}{b'}}
           \\
      \end{array}
    }
    \end{equation*}

    \vspace{-1em}

  \caption{Possible machine representation of domain of (operator on) interval-valued functions.
Elements in {\color{dgreen}green} are concision optimisations (global guards, DAG-style let expressions, etc.).
    \vspace{-1em}
  }
  \label{fig:machine-rep-param-itv-domain}
\end{figure}

\subsubsection{Tradeoffs in Abstract Composition}

As expected, \emph{composition} is the most subtle operation.
The framework admits several implementations with different precision/efficiency tradeoffs.
By Section~\ref{subsec:abstract-monads}, a straightforward, sound option consists in replacing all $-_\pp$
indices in Fig.~\ref{fig:opsem-ir-abs-size-version} by $-_\ii$.
Practical implementation of monadic composition $\circ_\ii$ is non-trivial, but approaches are available: parametric optimisation engines, overapproximating rewriting techniques,
symbolic interval arithmetic, etc.

This compositional approach loses precision compared to the opposite extreme: the best abstraction
$\alpha^{\ii}_\Phi\circ\Phi^{\sharp_\pp}\circ\gamma^{\ii}_\Phi$
of only the \emph{topmost} object, i.e. the abstract operator $\Phi^{\sharp_\pp}$ itself
obtained e.g.~in the FOL format from~Sec.~\ref{subsec:size-abs-impl-comments}.
Preliminary experiments via CAS symbolic optimisation engines suggest that
computing this object directly is feasible for small-sized programs,
although this does not scale well. An intermediate approach consists in
computing best abstractions of \emph{medium-sized} syntactic components of the program,
similarly to~\cite{ZixinHuang-AURA-SAS25}.

Finally, one can avoid the monadic composition $\circ_\ii$ altogether,
using a coarser domain operating on functions $\ii(\zz^r)\to\ii(\zz^s)$.
As exemplified by Rem.~\ref{rem:example-monadic-more-precise}, this leads
to more complicated expressions, but composition then reduces to symbolic substitutions,
which, for $\circ_\ii$, is possible only under additional monotonicity assumptions.

\section{Related Work}
\label{sec:rel-work}

Works on \emph{abstract compilation} were already discussed in
Section~\ref{sec:introduction}.
Among these, we highlight~\citep{RossignoliS06}, where, we believe, an implicit form of
operator semantics first appeared (their \emph{non-cyclicity
interpretation}), although not yet reified and generalised, and limited to
\emph{Boolean} values
(which makes the implementation of many operations easier, by finiteness of the value domain).

In \emph{recurrence-based static cost analysis}, we highlight the seminal work
of Wegbreit~\cite{Wegbreit75}, and the approaches of~\cite{caslog-short,plai-resources-iclp14-short}
underlying the cost analyses of \textsc{CiaoPP}~\cite{ciaopp-sas03-journal-scp-shortest}.
Our approach supports all
metrics of these works, and provides automated support for the \texttt{diff}-based metrics
of~\cite{caslog-short} conjointly with the $\min/\max$ ``deep'' combinations
of~\cite{plai-resources-iclp14-short}.
Our catamorphic metrics also support additional features, including
further metric combinations and conditional expressions within
metrics. The recurrences we extract are more precise,
supporting piecewise expressions (whose importance is highlighted
by~\cite{montoya-phdthesis-short,mlrec-tplp2024-nourl}) and $\sup$ expressions
that preserve relations between sizes of subterms,
(e.g. yielding a linear
bound on tree identity where~\cite{plai-resources-iclp14-short} obtains an
exponential one),
with optimality results absent
from~\cite{caslog-short,plai-resources-iclp14-short,montoya-phdthesis-short}.
Moreover, their soundness requires none of the monotonicity hypotheses (on
solutions, program expressions, recursive calls, metrics, etc.) assumed
in~\cite{caslog-short,plai-resources-iclp14-short}: our only monotonicity,
pointwise set inclusion, is derived from the semantics itself
(Proposition~\ref{prop:pos-monads-monot}).

Abstract compilation is also related to \emph{program transformation by
abstract interpretation}. A trace-based formalisation was proposed
in~\cite{Cousot02}, providing correctness proofs and design guidance for
transformations viewed as overapproximations of ideal semantic abstractions,
though its syntax/semantics adjunctions remained more conceptual than
implementable.
Related ideas appeared in~\cite{spec-inv-pepm03,ai-with-specs-sas06},
under the names of \emph{abstract specialisation} and \emph{partial deduction}, building
on combinations of abstract interpretation and partial
evaluation~\cite{gallagher:pepm93,anal-peval-horn-verif-2021-tplp-short}.
More recently,~\cite{lemerre2023ssa} proposed a transformation
approach based purely on abstract domains, applicable to decompilation
and extraction of SSA graphs, which~\cite{lesbre2024compiling}
generalised through the notion of (preordered) \emph{free
functor domains} and simulation theorems for (\textsc{OCaml}) functors
interpreted as compilation passes.
As explained in Section~\ref{sec:introduction}, our approach is intermediate between
\cite{Cousot02} and \cite{lesbre2024compiling}, combining the
algebraic advantages of a semantic approach with the finite
representability of syntactic objects.
In contrast to these works, our approach also directly supports recursive programs.
More generally, we remark that while the representation of sets of
traces in $\pp(\Sigma^*)$ through transition systems in
$\pp(\Sigma\times\Sigma)$ may be seen as an instance of operator semantics, the latter
is more general, as it easily supports (non-linear) recursion, monadic
effects, and modular control of ``observation points'' granularity.

Operator semantics is also inspired by \emph{coalgebraic
approaches to automata theory}~\cite{JacobsBook2016,HasuoTraces07,SokolovaCoalgSurvey11,CoalgNetKat15,EfficientCoalgRefin17,Sanada26CoalgDijkstra}, which highlight the
advantages of working with semantic objects that \emph{generate} trace
semantics rather than syntax or fixpoint semantics only, both in terms
of reasoning (information that can be extracted) and generalisability
of analysis algorithms to more exotic situations (probabilities,
etc.).

Our use of catamorphisms is inspired by the line of work of~\cite{catamorphism-2022,catamorphisms-TPLP2024}.

Our core motivation---simplifying programs to unlock increasingly powerful
analyses---leads at its extreme to \emph{applying complete techniques within
incomplete analyses for undecidable problems}.
This approach is well-illustrated
by~\cite{cra15,kincaid2017-short,kincaid2018,kincaid-closed-forms-popl19,Breck-CHORA,kincaid2023,Kincaid24-ideals,Kincaid24-VASR}
and by algebraic loop
summarisation~\cite{kovacs-phdthesis,kovacs08,kovacs18,kovacsSAS22}. Broadly,
this includes all applications of decidable logic fragments to program analysis.

Lastly, our techniques relate to
categorical approaches to abstract interpretation~\cite{Steffen92,CatAbs-MFPS23,kincaid26-cat-rob}---categorical frameworks going beyond the reading of
order theory as Bool-enriched category theory.
Non-categorical approaches to the
study of composition-like operations in
abstract interpretation are also proposed in~\cite{GiacobazziTCS99},
suggesting possible future work combining
these frameworks.

\section{Future Work}
\label{sec:ccl}

We plan to study further operator abstractions, especially
\emph{flow} abstractions (discarding recursion points through composition,
related to order-theoretical tensor products~\cite{TensorNielson85} for
bi-additive compositions) and their relation to
interpreter and evaluation-strategy choices in partial-evaluation-based
abstract compilation. More practically, as this paper focused on
theoretical foundations, we plan to present our implementation in detail,
together with an experimental evaluation covering cost analysis applications,
extracted equations, and combinations
with backend solvers yielding precise non-linear bounds. Finally, we plan to
instantiate the framework more widely: on concrete semantics \emph{not} based
on the covariant $\pp(-)$ monad (probabilities, other covariant predicate
functors), on numerical domains intermediate between FOL and intervals, on
combinations of size abstractions with other information on symbolic data, on
other monad transformers, such as the state monad applied to (abstractions
of) hardware-sensitive stateful cost models, and more generally
explore instantiations to other applications of abstract compilation.

\bibliographystyle{ACM-Reference-Format}
\bibliography{local}

\newpage \appendix
\section{Additional Preliminary Material}
\label{appendix:additional-preliminaries}

This appendix contains additional preliminary material that was omitted (or only informally explained) from the main body of the paper for space reasons.

\subsection{Order Theory}
\label{subsec:appendix-order}

\vspace{-8pt}
\begin{longtable}{p{0.199\textwidth}p{0.753\textwidth}}
\toprule
  {\small \textbf{Notion}} & {\small \textbf{Meaning}} \\
  \toprule
  \endfirsthead
\toprule
  \textbf{Notion} & \textbf{Meaning} \\
  \toprule
  \endhead
\midrule
  \multicolumn{2}{r}{\textit{Continued on next page}} \\
  \endfoot
\bottomrule
  \endlastfoot
Poset
    & A set $X$ equipped with a partial order. \\
  Monotone function
    & An $f \colon X \to Y$ between posets satisfying $x \leq_X y \implies f(x) \leq_Y f(y)$. \\
  $\Pos(X,Y)$
    & The set of all monotone functions from poset $X$ to poset $Y$. \\
  $\pleq$
    & The pointwise order ${f\pleq g\iff\forall x\in X,\,f(x)\leq_Y g(x)}$. \\
  $\End(X)$
    & The poset $\Pos(X,X)$ ordered pointwise, for $X$ a poset. \\
  Chain
    & A subset of a poset with pairwise comparable elements. \\
  Join
    & Least upper bound $\bigsqcup S$ of a subset $S$ of a poset. \\
  Meet
    & Greatest lower bound $\bigsqcap S$ of a subset $S$ of a poset. \\
  CPO
    & Complete partial order: A poset where every chain admits a join. \\
  Complete lattice
    & A poset where every subset admits both a join and a meet. \\
  Additive function
    & A join-preserving function: $f(\bigsqcup S) = \bigsqcup f(S)$ (when $\bigsqcup S$ exists). \\ Coadditive function
    & A meet-preserving function: $f(\bigsqcap S) = \bigsqcap f(S)$ (when $\bigsqcap S$ exists). \\ Galois Connection
    & Two (monotone) functions $\alpha,\gamma$ between posets with $\alpha(x)\leq y \Leftrightarrow x\leq\gamma(y)$. \\
\end{longtable}

\begin{definition}[Partially ordered sets]\label{def:poset}
    A \emph{partially ordered set} (or \emph{poset} for short) $(X, \leq)$ is a
    set $X$ together with a binary operation $\leq$ that is reflexive, transitive
    and antisymmetric, called its order.
\end{definition}

Two elements $x$ and $y$ of a poset are said to be \emph{comparable} whenever
$x\leq y$ or $y\leq x$.

The opposite $\leq^{op}$ of an order $\leq$ is the order defined by $x
\leq^{op} y \iff y \leq x$. In the body of the paper and in the appendices,
we often denoted $\leq^{op}$ by simply mirroring the symbol,
for instance writing $\geq$ instead of $\leq^{op}$, $\sqsupseteq$ instead of $\sqsubseteq^{op}$, etc.

\begin{definition}[Monotone functions]\label{def:monotone-function}
    Given two posets $(X,\leq_X)$ and $(Y,\leq_Y)$, a
    function $f:X\to Y$ is
    \emph{monotone} (or \emph{order-preserving}) whenever
    $\forall x,x'\in X \colon x \leq_X x' \implies f(x) \leq_Y f(x')$.
\end{definition}

\begin{definition}[$\Pos(X,Y)$ and its pointwise order]\label{def:pointwise-order}
  Given two posets $(X,\leq_X)$ and $(Y,\leq_Y)$, we denote by the set $\Pos(X,Y)$ of all
  monotone functions from $X$ to $Y$.
  This set is equipped with the \emph{pointwise order} $\pleq$ defined by $f \pleq g \iff \forall x\in X,\,f(x)\leq_Y g(x)$,
  for every $f,g \in \Pos(X,Y)$.
\end{definition}

\begin{definition}[Chain]\label{def:chain}
    A \emph{chain} in a poset $(X,\leq)$ is a subset $A\subseteq X$ such that all its elements are pairwise comparable.
\end{definition}

\begin{definition}[Join and Meet]\label{def:join-and-meet}
    Given a poset $(X,\leq)$ and a subset $A\subseteq X$, a \emph{join} of $A$ (also called least upper bound, or lub), is an element $u\in X$ such that for every $a \in A$, $a\leq u$ and for every $v\in X$, $(\forall x\in A,\,x\leq v)\implies u\leq v$.
    Such an element $u$ is unique if it exists, and we denote it by $\bigsqcup A$. When $A = \{x,y\}$, we also write $x \sqcup y$ for binary joins.

    The \emph{meet} $\bigsqcap A$ of a subset $A$ (also called greatest lower
    bound, or glb) is defined dually, replacing the order $\leq_X$ by the reverse
    order $\geq_X$.

    When they exist, the least an greatest elements of the poset~$(X,\leq)$ are respectively
    called its \emph{bottom} and \emph{top}, and are denoted by
    $\bot_X=\sqcup_X \varnothing = \sqcap_X X$ and  $\top_X=\sqcap_X \varnothing = \sqcup_X X$.
\end{definition}

\begin{definition}[CPO]\label{def:cpo}
  A \emph{complete partial order}\footnote{Note that this term is ambiguous in the literature -- we choose
  this definition as it guarantees existence of denotional semantics for
  recursive constructs under weak assumptions.} (or \emph{cpo} for short) is a poset where all
  \emph{chains} (including the empty one and not just finite ones) admit a
  join.
\end{definition}

\begin{definition}[Complete lattice]\label{def:complete-lattice}
  A \emph{complete lattice} is a poset $(X,\leq)$ where all subsets admit both a join and a
  meet, i.e., for all $A\subseteq X$, $\bigsqcup A$ and $\bigsqcap A$ exist.
\end{definition}

\begin{definition}[(Co)additive functions]\label{def:additive-function}
  We say that a function $X\to Y$ between the underlying sets of two complete
  lattices is \emph{additive} (or \emph{join-preserving}) whenever
  $\forall x,x'\in X, f(x \sqcup_X x') = f(x) \sqcup_Y f(y)$.
  Any additive function is monotone, but the converse does not hold in general.
\emph{Co-additive} (or \emph{meet-preserving}) functions are defined dually.
\end{definition}

\begin{remark}[Pointwise and product orders]\label{remark:pointwise-product-orders}
  Whenever $(Y,\leq_Y)$ is a CPO (resp. complete lattice),
  all pointwise orders $\pleq_Y$ are also CPOs (resp. complete lattices).
  Similarly, the \emph{product} $\prod_i X_i$ of CPOs
  (resp. complete lattices) $(X_i,\leq_i)$ is a CPO (resp. complete lattice).
\end{remark}

Because Galois Connections (GCs, for short) underpin much of this paper, let us recall some~of~their properties
(we refer the reader to~\cite{Cousot21-book} for proofs and references).
As suggested above, given posets $(C,\leq_C)$ and $(A,\leq_A)$,
a GC $(C,\leq_C)\galois{\alpha}{\gamma}(A, \leq_A)$ is a pair of functions $\alpha:C \to A$ and $\gamma:A \to C$
satisfying the property that for every $c \in C$ and $a \in A$, $\alpha(c)\leq_A a \iff c \leq_C \gamma(a)$.
\begin{fact}\label{fact:gc-inequalities}
   A pair of functions $(\alpha,\gamma)$ constitutes a GC if and only
   if $\alpha$ and $\gamma$ are both \emph{monotone} and satisfy the inequalities
   $\id_C \pleq \gamma\circ \alpha$ and $\alpha\circ\gamma \pleq \id_A$ (where $\id_Z$ denotes the
   identity function). \end{fact}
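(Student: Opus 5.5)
We prove the two directions of Fact~\ref{fact:gc-inequalities} separately, using only the defining equivalence $\alpha(c)\leq_A a \iff c\leq_C\gamma(a)$ together with reflexivity and transitivity of the orders.

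\emph{($\Rightarrow$)} Assume $(\alpha,\gamma)$ is a GC.

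\begin{itemize}
  \item \textbf{Extensivity of $\gamma\circ\alpha$.} Instantiating the equivalence with $a:=\alpha(c)$, reflexivity gives $\alpha(c)\leq_A\alpha(c)$. Hence $c\leq_C\gamma(\alpha(c))$, that is, $\id_C\pleq\gamma\circ\alpha$.
  \item \textbf{Reductivity of $\alpha\circ\gamma$.} Dually, instantiating with $c:=\gamma(a)$, reflexivity gives $\gamma(a)\leq_C\gamma(a)$. Hence $\alpha(\gamma(a))\leq_A a$, that is, $\alpha\circ\gamma\pleq\id_A$.
  \item \textbf{Monotonicity of $\alpha$.} Let $c\leq_C c'$. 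By extensivity and transitivity, $c\leq_C\gamma(\alpha(c'))$. The equivalence (read right to left, with $a:=\alpha(c')$) then yields $\alpha(c)\leq_A\alpha(c')$.
  \item \textbf{Monotonicity of $\gamma$.} Symmetrically, let $a\leq_A a'$. Then $\alpha(\gamma(a))\leq_A a\leq_A a'$, and the equivalence gives $\gamma(a)\leq_C\gamma(a')$.
\end{itemize}

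\emph{($\Leftarrow$)} Assume $\alpha,\gamma$ are monotone, $\id_C\pleq\gamma\circ\alpha$ and $\alpha\circ\gamma\pleq\id_A$. We check the equivalence in both directions.

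\begin{itemize}
  \item If $\alpha(c)\leq_A a$, then monotonicity of $\gamma$ gives $\gamma(\alpha(c))\leq_C\gamma(a)$. Combined with $c\leq_C\gamma(\alpha(c))$, transitivity yields $c\leq_C\gamma(a)$.
  \item If $c\leq_C\gamma(a)$, then monotonicity of $\alpha$ gives $\alpha(c)\leq_A\alpha(\gamma(a))$. Combined with $\alpha(\gamma(a))\leq_A a$, transitivity yields $\alpha(c)\leq_A a$.
\end{itemize}

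There is no genuine obstacle. The only point needing care is in the forward direction: monotonicity is not assumed there, so it must be derived from the equivalence itself. This is why the extensive and reductive inequalities are established first and then reused to obtain monotonicity.
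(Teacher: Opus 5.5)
Your proof is correct and complete. It is the standard argument: instantiating the adjunction at $a:=\alpha(c)$ and $c:=\gamma(a)$ gives extensivity and reductivity, monotonicity then follows from these, and conversely monotonicity plus transitivity recovers the equivalence. The paper gives no proof of its own here, recalling the fact as known and deferring to Cousot's book, and your derivation of monotonicity in the forward direction correctly covers the paper's in-text definition of a GC, which does not assume monotonicity.
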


The notation $C\galois{\alpha}{\gamma}A$ (we omit the orders when they are clear from the context) serves as a reminder of the inequalities in Fact~\ref{fact:gc-inequalities}: following $\alpha$ then $\gamma$ moves an element up the order, while following $\gamma$ then $\alpha$ moves it down.
The maps $\alpha$ and $\gamma$ of a Galois connection are called \emph{adjoints}, and we say that $\alpha$ is \emph{left adjoint} to $\gamma$, and that $\gamma $ is \emph{right adjoint} to $\alpha$.
Left and right adjoints of a given map are unique, when they exist.
In a GC, $\alpha$ and $\gamma$ are respectively called \emph{abstraction} and \emph{concretisation} functions,
while $C$ and $A$ and called the \emph{concrete} and \emph{abstract domains}.

\begin{fact}In a GC, $\alpha$ is additive and $\gamma$ is coadditive.
\end{fact}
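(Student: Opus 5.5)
The plan is to prove additivity of $\alpha$ directly from the defining equivalence $\alpha(c)\leq_A a \iff c\leq_C\gamma(a)$, and then obtain coadditivity of $\gamma$ by order duality instead of repeating the argument. Throughout, I read ``additive'' as in the summary table: whenever $\bigsqcup S$ exists in $C$, the image $\alpha(S)$ has a join in $A$, and that join equals $\alpha(\bigsqcup S)$. The key point is that the existence of $\bigsqcup \alpha(S)$ is itself part of the conclusion.

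First, fix $S\subseteq C$ such that $\bigsqcup S$ exists. I will show that $\alpha(\bigsqcup S)$ has exactly the set of upper bounds of $\alpha(S)$ as its principal up-set. For any $a\in A$, there is the chain of equivalences
\[
\alpha\big(\textstyle\bigsqcup S\big)\leq_A a
\iff \textstyle\bigsqcup S\leq_C\gamma(a)
\iff \forall s\in S,\ s\leq_C\gamma(a)
\iff \forall s\in S,\ \alpha(s)\leq_A a .
\]
The first and third steps are the Galois equivalence. The middle step is the universal property of the join in $C$. Taking $a=\alpha(\bigsqcup S)$ shows that $\alpha(\bigsqcup S)$ is an upper bound of $\alpha(S)$. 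The right-to-left direction then shows that it lies below every upper bound. Hence $\alpha(\bigsqcup S)=\bigsqcup\alpha(S)$, and in particular this join exists. Monotonicity of $\alpha$ is not even needed here, although it is also a consequence of Fact~\ref{fact:gc-inequalities}.

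Second, for $\gamma$, observe that $(A,\geq_A)\galois{\gamma}{\alpha}(C,\geq_C)$ is again a GC. Indeed, $\gamma(a)\geq_C c \iff a\geq_A\alpha(c)$ is just the original equivalence read backwards. Joins in the opposite orders are meets in the original orders. Applying the first step to this dual GC therefore gives $\gamma(\bigsqcap T)=\bigsqcap\gamma(T)$ whenever $\bigsqcap T$ exists in $A$.

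I expect no real obstacle. The only delicate point is to make sure the argument establishes the \emph{existence} of $\bigsqcup\alpha(S)$ rather than assuming it. The characterisation through upper bounds handles this. It also covers the cases $S=\varnothing$, which gives $\alpha(\bot_C)=\bot_A$ whenever $\bot_C$ exists, and arbitrary infinite $S$, so the result is stronger than the binary form stated in Definition~\ref{def:additive-function}.
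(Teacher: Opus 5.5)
Your proof is correct: the chain of equivalences through the adjunction and the universal property of $\bigsqcup S$ establishes both the existence and the value of $\bigsqcup\alpha(S)$, and passing to the order-dual GC gives the coadditivity of $\gamma$. The paper does not prove this fact itself; it defers to Cousot's book, and your argument is the standard one.
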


A canonical way of constructing a GC consists in considering an additive map~$\alpha$ between complete lattices $L$ and $L^\sharp$. The lattice structure ensures the existence of the right adjoint to~$\alpha$.

\begin{fact}\label{prop:additive-by-meet}
  Any additive map $\alpha:L\to L^\sharp$ admits a (unique, coadditive) right adjoint $\gamma:L^\sharp\to L$, and it can be computed by $\gamma = a \mapsto \sqcup \{ x \,|\,\alpha(x) \sqsubseteq a\}$.
  Dually, any coadditive map $\gamma:L^\sharp\to L$ admits a (unique, additive) left adjoint $\alpha:L\to L^\sharp$, and it can be computed by $\alpha = x \mapsto \sqcap \{ a \,|\, x \leq \gamma(a)\}$.
\end{fact}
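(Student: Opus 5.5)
The statement is Fact~\ref{prop:additive-by-meet}: an additive map between complete lattices has a right adjoint given by the displayed formula, and dually for coadditive maps. I will prove the first half directly and obtain the second by order duality.

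\textbf{Step 1: definition and monotonicity.} Let $\alpha:L\to L^\sharp$ be additive and set $\gamma(a) := \sqcup\{x\mid \alpha(x)\sqsubseteq a\}$. This join exists because $L$ is a complete lattice. Before anything else I will note that additivity is used for \emph{arbitrary} joins, including the empty one, as in the table definition; this gives $\alpha(\bot)=\bot$, and it gives $\alpha$ monotone via binary joins. The map $\gamma$ is monotone, since enlarging $a$ enlarges the indexing set.

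\textbf{Step 2: the adjunction.} I will check the two unit/counit inequalities of Fact~\ref{fact:gc-inequalities}.

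For $x\le\gamma(\alpha(x))$: the element $x$ belongs to the set $\{x'\mid\alpha(x')\sqsubseteq\alpha(x)\}$, so it lies below its join.

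For $\alpha(\gamma(a))\sqsubseteq a$: this is the key use of additivity. We have
\[
\alpha(\gamma(a))=\alpha\big(\sqcup\{x\mid\alpha(x)\sqsubseteq a\}\big)=\sqcup\{\alpha(x)\mid\alpha(x)\sqsubseteq a\}\sqsubseteq a .
\]
Here we need additivity for possibly infinite (and empty) families, not merely binary joins. This is the one point where the statement's hypothesis must be read as preservation of all joins, and I flag it as the main subtlety.

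By Fact~\ref{fact:gc-inequalities}, $(\alpha,\gamma)$ is then a Galois connection.

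\textbf{Step 3: uniqueness and coadditivity.} For uniqueness, suppose $\gamma'$ is any right adjoint. Then $\alpha(x)\sqsubseteq a\iff x\le\gamma'(a)$, so $\gamma'(a)$ is the greatest element of $\{x\mid\alpha(x)\sqsubseteq a\}$. That greatest element equals the join, hence $\gamma'=\gamma$.

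For coadditivity, I will check that $\gamma(\sqcap_i a_i)$ and $\sqcap_i\gamma(a_i)$ have the same lower bounds:
\[
x\le\gamma(\sqcap_i a_i)\iff\alpha(x)\sqsubseteq\sqcap_i a_i\iff\forall i,\ \alpha(x)\sqsubseteq a_i\iff\forall i,\ x\le\gamma(a_i).
\]
Equality of the two elements then follows.

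\textbf{Step 4: the dual statement.} Apply everything above to $\gamma:(L^\sharp)^{op}\to L^{op}$. In the opposite orders, coadditivity becomes additivity, joins become meets, and the resulting right adjoint in the opposite orders is exactly the left adjoint $\alpha(x)=\sqcap\{a\mid x\le\gamma(a)\}$, which is additive by Step 3 read dually.
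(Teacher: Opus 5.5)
Your proof is correct and complete. The paper does not prove this fact itself; it recalls it as standard and points to Cousot's book. Your argument is the textbook one: define $\gamma$ by the join formula, check the two inequalities of Fact~\ref{fact:gc-inequalities}, get uniqueness because $\{x\mid\alpha(x)\sqsubseteq a\}$ is the principal down-set of any right adjoint's value $\gamma'(a)$, get coadditivity by comparing lower bounds, and obtain the dual half by passing to opposite orders.

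The subtlety you flag is genuine and should stay explicit. The appendix's Definition~\ref{def:additive-function} only requires preservation of binary joins, and under that reading the fact is false. For example, $\alpha:[0,1]\to[0,1]$ with $\alpha(x)=0$ for $x<1$ and $\alpha(1)=1$ preserves binary joins. However, $\{x\mid\alpha(x)\le 0\}=[0,1)$ has join $1$ and $\alpha(1)\not\le 0$, so no right adjoint exists. You therefore need the table's reading of additivity as preservation of all joins, including the empty one, exactly as you use it in Step~2.
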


Broadly speaking, \emph{abstract interpretation} is a theory for reasoning about programs by computing sound abstractions of their semantics (defined in a concrete lattice) via primitives performing computations within a (simpler) abstract domain. GCs offer a principled approach for designing these primitives, as they provide a notion of \emph{best} abstraction, rather than a collection of \emph{sound}~ones.

\begin{definition}[Sound abstraction]
  An element $a\in A$ is said to be a \emph{sound} abstraction of $c\in C$ whenever $\alpha(c) \leq a$ (equivalently, $c \leq \gamma(a)$).
Similarly,
  $f^\sharp:A \to A$ is said to be a \emph{sound} abstraction
  of a monotone function $f:C \to C$
whenever
  $\alpha \circ f \pleq f^\sharp \circ \alpha$
  (equivalently, $f \circ \gamma \pleq \gamma \circ f^\sharp$).
\end{definition}

\begin{fact}[Best abstraction]
  The set of sound abstractions of an element $c\in C$ admits a minimum: it is  $\alpha(c)$.
  The set of sound abstractions of some $f\in\Pos(C,C)$ admits a minimum:
  $\alpha \circ f \circ \gamma$.
\end{fact}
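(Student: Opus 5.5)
Both claims follow directly from the adjunction inequalities of Fact~\ref{fact:gc-inequalities}, $\id_C \pleq \gamma\circ\alpha$ and $\alpha\circ\gamma\pleq\id_A$, together with monotonicity. For each claim I will check two things: that the proposed candidate is itself a sound abstraction, and that it lies below every other sound abstraction. Throughout, I take abstract transformers $f^\sharp$ to range over $\Pos(A,A)$, as the pointwise order and the \End-lifting convention of the paper suggest.

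\emph{Elements.} By definition, $a\in A$ is sound for $c$ exactly when $\alpha(c)\leq a$.
\begin{itemize}
\item $\alpha(c)$ is sound, by reflexivity of the order.
\item Any sound $a$ satisfies $\alpha(c)\leq a$, again by definition.
\end{itemize}
So $\alpha(c)$ is the minimum of the set of sound abstractions. No appeal to the Galois connection is needed here beyond the definition of soundness.

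\emph{Functions, soundness of the candidate.} Take $f^\sharp_0 := \alpha\circ f\circ\gamma$. It is monotone, as a composite of monotone maps ($\alpha$ and $\gamma$ are monotone by Fact~\ref{fact:gc-inequalities}), so it lies in $\Pos(A,A)$. For soundness I use the first formulation, $\alpha\circ f \pleq f^\sharp_0\circ\alpha$:
\begin{itemize}
\item Start from $\id_C\pleq\gamma\circ\alpha$.
\item Monotonicity of $f$ gives $f\pleq f\circ\gamma\circ\alpha$ pointwise.
\item Monotonicity of $\alpha$ then gives $\alpha\circ f\pleq \alpha\circ f\circ\gamma\circ\alpha = f^\sharp_0\circ\alpha$.
\end{itemize}

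\emph{Functions, minimality.} Let $f^\sharp\in\Pos(A,A)$ be sound, so $\alpha\circ f\pleq f^\sharp\circ\alpha$.
\begin{itemize}
\item Precomposing with $\gamma$ preserves the pointwise order, since it merely evaluates at the points $\gamma(a)$. This yields $\alpha\circ f\circ\gamma\pleq f^\sharp\circ\alpha\circ\gamma$.
\item From $\alpha\circ\gamma\pleq\id_A$ and monotonicity of $f^\sharp$, we get $f^\sharp\circ\alpha\circ\gamma\pleq f^\sharp$.
\end{itemize}
By transitivity, $f^\sharp_0\pleq f^\sharp$. Hence $f^\sharp_0$ is the minimum.

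The only step requiring care is this last one: it uses that $f^\sharp$ is monotone. For this reason I restrict the set of candidate abstractions to $\Pos(A,A)$ rather than to arbitrary functions. Otherwise the argument is routine, and the equivalence between the two formulations of soundness stated in the definition is not needed.
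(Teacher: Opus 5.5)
Your proof is correct. Your restriction to monotone $f^\sharp$ is genuinely needed for the formulation $\alpha\circ f\pleq f^\sharp\circ\alpha$ of soundness: a non-monotone $f^\sharp$ can satisfy that inequality while dropping below $\alpha\circ f\circ\gamma$ at points outside the image of $\alpha$.

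The paper gives no separate proof of this Fact; it defers to~\cite{Cousot21-book} and to the lifted Galois connection of Example~\ref{ex:GC-lift-2GCs}. Your two steps are, however, exactly the computations it carries out in proving Proposition~\ref{prop:optimal-oplax-functor-by-GC}: oplax naturality of $\alpha$ gives soundness, and initiality gives minimality.
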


The latter fact can in fact be presented as a GC in \emph{function space}
that lifts GCs in value space: see Example~\ref{ex:GC-lift-2GCs}.

\subsection{Category Theory}
\label{subsec:app-category-theory}

\begin{definition}[Magmoid]\label{def:magmoid}
  A \emph{magmoid}\footnote{We use the term \emph{magmoid} to reflect
  the pattern where \emph{groupoids} are categories with inversible
  morphisms, and categories ``are just \emph{monoidoids}'', i.e.
  monoids with several objects.
  We avoid \emph{quivers}, which are often assumed small, and do not
  always imply a composition operation of any kind.}
  (or ``category without axioms'', or sometimes
  \emph{quiver}),
  is a structure \catex consisting of:
  \begin{enumerate}
    \item a \emph{class} $\Ob(\catex)$, called its class of \emph{objects},
    \item for each pair $X,Y\in\Ob(\catex)$ of objects, a
      \emph{set}\footnote{In this paper, we need not go beyond such
      ``locally small'' category-like structures, although it is
      possible.} $\catex(X,Y)$, called a set of \emph{arrows} (or
      \emph{morphisms}, or \emph{maps}), or just an \emph{homset},
    \item for each triple $X,Y,Z\in\Ob(\catex)$, a binary operation
      $\circ_{X,Y,Z}:\catex(Y,Z)\times \catex(X,Y)\to\catex(X,Z)$ called
      \emph{composition}.
When the context is clear, we may drop the index and simply write
      $\circ$.
  \end{enumerate}
\end{definition}

\begin{definition}[Unital magmoids]\label{def:unital-magmoid}
  A magmoid \catex is said to be \emph{unital}
  whenever there exists an object-indexed collection
  $(\id_X)_{X\in\Ob(\catex)}$, where $\id_X \in \catex(X,X)$ is called the
  \emph{identity at $X$},
  such that for all $f\in\catex(X,Y)$, we have $f\circ\id_X=f=\id_Y\circ f$.
\end{definition}

\begin{definition}[Associative magmoids]\label{def:associative-magmoid}
  A magmoid \catex is said to be \emph{associative}
  whenever for all $f\in\catex(W,X)$, $g\in\catex(X,Y)$, $h\in\catex(Y,Z)$,
  we have $(h \circ g) \circ f = h \circ (g \circ f)$.
\end{definition}

\begin{definition}[Categories]\label{def:categories}
  A \emph{category} is a unital and associative magmoid.
\end{definition}

\begin{definition}[(Pre)functor]\label{def:prefunctor}
  A \emph{prefunctor} $F:\catex\to\catex'$ between two \magmoids
  consists in
(1) a function $F:\Ob(\catex)\to\Ob(\catex')$ on objects,
(2) for each $X,Y\in\Ob(\catex)$,
  a function $F:\catex(X,Y)\to\catex(FX,FY)$ on homsets.

  A \emph{functor} $F:\catex\to\catex'$ between \emph{unital} magmoids
  is a prefunctor such that
(3) $F(f\circ g)=F(f)\circ F(g)$ for all compatible $f,g$, and
(4) $F(\id_X) = \id_{FX}$ for all $X$.

  We add the prefix \emph{endo-} (e.g. \emph{endofunctor}) whenever
  $\catex=\catex'$.
\end{definition}

\begin{definition}[(Pre)natural transformation]\label{def:prenatural-transformation}
  A \emph{prenatural transformation} $\tau:F\Rightarrow G$ between
  two (pre)functors $F,G:\catex\to\catex'$ is simply (1) a collection
  $(\tau_{X}\in\catex'(FX,GX))_{X\in\Ob(\catex)}$ of maps in $\catex'$
  indexed by objects in $\catex$.
It is said to be a \emph{natural} transformation whenever
  (2) $\tau_Y \circ Ff = Gf \circ \tau_X$ for all $f\in\catex(X,Y)$.
\end{definition}

\subsection{Equivalence between Definition~\ref{def:monad} and the classical definition of monad}
\label{subsec:proof-monad-by-kleisli-equivalent-to-monad-by-mult}

We remind the reader of the classical definition of monad.
Below, $\Id_\catex$ denotes the identity functor on a category $\catex$,
and $TT = T \circ T \colon \catex \to \catex$ denotes the composition of the endofunctor $T$ with itself.

\begin{definition}[Monad]\label{def:classic-monad}
  A \emph{monad} on a category $\catex$ is a triple $(T, \eta, \mu)$ where:
  \begin{itemize}
    \item $T: \catex \to \catex$ is an endofunctor,
    \item $\eta: \Id_\catex \Rightarrow T$ is a natural transformation called the \emph{unit},
    \item $\mu: TT \Rightarrow T$ is a natural transformation called the \emph{multiplication},
  \end{itemize}
  satisfying the following coherence conditions:
  \begin{align*}
    \mu \circ T\mu &= \mu \circ \mu T, \\
    \mu \circ T\eta &= \mu \circ \eta T = \id_T.
  \end{align*}
\end{definition}

The definition of monad given in Definition~\ref{def:monad} is equivalent to the definition in Definition~\ref{def:classic-monad}.
We only prove the forward direction (Proposition~\ref{prop:monad-by-kleisli-equivalent-to-monad-by-mult} below): a monad $(T(-),\eta_T,\circ_T)$ as in Definition~\ref{def:monad} induces a classical monad $(T,\eta_T,\mu_T)$.
The backward direction is standard\footnote{E.g., one can follow the exposition by Moggi~\cite{moggi-monads-91}, obtain Kleisli triples $(T(-),\eta_T,(-)^{*_T})$ via a result attributed to~\cite{Manes76-book}, then apply $g\circ_T f := g^{*_T} \circ f$.}.
In fact, we believe that the forward direction is also well known,
but we include a full proof for completeness, since we could not find
a reference containing a complete proof under the assumptions we stated.

\begin{proposition}\label{prop:monad-by-kleisli-equivalent-to-monad-by-mult}
  Let $(T(-),\eta_T,\circ_T)$ be a monad on a category $\catex$
  (Definition~\ref{def:monad}), that is:
  \begin{enumerate}[label=(\arabic*)]
    \item $T(-):\Ob(\catex)\to\Ob(\catex)$ is a mapping-on-objects,
    \item $(\eta_{T,X} \in \catex(X,TX))$ is a collection of arrows indexed by $X\in\Ob(\catex)$,
    \item $(\circ_{T,(X,Y,Z)} \colon \catex(Y,TZ)\times\catex(X,TY)\to\catex(X,TZ))$ is a collection of maps indexed by ${X,Y,Z\in\Ob(\catex)}$,
    \item $(g \circ_T f) \circ u = g \circ_T (f \circ u)$ for every $u\in\catex(W,X)$, $f\in\catex(X,TY)$, $g\in\catex(Y,TZ)$,
    \item the $\eta_{T,X}$ are two-sided units for $\circ_T$, i.e.\ $\eta_{T,Y}\circ_T f = f = f\circ_T \eta_{T,X}$ for all $f\in\catex(X,TY)$,
    \item $\circ_T$ is associative, i.e.\ the Kleisli \magmoid of Definition~\ref{def:kleisli-magmoid} is a category~$\Kleisli_T$.
  \end{enumerate}
Define, for every arrow $f \in \catex(X,Y)$ and every object $X\in\Ob(\catex)$,
  \begin{equation*}
    Tf := (\eta_{T,Y} \circ f) \circ_T \id_{TX}
    \qquad\text{and}\qquad
    \mu_{T,X} := \id_{TX}\circ_T\id_{TTX}.
  \end{equation*}
Then, $(T,\eta_T,\mu_T)$ is a monad as in Definition~\ref{def:classic-monad},
  whose induced Kleisli composition ${(g,f) \mapsto \mu_{T,Z} \circ Tg \circ f}$
  coincides with~$\circ_T$.
\end{proposition}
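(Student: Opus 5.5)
The plan is to derive everything from three algebraic laws available in the ``composition first'' presentation: the precomposition law (4), the unit laws (5), and associativity (6). Write $f^{*} := f \circ_T \id_{TX}$ for $f\in\catex(X,TY)$, viewed as an arrow $TX\to TY$. The first key lemma I will establish is the factorisation $g\circ_T f = g^{*}\circ f$. It follows from (4) together with the right unit law: $g^{*}\circ f = (g\circ_T \id_{TY})\circ f = g\circ_T(\id_{TY}\circ f) = g\circ_T f$. With this lemma in hand, $Tf = (\eta_Y\circ f)^{*}$ and $\mu_X = (\id_{TX})^{*}$, and the classical Kleisli triple laws follow directly. Indeed, $\eta^{*}_X = \eta_X\circ_T \id_{TX} = \id_{TX}$ by the left unit law. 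Next, $f^{*}\circ\eta_X = f\circ_T\eta_X = f$. Finally, $(g^{*}\circ f)^{*} = (g\circ_T f)\circ_T \id = g\circ_T(f\circ_T\id) = g\circ_T f^{*} = g^{*}\circ f^{*}$, using associativity and the lemma.

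Using only these three triple laws, I then verify the classical axioms in the standard order.

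\textbf{Functoriality.} We have $T\id_X = \eta_X^{*} = \id_{TX}$. For composites, $Tg\circ Tf = (\eta\circ g)^{*}\circ(\eta\circ f)^{*} = ((\eta\circ g)^{*}\circ\eta\circ f)^{*} = (\eta\circ g\circ f)^{*} = T(g\circ f)$.

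\textbf{Naturality of $\eta$.} We compute $Tf\circ\eta_X = (\eta\circ f)^{*}\circ\eta = \eta\circ f$.

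\textbf{Naturality of $\mu$.} On one side, $\mu_Y\circ TTf = \id^{*}\circ(\eta\circ Tf)^{*} = (\id^{*}\circ\eta\circ Tf)^{*} = (Tf)^{*}$. On the other, $Tf\circ\mu_X = (\eta\circ f)^{*}\circ\id^{*} = ((\eta\circ f)^{*})^{*} = (Tf)^{*}$, so the two agree.

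\textbf{Unit coherences.} We have $\mu\circ\eta T = \id^{*}\circ\eta_{TX} = \id$. Also $\mu\circ T\eta = \id^{*}\circ(\eta_{TX}\circ\eta_X)^{*} = (\id^{*}\circ\eta\circ\eta)^{*} = \eta^{*} = \id$.

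\textbf{Associativity coherence.} On one side, $\mu\circ T\mu = \id^{*}\circ(\eta\circ\id^{*})^{*} = (\id^{*})^{*}$. On the other, $\mu\circ\mu T = \id^{*}\circ\id_{TTX}^{*} = (\id^{*}\circ\id)^{*} = (\id^{*})^{*}$, so again the two sides coincide.

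For the last claim, I compute $\mu_Z\circ Tg\circ f = \id^{*}\circ(\eta\circ g)^{*}\circ f = (\id^{*}\circ\eta\circ g)^{*}\circ f = g^{*}\circ f = g\circ_T f$, by the factorisation lemma.

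The main obstacle I expect is the bookkeeping in the factorisation lemma and in the triple law $(g^{*}\circ f)^{*} = g^{*}\circ f^{*}$. There, law (4) is the only bridge between ordinary composition $\circ$ in $\catex$ and $\circ_T$, so I must be careful about which identity ($\id_{TX}$ versus $\id_{TY}$) and which object indices appear. Once that lemma is secured, every remaining step is a mechanical rewriting using the three triple laws.
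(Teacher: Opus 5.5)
Your proof is correct, and it is organised differently from the paper's. The paper checks each of the six points (functoriality, naturality of $\eta$ and of $\mu$, the two coherence conditions, and recovery of $\circ_T$) by a separate rewriting chain that uses only the raw axioms (4)--(6) and the category laws of $\catex$. Along the way it re-derives, inline and many times, facts such as $(g\circ_T \id_{TY})\circ f = g\circ_T f$ (the recurring ``compatibility and $\circ$-unit'' step) and the rebracketing $(g\circ_T f)\circ_T\id_{TX} = g\circ_T(f\circ_T\id_{TX})$.

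You instead introduce the extension operator $f^{*} = f\circ_T\id_{TX}$ and prove the factorisation $g\circ_T f = g^{*}\circ f$ once. From it you derive the three Kleisli-triple laws $\eta^{*}=\id$, $f^{*}\circ\eta = f$ and $g^{*}\circ f^{*} = (g^{*}\circ f)^{*}$, and then run the standard Manes-style passage from Kleisli triples to monads.

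The underlying computations coincide. For naturality of $\mu$, both proofs meet at $(Tf)^{*} = ((\eta_{T,Y}\circ f)\circ_T\id_{TX})\circ_T\id_{TTX}$. For the associativity coherence, both meet at $((\id_{TX})^{*})^{*} = (\id_{TX}\circ_T\id_{TTX})\circ_T\id_{TTTX}$.

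Your route is shorter and more modular, and each classical axiom becomes a one- or two-line rewrite. It also makes explicit that Definition~\ref{def:monad} is a repackaging of a Kleisli triple, which is the same bridge the paper's footnote uses for the converse direction. The paper's version instead gives step-by-step traceability to the stated axioms without any intermediate structure.

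One small correction: your factorisation lemma uses (4) together with the identity law $\id_{TY}\circ f = f$ of $\catex$, not a unit law of $\circ_T$. The unit laws (5) enter only in $\eta^{*}=\id$ and $f^{*}\circ\eta = f$, and your displayed computation already reflects this.
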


\begin{proof}
  We prove the claim in six points as follows:
  \begin{enumerate}[label=(\roman*)]
    \item \emph{The maps $Tf$ upgrade $T(-)$ into an endofunctor $\catex\to\catex$.}

      \begin{itemize}
        \item For identities, let $X\in\Ob(\catex)$ be any object.
We have
          $T(\id_X)= (\eta_{T,X} \circ \id_X) \circ_T \id_{TX} = \id_{TX}$,
          by the $\circ$-unit and $\circ_T$-unit laws.

        \item For composition, let $f\in\catex(X,Y)$ and $g\in\catex(Y,Z)$.
          We have
\begin{align*}
           Tg \circ Tf
           &= ((\eta_{T,Z} \circ g) \circ_T \id_{TY}) \circ ((\eta_{T,Y} \circ f) \circ_T \id_{TX})
              & \text{(by definition)} \\
           &= (\eta_{T,Z} \circ g) \circ_T (\id_{TY} \circ ((\eta_{T,Y} \circ f) \circ_T \id_{TX}))
              & \text{(compatibility)} \\
           &= (\eta_{T,Z} \circ g) \circ_T ((\eta_{T,Y} \circ f) \circ_T \id_{TX})
              & \text{($\circ$-unit)} \\
           &= ((\eta_{T,Z} \circ g) \circ_T (\eta_{T,Y} \circ f)) \circ_T \id_{TX}
              & \text{(associativity of $\circ_T$)} \\
           &= (((\eta_{T,Z} \circ g)\circ_T \eta_{T,Y}) \circ f) \circ_T \id_{TX}
              & \text{(compatibility)} \\
           &= ((\eta_{T,Z} \circ g) \circ f) \circ_T \id_{TX}
              & \text{($\circ_T$-unit)} \\
           &= (\eta_{T,Z} \circ (g \circ f)) \circ_T \id_{TX}
              & \text{(associativity of $\circ$)} \\
           &= T(g\circ f)
              & \text{(by definition)}.
         \end{align*}

      \end{itemize}

    \item \emph{$\eta_T:\Id_\catex\Rightarrow T$ is a natural transformation},
      i.e.\ $Tf \circ \eta_{T,X} = \eta_{T,Y} \circ f$ for all $f\in\catex(X,Y)$.
\begin{align*}
        Tf \circ \eta_{T,X}
        & = \big((\eta_{T,Y} \circ f)\circ_T \id_{TX}\big) \circ \eta_{T,X}
          & \text{(by definition)} \\
        & = (\eta_{T,Y} \circ f)\circ_T (\id_{TX} \circ \eta_{T,X})
          & \text{(compatibility)} \\
        & = (\eta_{T,Y} \circ f)\circ_T \eta_{T,X}
          & \text{($\circ$-unit)} \\
        & = \eta_{T,Y} \circ f
          & \text{($\circ_T$-unit)}.
      \end{align*}

    \item \emph{$\mu_T:TT\Rightarrow T$ is a natural transformation},
      i.e.\ $Tf \circ \mu_{T,X} = \mu_{T,Y} \circ TTf$ for all $f\in\catex(X,Y)$.
\begin{align*}
        \mu_{T,Y} \circ TTf
        & = \big(\id_{TY}\circ_T\id_{TTY}\big)
            \circ
            \Big(\big(\eta_{T,TY} \circ
                   \big((\eta_{T,Y} \circ f) \circ_T \id_{TX}\big)\big)
                 \circ_T \id_{TTX}\Big)
          & \text{(by definition)} \\
        &= \id_{TY} \circ_T
\Big(\big(\eta_{T,TY} \circ
                   \big((\eta_{T,Y} \circ f) \circ_T \id_{TX}\big)\big)
                 \circ_T \id_{TTX}\Big)
          & \text{(compatibility and $\circ$-unit)} \\
        &= \Big(\id_{TY} \circ_T
\big(\eta_{T,TY} \circ
                   \big((\eta_{T,Y} \circ f) \circ_T \id_{TX}\big)\big)
                 \Big) \circ_T \id_{TTX}
          & \text{(associativity of $\circ_T$)} \\
        &= \Big(\big(\id_{TY} \circ_T \eta_{T,TY}\big) \circ
\big((\eta_{T,Y} \circ f) \circ_T \id_{TX}\big)
                 \Big) \circ_T \id_{TTX}
          & \text{(compatibility)} \\
        &= \big((\eta_{T,Y} \circ f) \circ_T \id_{TX}\big)
              \circ_T \id_{TTX}
          & \text{($\circ_T$-unit and $\circ$-unit)} \\
        &= (\eta_{T,Y} \circ f) \circ_T (\id_{TX} \circ_T \id_{TTX})
          & \text{(associativity of $\circ_T$)} \\
        &= (\eta_{T,Y} \circ f) \circ_T \mu_{T,X}
          & \text{(by definition)} \\
        &= (\eta_{T,Y} \circ f) \circ_T (\id_{TX} \circ \mu_{T,X})
          & \text{($\circ$-unit)} \\
        &= \big((\eta_{T,Y} \circ f) \circ_T \id_{TX}\big) \circ \mu_{T,X}
          & \text{(compatibility)} \\
        &= Tf \circ \mu_{T,X}
          & \text{(by definition)}.
      \end{align*}

    \item \emph{First coherence condition: $\mu_{T,X} \circ T(\mu_{T,X}) = \mu_{T,X} \circ \mu_{T,TX}$
      for all $X\in\Ob(\catex)$.}
      We have
\begin{align*}
        \mu_{T,X} \circ T(\mu_{T,X})
        & = (\id_{TX} \circ_T \id_{TTX})
            \circ
            \big(\big(\eta_{T,TX} \circ (\id_{TX} \circ_T \id_{TTX})\big) \circ_T \id_{TTTX}\big)
          & \text{(by definition)}\\
        & = \id_{TX} \circ_T
              \big(\big(\eta_{T,TX} \circ (\id_{TX} \circ_T \id_{TTX})\big) \circ_T \id_{TTTX}\big)
          & \text{(compatibility and $\circ$-unit)}\\
        & = \big(\id_{TX} \circ_T
              \big(\eta_{T,TX} \circ (\id_{TX} \circ_T \id_{TTX})\big)\big)
            \circ_T \id_{TTTX}
          & \text{(associativity of $\circ_T$)}\\
        & = (\id_{TX} \circ_T \id_{TTX}) \circ_T \id_{TTTX}
          & \text{\!\!\!\!\!\!\!\!\!(compatibility, $\circ_T$-unit, $\circ$-unit)}\\
        & = \id_{TX} \circ_T (\id_{TTX} \circ_T \id_{TTTX})
          & \text{(associativity of $\circ_T$)}\\
        & = \id_{TX} \circ_T (\id_{TTX} \circ \mu_{T,TX})
          & \text{(by definition and $\circ$-unit)}\\
        & = (\id_{TX} \circ_T \id_{TTX}) \circ \mu_{T,TX}
          & \text{(compatibility)}\\
        & = \mu_{T,X} \circ \mu_{T,TX}
          & \text{(by definition)}.
      \end{align*}

    \item \emph{Second coherence conditions:
      $\mu_{T,X} \circ T(\eta_{T,X}) = \mu_{T,X} \circ \eta_{T,TX} = \id_{TX}$
      for all $X\in\Ob(\catex)$.}
      We have
\begin{align*}
        \mu_{T,X} \circ T(\eta_{T,X})
        &= (\id_{TX}\circ_T\id_{TTX}) \circ \big((\eta_{T,TX}\circ\eta_{T,X})\circ_T \id_{TX}\big)
          & \text{(by definition)}\\
        &= \id_{TX} \circ_T \big((\eta_{T,TX}\circ\eta_{T,X})\circ_T \id_{TX}\big)
          & \text{(compatibility and $\circ$-unit)}\\
        &= \big(\id_{TX} \circ_T (\eta_{T,TX}\circ\eta_{T,X})\big)\circ_T \id_{TX}
          & \text{(associativity of $\circ_T$)}\\
        &= \big((\id_{TX} \circ_T \eta_{T,TX})\circ\eta_{T,X}\big)\circ_T \id_{TX}
          & \text{(compatibility)}\\
        &= \id_{TX}
          & \text{($\circ_T$-unit, $\circ$-unit, $\circ_T$-unit)}.
      \end{align*}
Similarly,
\begin{align*}
        \mu_{T,X} \circ \eta_{T,TX}
        &= (\id_{TX}\circ_T\id_{TTX}) \circ \eta_{T,TX}
          & \text{(by definition)}\\
        &= \id_{TX}\circ_T (\id_{TTX} \circ \eta_{T,TX})
          & \text{(compatibility)}\\
        &= \id_{TX}
          & \text{($\circ$-unit and $\circ_T$-unit)}.
      \end{align*}

    \item \emph{$g \circ_T f = \mu_{T,Z} \circ Tg \circ f$, for all $f\in\catex(X,TY)$ and $g\in\catex(Y,TZ)$.}
      We have
\begin{align*}
        \mu_{T,Z} \circ T(g) \circ f
        & = \big(\id_{TZ}\circ_T\id_{TTZ}\big)
            \circ
            \big((\eta_{T,TZ} \circ g) \circ_T \id_{TY}\big)
            \circ f
          & \text{(by definition)} \\
        & = \id_{TZ} \circ_T
            \big(((\eta_{T,TZ} \circ g) \circ_T \id_{TY}) \circ f\big)
          & \text{(compatibility and $\circ$-unit)}\\
        & = \id_{TZ} \circ_T
            \big((\eta_{T,TZ} \circ g) \circ_T f\big)
          & \text{(compatibility and $\circ$-unit)}\\
        & = \big(\id_{TZ} \circ_T (\eta_{T,TZ} \circ g)\big) \circ_T f
          & \text{(associativity of $\circ_T$)}\\
        & = \big((\id_{TZ} \circ_T \eta_{T,TZ}) \circ g\big) \circ_T f
          & \text{(compatibility)}\\
        & = g \circ_T f
          & \text{($\circ_T$-unit and $\circ$-unit)}.
      \end{align*}
  \end{enumerate}
Points (i)--(v) show that $(T,\eta_T,\mu_T)$ satisfies all
  the conditions of Definition~\ref{def:classic-monad}, and (vi) shows that
  the Kleisli composition it induces coincides with~$\circ_T$.
\end{proof}

\subsection{Auxiliary Monad Identities}
\label{subsec:proof-auxiliary monad identities}

As mentioned above, we do not give the full backward direction to
Proposition~\ref{prop:monad-by-kleisli-equivalent-to-monad-by-mult}.
Nevertheless, we prove in next proposition that $(g \circ_T f) \circ u = g \circ_T (f \circ u)$ holds
for the definition of monads via multiplications $\mu$, and use this as a opportunity
to remind that while we also have the related $Tv \circ (g \circ_T f) = ((Tv \circ g) \circ_T f$,
we do \emph{not} have an equality between $u \circ (g \circ_T f)$ and $(u \circ g) \circ_T f$ in general.

\begin{proposition}[$\circ_T/\circ$ compatibility (1)]
   Let $(T, \eta_T, \mu_T)$ be a monad in a category $\catex$,
   in the sense of a functor $T:\catex\to\catex$
   and of natural transformations $\eta_T:\Id_\catex\Rightarrow T$ and $\mu_T:TT\Rightarrow T$
   satisfying
   $\mu_{TX} \circ T(\mu_{X}) = \mu_X \circ \mu_{TX}$
   and
   $\mu_X \circ T(\eta_X) = \mu_X \circ \eta_{TX} = \id_{TX}$
   for all $X\in\Ob(\catex)$.

   Let $\circ_T$ be defined by $g \circ_T f = \mu_{T,Z} \circ T(g) \circ f$
   for $f\in\catex(X, TY)$ and $g\in\catex(Y, TZ)$.

   Then, for all $u\in\catex(W,X)$, $f\in\catex(X, TY)$ and $g\in\catex(Y, TZ)$,
   we have $(g \circ_T f) \circ u = g \circ_T (f \circ u)$.
\end{proposition}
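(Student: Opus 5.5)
The identity follows from associativity of ordinary composition $\circ$ in $\catex$. I will unfold the definition of $\circ_T$ on both sides and check that they are the same arrow; no monad law (neither the coherence conditions nor naturality of $\eta_T$ or $\mu_T$) should be needed.

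Fix $u\in\catex(W,X)$, $f\in\catex(X,TY)$ and $g\in\catex(Y,TZ)$. By definition,
\[
(g\circ_T f)\circ u = \big(\mu_{T,Z}\circ T(g)\circ f\big)\circ u .
\]
Note that $f\circ u\in\catex(W,TY)$, so $g\circ_T(f\circ u)$ is well defined, and by definition
\[
g\circ_T(f\circ u) = \mu_{T,Z}\circ T(g)\circ(f\circ u).
\]
Both right-hand sides are the composite of $u$, $f$, $T(g)$ and $\mu_{T,Z}$, in that order, with different bracketings. Associativity of $\circ$ in the category $\catex$ lets me rebracket the first into the second, which proves the statement.

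There is no real obstacle here; the only care needed is the typing check that $f\circ u$ has the right domain and codomain for $\circ_T$. I would also add a short remark explaining the asymmetry flagged before the statement. Postcomposition by some $Tv$ is also harmless: using functoriality of $T$ and naturality of $\mu_T$, I get $Tv\circ\mu_{T,Z}\circ Tg = \mu\circ TTv\circ Tg = \mu\circ T(Tv\circ g)$. By contrast, a general $u$ postcomposed on the left cannot be pushed through $\mu$, so $u\circ(g\circ_T f)$ and $(u\circ g)\circ_T f$ need not agree.
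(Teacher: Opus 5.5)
Your proof is correct and is the paper's argument: unfold $\circ_T$ on both sides and rebracket $(\mu_{T,Z}\circ T(g)\circ f)\circ u$ into $\mu_{T,Z}\circ T(g)\circ(f\circ u)$ by associativity of $\circ$ in $\catex$; no monad law is used. Your closing remark also matches the paper's subsequent lemma: postcomposition by $Tv$ goes through via naturality of $\mu_T$ and functoriality of $T$, while for a general $u$ the two bracketings can differ.
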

\begin{proof}
   Let $u\in\catex(W,X)$, $f\in\catex(X, TY)$ and $g\in\catex(Y, TZ)$ be three morphisms in $\catex$.

   By definition and associativity of $\circ$,
   $(g \circ_T f) \circ u = (\mu_{T,Z} \circ Tg \circ f) \circ u = \mu_{T,Z} \circ Tg \circ (f \circ u) = g \circ_T (f \circ u)$.
\end{proof}

The analogous result for \emph{postcomposition} via $\circ$ is the following.

\begin{lemma}[$\circ/\circ_T$ compatibility (2)]
   \label{lemma:compatibility-T-circ-circT}
   Let $(T, \eta_T, \mu_T)$ be a monad in $\catex$ and $\circ_T$ defined as above.

   Then, for all $f\in\catex(A, TB)$, $g\in\catex(B, TC)$, and $v\in\catex(C,D)$,
   we have $T(v) \circ (g \circ_T f) = (T(v) \circ g) \circ_T f$.
\end{lemma}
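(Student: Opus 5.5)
The identity follows by direct calculation from the definition $g \circ_T f = \mu_{T,C} \circ T(g) \circ f$, using three facts: functoriality of $T$, associativity of ordinary composition, and naturality of $\mu_T$.

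I will start from the right-hand side and unfold the definition:
\[
(T(v)\circ g)\circ_T f = \mu_{T,D}\circ T(T(v)\circ g)\circ f .
\]
By functoriality, $T(T(v)\circ g) = TT(v)\circ T(g)$, so the right-hand side becomes $\mu_{T,D}\circ TT(v)\circ T(g)\circ f$.

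The key step is the naturality square of $\mu_T \colon TT\Rightarrow T$ at the arrow $v\in\catex(C,D)$. It reads $\mu_{T,D}\circ TT(v) = T(v)\circ\mu_{T,C}$. Substituting this gives
\[
T(v)\circ\mu_{T,C}\circ T(g)\circ f,
\]
which is exactly $T(v)\circ(g\circ_T f)$ by the definition of $\circ_T$, using associativity of $\circ$ throughout.

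There is no real obstacle here. The only point needing care is to use naturality of $\mu_T$ (with $TT(v)$) and not any property of $\eta_T$. This also explains why the analogous statement fails for an arbitrary $u\in\catex(C,D)$ in place of $T(v)$: the argument needs the pre-factor to be of the form $T(v)$ so that naturality can move it past $\mu_T$.
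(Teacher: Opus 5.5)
Your proof is correct and is the same calculation as the paper's (unfold the definition of $\circ_T$, apply naturality of $\mu_T$ at $v$ and functoriality of $T$, then fold back); you run it from the right-hand side, whereas the paper starts from the left. One small slip in your closing remark: an arbitrary replacement for $T(v)$ would be an arrow $u\in\catex(TC,TD)$, not $u\in\catex(C,D)$.
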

\begin{proof}
  Let $f\in\catex(A, TB)$, $g\in\catex(B, TC)$, and $v\in\catex(C,D)$ be three morphisms in $\catex$,
  so that $Tv\in\catex(TC,TD)$.
\begin{align*}
          Tv \circ (g \circ_T f)
= {}& Tv \circ \mu_{T,C} \circ Tg \circ f
          & \mathcmt{by definition}\\
= {}& \mu_{T,D} \circ TTv \circ Tg \circ f
          & \mathcmt{by naturality of $\mu$}\\
= {}& \mu_{T,D} \circ T(Tv \circ g) \circ f
          & \mathcmt{by functorality of $T$}\\
= {}& (Tv \circ g) \circ_T f
          & \mathcmt{by definition}
  \end{align*}
\end{proof}

\begin{remark}
  However, in general, the rebracketing from  $u \circ (g \circ_T f)$ to $(u \circ g) \circ_T f$ is \emph{not}
  valid in general for an arbitrary $u\in\catex(TC,TD)$ that is not necessarily of the form $u=Tv$ for $v\in\catex(C,D)$.
\end{remark}
\begin{proof}
  Consider the following counter-example. Let $T(-)$ be the \texttt{Maybe} monad in \Set,
  where we will write $TX = \{\None_X\} \uplus \{\Some(x)\,|\,x\in X\}$ for readability.
Recall that for $f:X\to Y$, $Tf:TX\to TY$ is defined by $(Tf)(\None_X)=\None_Y$ and
  $(Tf)(\Some(x))=\Some(f(x))$,
  and that for the multiplication, $\mu_{T,X}(\Some(\Some(x))=\Some(x)$,
  but $\mu_{T,X}(\Some(\None_X)) = \mu_{T,X}(\None_{TX}) = \None_X$.

  Let $\unit=\{*\}$ be a one point set,
  and define $f:\unit\to T\unit$ by $f(*) = \None_\unit$,
  $g:\unit\to T\unit$ by $g(*)=\Some(*)$,
  as well as $u:T\unit \to T\unit$ by $u(\Some(*)) = \None_\unit$ and $u(\None_\unit) = \Some(*)$.
We compare the two bracketings as functions $\unit\to T\unit$.

  In the first case, we have
  \begin{align*}
    (u \circ (g \circ_T f))(*)
    & = (u \circ \mu_{T,C}\circ Tg \circ f)(*)
      = (u \circ \mu_{T,C}\circ Tg)(\None_\unit) \\
    & = (u \circ \mu_{T,C})(\None_{T\unit})
      = u(\None_\unit)
      = \Some(*).
  \end{align*}
  In the second case,
  \begin{align*}
  ((u \circ g) \circ_T f)(*)
   & = (\mu_{T,D} \circ Tu \circ Tg \circ f)(*)
     = (\mu_{T,D} \circ Tu \circ Tg)(\None_\unit)\\
   & = (\mu_{T,D} \circ Tu)(\None_{T\unit})
     = \mu_{T,D}(\None_{T\unit})
     = \None_{\unit}.
  \end{align*}
In particular, $(u \circ (g \circ_T f))\neq((u \circ g) \circ_T f)$.
\end{proof}

\begin{remark}
  More generally, various counterexample can be built for any $u\in\catex(TX, TY)$
  such that $u \circ \mu_{T,X} \neq \mu_{T,Y}\circ Tu$.
  In contrast, if $u$ preserve multiplication in this sense, then
  $u \circ (g \circ_T f) = (u \circ g) \circ_T f$.
\end{remark}
 \vfill

\newpage

\begin{figure}[H]
  \section{Operator Semantics of \IR instantiated to the Maybe monad}
  \label{sec:operator-semantics-maybe-monad}
  \footnotesize
\begin{align*}
    \semop{{\tt Prog}}\, &
      \in \End\Big(\prod_{f\in \Funs} \Mem_{\Vars_{f, in}} \to \big(\Mem_{\Vars_{f, out}}\big)_\bot\Big)
       \\[5pt]
& = \funvec
         \mapsto \left(\argvec \mapsto
           \begin{cases}
             \pi_{\{X_{\text{res}}\}}\Big(\semstmt{f_{\text{body}}}(\funvec)(\argvec)\Big)
                  & \text{if } \semstmt{f_{\text{body}}}(\funvec)(\argvec) \neq \bot\\
             \bot & \text{if } \semstmt{f_{\text{body}}}(\funvec)(\argvec) = \bot
           \end{cases}
         \right)_{f \in \Funs}
  \end{align*}

  \vspace{5pt}
  \hrule height 0.08em
  \smallskip
  \hrule height 0.08em

  \begin{align*}
  \semstmt{^{\ell_{in}}{\tt stmt}^{\ell_{out}}}\,&
       \in \Big(\prod_{f\in \Funs} \Mem_{\Vars_{f, in}} \to \big(\Mem_{\Vars_{f, out}}\big)_\bot\Big)
       \to \Big(\Mem_{\Vars_{\ell_{in}}} \to \big(\Mem_{\Vars_{\ell_{out}}}\big)_\bot\Big)\\
\semstmt{{\tt skip}}(\funvec)
    & = m \mapsto
      m \\
  \semstmt{{\tt s1 ; s2}}(\funvec)
    & = m \mapsto
      \begin{cases}
        \semstmt{{\tt s2}}(\funvec)\big(\semstmt{{\tt s1}}(\funvec)(m)\big)
             & \text{if }\semstmt{{\tt s1}}(\funvec)(m) \neq \bot\\
        \bot & \text{if }\semstmt{{\tt s1}}(\funvec)(m) = \bot
      \end{cases}\\
  \semstmt{ X = {\tt expr}}(\funvec)
    & = m \mapsto
      \begin{cases}
        m\big[X \leftarrow \semexpr{{\tt expr}}(\funvec)(m)\big]
             & \text{if }\semexpr{{\tt expr}}(\funvec)(m) \neq \bot\\
        \bot & \text{if }\semexpr{{\tt expr}}(\funvec)(m) = \bot
      \end{cases}\\
  \semstmt{{\tt {\bf if}~cond~{\bf then}~\{ s1 \}~{\bf else}~\{ s2 \}}}(\funvec)
    & = m \mapsto
      \begin{cases}
        \semstmt{{\tt s1}}(\funvec)(m) & \text{if }\sembexp{{\tt cond}}(\funvec)(m) = {\tt true}\\
        \semstmt{{\tt s2}}(\funvec)(m) & \text{if }\sembexp{{\tt cond}}(\funvec)(m) = {\tt false}\\
        \bot                           & \text{if }\sembexp{{\tt cond}}(\funvec)(m) = \bot
      \end{cases}\\
\widesemstmt{
    \begin{aligned}
          &\tt {\bf match}~expr~\{ \dots\\
          &\quad{}\mid C_i(X_{i1},\dots,X_{ik}) {\tt~=>~stmt_i}\,\\
          &\ \quad{}\dots\}
    \end{aligned}}\hspace{-5pt}(\funvec) & = m \mapsto
      \begin{cases}
        \left(
        \begin{aligned}
          &\text{match $\semexpr{{\tt expr}}(\funvec)(m)$ with}
          \dots\\
          &\quad{}C_i(e_1,\dots,e_k)\ {\tt =>}\\
          &\qquad\semstmt{{\tt stmt_i}}(\funvec)(m[X_{ij} \gets e_j : j \in [1..k]])\\
          &\dots
        \end{aligned}
        \right)
             & \text{if } \semexpr{{\tt expr}}(\funvec)(m) \neq \bot\\
        \bot & \text{if } \semexpr{{\tt expr}}(\funvec)(m) = \bot
      \end{cases}
  \end{align*}

  \vspace{5pt}
  \hrule height 0.08em
  \smallskip
  \hrule height 0.08em

  \begin{align*}
  \semexpr{^\ell{\tt expr}}^{\tau} &
       \in \Big(\prod_{f\in \Funs} \Mem_{\Vars_{f, in}} \to \big(\Mem_{\Vars_{f, out}}\big)_\bot\Big)
       \to \Big(\Mem_{\Vars_{\ell}} \to \big(\semtypes{\tau}\big)_\bot\Big)\\[5pt]
\semexpr{{\tt X}}(\funvec)(m)
         & = m[{\tt X}]\\
\semexpr{C({\tt expr_1, \dots, expr_k})}(\funvec)(m)
         & = \begin{cases}
               C\Big(\semexpr{{\tt expr_1}}(\funvec)(m), \dots, \semexpr{{\tt expr_k}}(\funvec)(m)\Big)
                    & \text{if }\forall i,\, \semexpr{{\tt expr_i}}(\funvec)(m) \neq \bot\\
               \bot & \text{if }\exists i,\, \semexpr{{\tt expr_i}}(\funvec)(m) = \bot
             \end{cases}\\
\semexpr{{\tt f(expr_1, \dots, expr_k)}}(\funvec)(m)
         & = \begin{cases}
               \funvec_{\tt f}\Big(\semexpr{{\tt expr_1}}(\funvec)(m), \dots, \semexpr{{\tt expr_k}}(\funvec)(m)\Big)
                    & \text{if }\forall i,\, \semexpr{{\tt expr_i}}(\funvec)(m) \neq \bot\\
               \bot & \text{if }\exists i,\, \semexpr{{\tt expr_i}}(\funvec)(m) = \bot
             \end{cases}\\
\semexpr{{\tt expr_1}\, \Diamond_{\mathrm{a}}\, {\tt expr_2}}(\funvec)(m)
         & = \begin{cases}
               \semexpr{{\tt expr_1}}(\funvec)(m) \,\Diamond_{\mathrm{a}}\, \semexpr{{\tt expr_2}}(\funvec)(m)
                    & \text{if }\forall i,\, \semexpr{{\tt expr_i}}(\funvec)(m) \neq \bot\\
               \bot & \text{if }\exists i,\, \semexpr{{\tt expr_i}}(\funvec)(m) = \bot
             \end{cases}\\
\semexpr{{c}}(\funvec)(m)
         & = c
  \end{align*}

  \vspace{5pt}
  \hrule height 0.08em
  \smallskip
  \hrule height 0.08em

  \begin{align*}
  \sembexp{^\ell {\tt cond}} &
       \in \Big(\prod_{f\in \Funs} \Mem_{\Vars_{f, in}} \to \big(\Mem_{\Vars_{f, out}}\big)_\bot\Big)
       \to \Big(\Mem_{\Vars_{\ell}} \to \bb_\bot\Big)\\[5pt]
\sembexp{{\tt expr_1}\, \Bowtie\, {\tt expr_2}}(\funvec)(m)
         & = \begin{cases}
               \semexpr{{\tt expr_1}}(\funvec)(m) \,\Bowtie\, \semexpr{{\tt expr_2}}(\funvec)(m)
                    & \text{if }\forall i,\, \semexpr{{\tt expr_i}}(\funvec)(m) \neq \bot\\
               \bot & \text{if }\exists i,\, \semexpr{{\tt expr_i}}(\funvec)(m) = \bot
             \end{cases}\\
       \sembexp{{\tt cond_1}\, \Diamond_{\mathrm{b}}\, {\tt cond_2}}(\funvec)(m)
         & = \begin{cases}
               \sembexp{{\tt cond_1}}(\funvec)(m) \,\Diamond_{\mathrm{b}}\, \sembexp{{\tt cond_2}}(\funvec)(m)
                    & \text{if }\forall i,\, \sembexp{{\tt cond_i}}(\funvec)(m) \neq \bot\\
               \bot & \text{if }\exists i,\, \sembexp{{\tt cond_i}}(\funvec)(m) = \bot
             \end{cases}\\
       \sembexp{{b}}(\funvec)(m)
         & = b, \quad\ \text{for } b \in \{\textbf{true},\textbf{false}\}
  \end{align*}

  \caption{Operator semantics of the \IR language of Fig.~\ref{fig:syntax-ir-num-det}
    for the Maybe monad $(-)_\bot$.}
  \label{fig:opsem-ir-num-fail-extension}
\end{figure}

\section{Example: Probabilistic Semantics}
\label{app:proba-sem}

Powersets and derived constructions will be our main instantiation of the monadic approach to semantics (as is standard in abstract interpretation). However, the advantage of this approach is that it generalizes to more exotic situations. We present an example for programs with probabilities.

Following e.g.~\cite{Hasuo15}, consider the monad $(\dd_{=1}(-),\eta_{\dd_{=1}},\circ_{\dd_{=1}})$ on \Set,
given by
\begin{itemize}
  \setlength{\itemsep}{3pt}
  \item $\dd_{=1}(X) \coloneqq \{\mu : X \to [0,1]\,|\,\sum_{x \in X} \mu(x) = 1\}$, i.e., the distributions with countable support\footnote{\label{footnote:infinite-sums-weight}Here, $\sum_{x\in X}\mu(x)$ is defined as $\sup\{ \sum_{x\in F}\mu(x)\,|\, F\subseteq X \text{ finite}\}$.
  Since $\sum_{x\in X}\mu(x) \leq 1$, $\mu$ has countable support.} \item $\eta_{\dd_{=1}}$ as the dirac distributions defined as $\eta_{\dd_{=1},X}(x) \coloneqq r \mapsto \text{ite}(x = r, 1, 0)$
  \item $(g \circ_{\dd_{=1}} f)(x)(z) \coloneqq \sum_{y \in Y}f(x)(y)\cdot g(y)(z)$, i.e., the weighted average.
\end{itemize}

As it stands, $\dd_{=1}(X)$ lacks a convenient order structure that would give us
Propositions~\ref{prop:cpo-image-lfp} and~\ref{prop:clat-image-KT}.
A common approach in denotational semantics of recursive probabilistic systems is to consider
\emph{subdistributions} instead: $\dd_{\leq 1}(X):= \{\mu : X \to [0,1]\,|\,\sum_{x \in X} \mu(x) \leq 1\}$. These form a CPO under the pointwise order inherited from the usual order on $([0,1],\leq)$.
Proposition~\ref{prop:cpo-image-lfp} tells us that $\lfp\semop{\Prog}^{\dd_{\leq 1}}$ is well-defined.
This lfp may have total weight strictly less than one: this
mirrors the behaviour of partial functions in the non-probabilistic setting (using the Maybe monad),
where non-terminating traces simply do not contribute to the final weight.

Like $X_\bot$, subdistributions $\dd_{\leq 1}(X)$ do not generally form a complete
lattice (e.g. the join of two dirac distributions typically has weight $2>1$).
To enable postfixpoint reasoning (Proposition~\ref{prop:clat-image-KT}),
we can generalise $\dd_{\leq 1}(X)$ further,
considering instead the monad $(\ww_\infty(-),\eta_{\ww_\infty},\circ_{\ww_\infty})$
where $\ww_\infty(X) \coloneqq X\to[0,\infty]$ (with no restrictions)
ands $\eta_{\ww_\infty}$ and $\circ_{\ww_\infty}$ are defined as before,
using the conventions $0\cdot\infty  = 0$ and $x + \infty = \infty$.
Under the pointwise order, $\ww_\infty(X)$ is a complete lattice into which $\dd_{\leq 1}(X)$ naturally embeds.
Leveraging this structure alongside postfixpoint proof techniques, we can establish bounds on the
outcome probabilities of various processes.

For instance, consider an extension of~\IR (without {\color{azure}non-determinism}) by a probabilistic choice primitive
${{{\bf flip}(p)}\in{\tt BExp}}$, defined for each rational $p \in [0,1]$.
Selecting~$\ww_\infty(-)$ as the monad of choice in
Fig.~\ref{fig:opsem-ir-num-monadic}, we define $\sembexp{{\bf flip}(p)}(\funvec)(m)$
as the distribution ${[{\bf true}\mapsto p,\ {\bf false}\mapsto 1-p]}$.
Consider the program $\Prog$ walking randomly in $\zz$, with a 1/10 probability of exiting at each step:

\vspace{\baselineskip}
{\setlength{\tabcolsep}{3pt}\begin{tabular}{rl}
    \tt {\bf def} f(x:{\bf int}) -> {\bf int} \{
      & \tt {\bf if} {\bf flip}(1/10) {\bf then } \{\,{\bf skip}\,\}\\
      & \tt {\bf else} \{\,{\bf if} {\bf flip}(1/2) {\bf then} \{\,x = f(x-1)\,\} {\bf else} \{\,x = f(x+1)\,\}\,\} \\
      & \tt {\bf return} x\,\}.
  \end{tabular}}
\vspace{\baselineskip}

  We can compute its operator semantic, and obtain the operator
  \begin{align*}
    \semop{{\tt Prog}} \colon \big(\zz^2\to \ww_\infty(\zz)\big) &\to \zz^2 \to \ww_\infty(\zz)\\[-3pt]
    f &\mapsto x \mapsto \textstyle \frac{1}{10} \cdot \eta_{\dd_{=1},\zz}(x) + \frac{9}{10}\Big(\frac{1}{2} \cdot f(x-1) + \frac{1}{2} \cdot f(x+1)\Big),
  \end{align*}
  As $\ww_\infty(-)$ admits significantly more functions than $\dd_{=1}(-)$, many postfixpoints of $\semop{{\tt Prog}}$
  yield no meaningful information about the outcome probabilities of the program.
  But there are also many that do.
  An example is~$\candf(x)(r) \coloneqq \frac{1}{4}\Big(\frac{2}{3}\Big)^{\mathrm{|x-r|}}$:
its total weights are above $1$ (it does not yield distributions), but it still provides an upper
  bound strictly less than $1$ on the probability that the
  program exits with output $r$ on input $x$.
One can check (manually, or via a computer algebra system) that $\candf$ is a postfixpoint:
  $\semop{{\tt Prog}}(\candf)(x)(r) = \ite\Big(x=r,\, \frac{1}{4}, \,\frac{39}{40}\candf(x)(r)\Big) \leq \candf(x)(r)$
  for~all~${x,r\in\zz}$.

\begin{figure}[H]
  \section{Abstract Operator Semantics of \IR for Size Abstraction}
  \label{sec:operator-semantics-abs-sizes}
  \footnotesize
  \begin{align*}
    \asemop{{\tt Prog}}\, &
      \in \End\Big(\prod_{f\in \Funs} \aMem_{\Vars_{f, in}} \to \pp\big(\aMem_{\Vars_{f, out}}\big)\Big)
       \\[5pt]
& = \funvec^\sharp
         \mapsto \left(\argvec \mapsto
            \Big(\big(\eta_\pp \circ \pi_{\{X_{\text{res}}\}}\big)\circ_\pp \asemstmt{f_{\text{body}}}(\funvec^\sharp)\Big)(\argvec)
         \right)_{f \in \Funs}
  \end{align*}

  \vspace{3pt}
  \hrule height 0.08em
  \smallskip
  \hrule height 0.08em

  \begin{align*}
  \asemstmt{^{\ell_{in}}{\tt stmt}^{\ell_{out}}}\,&
       \in \Big(\prod_{f\in \Funs} \aMem_{\Vars_{f, in}} \to \pp\big(\aMem_{\Vars_{f, out}}\big)\Big)
       \to \Big(\aMem_{\Vars_{\ell_{in}}} \to \pp\big(\aMem_{\Vars_{\ell_{out}}}\big)\Big)\\
\asemstmt{{\bf skip}}(\funvec^\sharp)
    & = \eta_\pp \circ \alpha_{\Mem_{\Vars_\ell}} \circ \gamma_{\Mem_{\Vars_\ell}}
        \quad{\left(\parbox[l]{5.5cm}{closure, removing infeasible abstract memories,\\ applicable everywhere, see Section~\ref{subsubsec:size-transfer-skip}.}\right)}
        \\
    & = m^\sharp \mapsto
      \begin{cases}
        \eta_\pp(m^\sharp)  & \text{if } \exists m^\natural,\ \mathcal{M}(m^\natural)=m^\sharp\\
        \bot_\pp  & \text{otherwise}
      \end{cases}\\
  \asemstmt{{\tt s1 ; s2}}(\funvec^\sharp)
    & = m^\sharp \mapsto
      \big(\asemstmt{{\tt s2}}(\funvec^\sharp)\circ_\pp \asemstmt{{\tt s1}}(\funvec^\sharp)\big)(m^\sharp) \\
  \asemstmt{ X = {\tt expr}}(\funvec^\sharp)
    & = m^\sharp \mapsto
        \letin{\pp}
              {a}{\asemexpr{{\tt expr}}(\funvec^\sharp)(m^\sharp)}
              {\eta_\pp\big(m^\sharp[X \leftarrow a]\big)} \\
  \asemstmt{{\tt {\bf if}~cond~{\bf then}~\{ s1 \}~{\bf else}~\{ s2 \}}}(\funvec^\sharp)
    & = m^\sharp \mapsto
        \letin{\pp}
              {b}{\asembexp{{\tt cond}}(\funvec^\sharp)(m^\sharp)\\[-3pt]&\phantom{~=m\mapsto}}
              {\ite\big(b,\,\asemstmt{{\tt s1}}(\funvec^\sharp)(m^\sharp),\,\asemstmt{{\tt s2}}(\funvec^\sharp)(m^\sharp)\big)}
            \\
\widesemstmt{
    \begin{aligned}
          &\tt {\bf match}~expr~\{ \dots\\
          &\quad{}\mid C_i(X_{i1},\dots,X_{ik}) {\tt~=>~stmt_i}\,\\
          &\ \quad{}\dots\}
    \end{aligned}}^{\sharp}\hspace{-5pt}(\funvec^\sharp) & = m^\sharp \mapsto \letin{\pp}{\vec{v}}{\asemexpr{{\tt expr}}(\funvec^\sharp)(m^\sharp)}{
    \left(
      \begin{aligned}
        &{\textstyle\bigcup_i}\ \letin{\pp}{(\vec{v}_1,\dots,\vec{v}_k)}{\asem{{\tt dnstr_i}}(\vec{v})}\\
        &\hphantom{{\textstyle\bigcup_i}\ }\Big((\eta_\pp \circ \pi_{\Vars_{\ell,out}}) \circ_\pp \asemstmt{{\tt stmt_i}}(\funvec^\sharp)\Big)\\
        &\hphantom{{\textstyle\bigcup_i}\ \Big(}\big(m^\sharp[X_{ij} \gets \vec{v}_j : j \in [1..k]]\big)
      \end{aligned}
    \right)}
  \end{align*}

  \vspace{3pt}
  \hrule height 0.08em
  \smallskip
  \hrule height 0.08em

  \begin{align*}
  {\asemexpr{^\ell{\tt expr}}}^{\tau} &
       \in \Big(\prod_{f\in \Funs} \aMem_{\Vars_{f, in}} \to \pp\big(\aMem_{\Vars_{f, out}}\big)\Big)
       \to \Big(\aMem_{\Vars_{\ell}} \to \pp\big(\zz^{k_\tau}\big)\Big)\\[5pt]
\asemexpr{{\tt X}}(\funvec^\sharp)(m^\sharp)
         & = \eta_\pp\big(m^\sharp[{\tt X}]\big)\\
\asemexpr{C({\tt expr_1, \dots, expr_k})}(\funvec^\sharp)(m^\sharp)
         & = \letin{\pp}{a_1}{{\asemexpr{{\tt expr_1}}}(\funvec^\sharp)(m^\sharp)}
               {\dots~
                 \letin{\pp}{a_k}{{\asemexpr{{\tt expr_k}}}(\funvec^\sharp)(m^\sharp)
                 \\[-3pt]&\phantom{={}}}
                {\asem{{\tt cnstr}}(a_1, \dots, a_k)}}\\
\asemexpr{{\tt f(expr_1, \dots, expr_k)}}(\funvec^\sharp)(m^\sharp)
            &=  \letin{\pp}{a_1}{{\asemexpr{{\tt expr_1}}}(\funvec^\sharp)(m^\sharp)}
               {\dots~
                 \letin{\pp}{a_k}{{\asemexpr{{\tt expr_k}}}(\funvec^\sharp)(m^\sharp)
                 \\[-3pt]&\phantom{={}}}
                {\funvec^\sharp_{\tt f}(a_1, \dots, a_k)}}\\
\asemexpr{{\tt expr_1}\, \Diamond_{\mathrm{a}}\, {\tt expr_2}}(\funvec^\sharp)(m^\sharp)
          &=  \letin{\pp}{a_1}{\asemexpr{{\tt expr_1}}(\funvec^\sharp)(m^\sharp)}
             {\letin{\pp}{a_2}{\asemexpr{{\tt expr_2}}(\funvec^\sharp)(m^\sharp)}}
             {\eta_\pp(a_1\,\Diamond_{\mathrm{a}}\,a_2)}\\
\asemexpr{{c}}(\funvec^\sharp)(m^\sharp)
         & = \eta_\pp(c)
  \end{align*}

  \vspace{3pt}
  \hrule height 0.08em
  \smallskip
  \hrule height 0.08em

  \begin{align*}
  \asembexp{^\ell {\tt cond}} &
       \in \Big(\prod_{f\in \Funs} \aMem_{\Vars_{f, in}} \to \pp\big(\aMem_{\Vars_{f, out}}\big)\Big)
       \to \Big(\aMem_{\Vars_{\ell}} \to \pp\big(\bb\big)\Big)\\[5pt]
\asembexp{{\tt expr_1}\, \Bowtie\, {\tt expr_2}}(\funvec^\sharp)(m^\sharp)
          &=  \letin{\pp}{a_1}{\asemexpr{{\tt expr_1}}(\funvec^\sharp)(m^\sharp)}
             {\letin{\pp}{a_2}{\asemexpr{{\tt expr_2}}(\funvec^\sharp)(m^\sharp)}}
             {\eta_\pp(a_1\,\Bowtie\,a_2)}\\
       \asembexp{{\tt cond_1}\, \Diamond_{\mathrm{b}}\, {\tt cond_2}}(\funvec^\sharp)(m^\sharp)
          &=  \letin{\pp}{b_1}{\asembexp{{\tt cond_1}}(\funvec^\sharp)(m^\sharp)}
             {\letin{\pp}{b_2}{\asembexp{{\tt cond_2}}(\funvec^\sharp)(m^\sharp)}}
             {\eta_\pp(b_1\,\Diamond_{\mathrm{b}}\,b_2)}\\
       \asembexp{{b}}(\funvec^\sharp)(m^\sharp)
         & = \eta_\pp(b), \quad\ \text{for } b \in \{\textbf{true},\textbf{false}\}
  \end{align*}

  \vspace{3pt}
  \hrule height 0.08em
  \smallskip
  \hrule height 0.08em

  \begin{align*}
    \asemstmt{{\textbf{assume}(cond)}}(\funvec^\sharp)(m^\sharp) & = \ite\big(\textbf{true} \in \asembexp{cond}(\funvec^\sharp)(m^\sharp),\ \{m^\sharp\},\ \emptyset\big)\\
    \asemexpr{{(\top : \tau)}}(\funvec^\sharp)(m^\sharp) & = \top_\pp\ (= \zz^{k_\tau})
  \end{align*}

  \vspace{-1em}

  \caption{\footnotesize
    Abstract operator semantics of the \IR language of Fig.~\ref{fig:syntax-ir-num-det} for size abstraction (Section~\ref{subsec:size-abs-impl-comments}).
The number of metrics defined on type $\tau$ is denoted by $k_\tau\in\nn$.
    The only metric on {{\tt\bf int}} is the identity (integer value).
For $\Mem_\Vars = \prod_{t\in\Types}\big(\Vars_t \to \semtypes{t}\big)$,
        $\aMem_\Vars := \prod_{t\in\Types}\big(\Vars_t \to \zz^{k_t}\big)$.
    We denote by $\mathcal{M}:\Mem_\Vars\to \aMem_\Vars$ the operation which applies
    all metrics, replacing values by size vectors.
Feasibility constraints are left implicit outside of ${{\tt\bf skip}}$.
  }
  \label{fig:opsem-ir-abs-size-version}
\end{figure}

\section{Proofs}

\subsection{Proofs of Section~\ref{subsec:cat-gc}}
\label{subsec:proof-subsubsec-functors-C-CLat-sqcup}

\begin{proposition*}[\ref{prop:optimal-oplax-functor-by-GC}]
  Let $T:\catex\to\CLat$ be an oplax functor
  and $\big(TX \galois{\alpha_X}{\gamma_X} A_X\big)_{X\in\Ob(\mathcal{C})}$
  be an object-wise collection of Galois connections.
Then, the following is an oplax functor.
\begin{align*}
    T^\sharp:
           \catex &\to \CLat \\
               X  &\mapsto A_X \\
       (f:X\to Y) &\mapsto \alpha_Y \circ Tf \circ \gamma_X
  \end{align*}
Moreover, $\alpha:T\Rightarrow T^\sharp$ is an additive oplax natural transformation.
Furthermore, this oplax functor is \emph{initial} among all among all sound abstractions of $T$,
  i.e. among all oplax functors $T':\catex\Rightarrow\CLat_\sqcup$ such that
  $\alpha:T\Rightarrow T'$ is oplax natural.
\end{proposition*}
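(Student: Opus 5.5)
The plan is to verify the three claims (oplax functoriality of $T^\sharp$, additive oplax naturality of $\alpha$, and initiality) directly from the Galois connection inequalities of Fact~\ref{fact:gc-inequalities}, namely $\id_{TX}\pleq\gamma_X\circ\alpha_X$ and $\alpha_X\circ\gamma_X\pleq\id_{A_X}$. I will combine these with two simple observations. First, every arrow of $\CLat$ (in particular every $Tf$, $\alpha_X$, $\gamma_X$) is monotone. Second, in $\CLat$ pre- and post-composition preserve the pointwise order $\pleq$, provided the post-composed map is monotone. I first check that $T^\sharp$ is well defined: $T^\sharp f=\alpha_Y\circ Tf\circ\gamma_X$ is a composite of monotone maps between complete lattices, hence an arrow $A_X\to A_Y$ of $\CLat$.

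For oplaxity of $T^\sharp$, I will chain oplaxity of $T$ with insertion of the unit of the adjunction:
\begin{align*}
T^\sharp(f\circ g) &= \alpha_Z\circ T(f\circ g)\circ\gamma_X \pleq \alpha_Z\circ Tf\circ Tg\circ\gamma_X \\
&\pleq \alpha_Z\circ Tf\circ\gamma_Y\circ\alpha_Y\circ Tg\circ\gamma_X = T^\sharp f\circ T^\sharp g.
\end{align*}
The first step uses $T(f\circ g)\pleq Tf\circ Tg$ together with monotonicity of $\alpha_Z$. The second uses $\id\pleq\gamma_Y\circ\alpha_Y$ together with monotonicity of $\alpha_Z\circ Tf$. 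For identities, $T^\sharp(\id_X)=\alpha_X\circ T(\id_X)\circ\gamma_X\pleq\alpha_X\circ\gamma_X\pleq\id_{A_X}$, again using monotonicity of $\alpha_X$.

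Oplax naturality of $\alpha$ is the one-line estimate $\alpha_Y\circ Tf\pleq\alpha_Y\circ Tf\circ\gamma_X\circ\alpha_X=T^\sharp f\circ\alpha_X$, obtained by inserting $\id\pleq\gamma_X\alpha_X$ on the right. Additivity of each $\alpha_X$ holds because left adjoints preserve all existing joins (the Fact that in a GC $\alpha$ is additive).

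For initiality, I take any oplax $T'$ receiving $\alpha$ as an oplax natural transformation. For this to typecheck, $T'$ must agree with $T^\sharp$ on objects, i.e. $T'X=A_X$. Soundness then reads $\alpha_Y\circ Tf\pleq T'f\circ\alpha_X$. Precomposing with $\gamma_X$ and using $\alpha_X\gamma_X\pleq\id$ together with monotonicity of $T'f$ gives
\[
T^\sharp f=\alpha_Y\circ Tf\circ\gamma_X\pleq T'f\circ\alpha_X\circ\gamma_X\pleq T'f.
\]
So $T^\sharp$ is below every sound abstraction arrow-wise. Equivalently, the identity components form an oplax transformation $T^\sharp\Rightarrow T'$, the unique candidate, and $\alpha$ factors through it.

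The main obstacle is not computational but definitional: I must fix what ``initial'' means. My choice is to read it as ``least in the pointwise order on arrows among oplax functors with object map $X\mapsto A_X$ making $\alpha$ oplax natural,'' which is what the displayed inequality proves. I will also note that $T^\sharp$ itself belongs to this class, by the previous paragraph. The statement's mention of $\CLat_\sqcup$ for $T'$ plays no role, since only monotonicity of $T'f$ is used.
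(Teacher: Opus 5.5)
Your proposal is correct and follows the paper's proof essentially step for step. The paper uses the same inequality chains: oplaxity of $T$ plus insertion of $\gamma_Y\circ\alpha_Y$ for composition, $\alpha_X\circ\gamma_X\pleq\id$ for identities, and insertion of $\gamma_X\circ\alpha_X$ for oplax naturality of $\alpha$. It also gives the same initiality argument, deriving $T^\sharp f\pleq T'f$ so that the identity components form an oplax transformation $T^\sharp\Rightarrow T'$, which is exactly the arrow-wise-least reading of ``initial'' that you make explicit.
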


\begin{proof}
  \begin{itemize}
    \item We show oplax functoriality of $T^\sharp$.
    \begin{itemize}
      \item First, note that $T^\sharp:\catex\to\CLat$ is a prefunctor:
         indeed, for all $f\in\catex(X,Y)$, $T^\sharp(f)=\alpha_Y\circ Tf \circ \gamma_X$
         is monotone as a composition of monotone maps.
      \item For identities, let $X\in\Ob(\catex)$.
Using oplax functoriality of $T$ and monotonicity of $\alpha$, we have
$T^\sharp(\id_X)
           = \alpha_X \circ T(\id_X) \circ \gamma_X
           \pleq \alpha_X \circ \id_X \circ \gamma_X
           = \alpha_X \circ \gamma_X
           \pleq \id_{A_X}$,
where the final equality is by properties of GCs.
      \item For oplax functoriality of $T^\sharp$, let $f\in\catex(X,Y)$ and $g\in\catex(Y,Z)$.
Using oplax functoriality of $T$ and monotonicity of $\alpha$,
        as well as monotonicity of $Tg$, we have
$T^\sharp(g \circ f)
          = \alpha_Z \circ T(g \circ f) \circ \gamma_X
          \pleq \alpha_Z \circ T(g) \circ T(f) \circ \gamma_X
          \pleq \alpha_Z \circ T(g) \circ \gamma_Y \circ \alpha_Y \circ T(f) \circ \gamma_X
          = T^\sharp(g) \circ T^\sharp(f)$.
    \end{itemize}
    \item We show oplax naturality of the additive prenatural transformation $\alpha:T\Rightarrow T^\sharp$.
       Let $f\in\catex(X,Y)$.
We have
       $\alpha_Y \circ T(f)
         \pleq \alpha_Y \circ T(f) \circ \gamma_Y \circ \alpha_X
         = T^\sharp(f) \circ \alpha_X$.
    \item Now, we show that $T^\sharp$ is \emph{initial} among all sound abstractions of $T$.
Let $T':\catex\to\CLat$ be another oplax functor such that $\alpha:T\Rightarrow T'$
    is an oplax natural transformation, i.e. such that all $T'X = A_X$ for all $X\in\Ob(\catex)$
    and $\alpha_Y \circ Tf \pleq T'f \circ \alpha_X$ for all $f\in\catex(X,Y)$.
Since $T'$ and $T^\sharp$ are identical on objects,
    the unique possible oplax natural transformation $T^\sharp f \Rightarrow T'$
    is the identity $1_{\catex\to\CLat}:T^\sharp \Rightarrow T'$.
To show that it is indeed oplax, we need to show that $T^\sharp f \pleq T' f$ for all $f\in\catex(X,Y)$.
For this, simply observe that
    $\alpha_Y \circ Tf \pleq T'f \circ \alpha_X$ implies
    $T^\sharp f = \alpha_Y \circ Tf \circ \gamma_X \pleq T'f \circ \alpha_X \circ \gamma_X \pleq T'f$.
  \end{itemize}
\end{proof}
 Before proving Theorem~\ref{theorem:kan-hom-domain-codomain-abstraction},
we establish a few properties of the map $\exists_m(-)$.

As explained in the body of the paper,
given two posets $(X,\leq_X)$ and $(A,\leq_A)$, a complete lattice $(L,\leq_L)$,
and a monotone map~$m \colon X\to A$,
we have a Galois connection
\[
  \Pos(X, L) \galois{\exists_m(-)}{(-)\circ m} \Pos(A, L).
\]
In particular, this means that for every $f \in \Pos(X,L)$ and $g \in \Pos(A,L)$,
$\exists_m(f) \pleq g \iff f \pleq g\circ m$.

\medskip
In the next lemmas $X,A,L$ and $m$ are defined as above.

\begin{lemma}
  \label{lemma:f-pleq-emf-m}
  For every $f \in \Pos(X,L)$, we have $f \pleq \exists_m(f) \circ m$.
\end{lemma}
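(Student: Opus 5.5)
The inequality $f \pleq \exists_m(f)\circ m$ is the unit of the Galois connection $\exists_m(-) \dashv (-)\circ m$ recalled just above. I give two routes: one deriving it from that adjunction, and a direct pointwise computation that can serve as a sanity check.

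\emph{Via the adjunction.} Instantiate the equivalence $\exists_m(f) \pleq g \iff f \pleq g\circ m$ with $g \coloneqq \exists_m(f)$. This requires $\exists_m(f)\in\Pos(A,L)$, which holds because Definition~\ref{def:left-kan-ext-order} gives it as a monotone map: if $a\leq_A a'$, then $\{f(x)\mid m(x)\leq_A a\}\subseteq\{f(x)\mid m(x)\leq_A a'\}$, so the joins compare. The left-hand side $\exists_m(f)\pleq\exists_m(f)$ holds by reflexivity of $\pleq$. The right-hand side then yields exactly $f\pleq\exists_m(f)\circ m$.

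\emph{Direct computation.} This route avoids any circularity if the adjunction itself is later proved from this lemma. Fix $x\in X$. By definition, $(\exists_m(f)\circ m)(x) = \bigsqcup_L\{ f(x') \mid m(x')\leq_A m(x)\}$. The index set contains $x'=x$ by reflexivity of $\leq_A$, so $f(x)$ is one of the joined elements. Hence $f(x)\leq_L(\exists_m(f)\circ m)(x)$. Since $x$ was arbitrary, the pointwise inequality follows.

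There is no real obstacle in either route. The only point needing care is ensuring $\exists_m(f)$ is monotone, so that it lies in $\Pos(A,L)$. Monotonicity of $f$ and $m$ is not even needed for the direct computation.
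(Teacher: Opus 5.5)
Your first route is exactly the paper's proof: it instantiates the Galois connection $\exists_m(f) \pleq g \iff f \pleq g\circ m$ at $g = \exists_m(f)$ and uses reflexivity of $\pleq$. Your direct pointwise argument ($x$ lies in its own index set since $m(x)\leq_A m(x)$) is also correct. It is a useful sanity check, because the paper merely asserts the adjunction from the formula in Definition~\ref{def:left-kan-ext-order} rather than proving it.
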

\begin{proof}
  Since $\exists_m(f) \pleq \exists_m(f)$, the Galois connection provides
  $f\pleq \exists_m(f) \circ m$.
\end{proof}

\begin{lemma}
  \label{lemma:em-fm-pleq-f}
  For every $g \in \Pos(A,L)$, we have $\exists_m(g\circ m) \pleq g$.
\end{lemma}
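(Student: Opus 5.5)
This is the counit inequality of the Galois connection $\Pos(X,L)\galois{\exists_m(-)}{(-)\circ m}\Pos(A,L)$, dual to Lemma~\ref{lemma:f-pleq-emf-m}. I would derive it directly from the adjunction. Fix $g\in\Pos(A,L)$ and consider the element $g\circ m\in\Pos(X,L)$. It is monotone, as a composite of monotone maps, so it lies in the domain of $\exists_m(-)$. Reflexivity of $\pleq$ gives $g\circ m\pleq g\circ m$. Instantiating the Galois-connection equivalence $\exists_m(f)\pleq g \iff f\pleq g\circ m$ at $f:=g\circ m$ then yields $\exists_m(g\circ m)\pleq g$ immediately.

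To avoid relying only on the adjunction as stated, I would also check the inequality pointwise from the explicit formula of Definition~\ref{def:left-kan-ext-order}. For $a\in A$ we have
\[
\exists_m(g\circ m)(a)=\textstyle\bigsqcup_L\{g(m(x))\mid m(x)\leq_A a\}.
\]
Every element of this set satisfies $g(m(x))\leq_L g(a)$ by monotonicity of $g$, since $m(x)\leq_A a$. So $g(a)$ is an upper bound of the set, and the join is at most $g(a)$. This also covers the case of an empty index set, where the join is $\bot_L$.

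The main obstacle is that the body of the paper only asserts that $\exists_m$ is left adjoint to precomposition. If that adjunction has not been proved at this point, the pointwise computation above is the robust route: it uses only monotonicity of $g$ and the definition of joins in the complete lattice $L$. That computation, combined with Lemma~\ref{lemma:f-pleq-emf-m}, in fact establishes the adjunction itself.
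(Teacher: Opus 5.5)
Your proof is correct, and your main argument is exactly the paper's: instantiate the adjunction $\exists_m(f)\pleq g \iff f\pleq g\circ m$ at $f := g\circ m$, using reflexivity of $\pleq$. Your pointwise check via the explicit join formula is a valid additional confirmation; it uses only monotonicity of $g$ and the definition of joins in $L$.
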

\begin{proof}
  Since $g \circ m \pleq g \circ m$, the Galois connection provides
  $\exists_m(g\circ m) \pleq g$.
\end{proof}

Recall that $\Pos$ is the category having posets as objects and monotone maps as morphisms, and $\CLat_\sqcup$ is the category having complete lattices as objects and additive maps as morphisms.
We show that, for any fixed complete lattice $L$, $\exists_{(-)}$ is a functor from the former category to the latter.

\begin{proposition}\label{prop:Kan-is-functorial}
  Let $L$ be a complete lattice.
  Then, the following is a functor
  \begin{align*}
    \exists_{(-)} \colon
      \Pos &\to \CLat_\sqcup\\
         X &\mapsto \Pos(X,L)\\
         m &\mapsto \big(f \mapsto \exists_m(f)\big).
  \end{align*}
\end{proposition}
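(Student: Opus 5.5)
To prove Proposition~\ref{prop:Kan-is-functorial}, I will check four things in turn: that the construction is well defined on objects, that each $\exists_m$ is an arrow of $\CLat_\sqcup$, and that identities and composition are preserved.

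\emph{Objects.} For a poset $X$, the set $\Pos(X,L)$ with the pointwise order is a complete lattice. Joins and meets are computed pointwise, and pointwise joins or meets of monotone maps are again monotone (Remark~\ref{remark:pointwise-product-orders}, restricted to monotone maps).

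\emph{Arrows.} For monotone $m\colon X\to A$, Definition~\ref{def:left-kan-ext-order} presents $\exists_m$ as left adjoint to $(-)\circ m\colon\Pos(A,L)\to\Pos(X,L)$. Left adjoints preserve all existing joins, so $\exists_m$ is additive. I would still verify directly that $\exists_m(f)$ is monotone on $A$: if $a\leq_A a'$, then $\{x \mid m(x)\leq a\}\subseteq\{x \mid m(x)\leq a'\}$, so the join can only grow. I would also check the adjunction $\exists_m(f)\pleq g \iff f\pleq g\circ m$ once. For $(\Rightarrow)$, instantiate at $a=m(x)$. For $(\Leftarrow)$, if $m(x)\leq a$ then $f(x)\leq g(m(x))\leq g(a)$ by monotonicity of $g$. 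This monotonicity of $g$ is exactly where the restriction to $\Pos$ is used.

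\emph{Identities.} $\exists_{\id_X}(f)(a)=\bigsqcup\{f(x)\mid x\leq a\}=f(a)$, because $f$ is monotone and the set contains $x=a$.

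\emph{Composition.} For $m\colon X\to A$ and $m'\colon A\to B$ I must show $\exists_{m'\circ m}=\exists_{m'}\circ\exists_m$. The cleanest route is the uniqueness of left adjoints. The precomposition maps compose contravariantly: $(-)\circ(m'\circ m) = ((-)\circ m)\circ((-)\circ m')$. A composite of Galois connections is a Galois connection, so $\exists_{m'}\circ\exists_m$ is a left adjoint of $(-)\circ(m'\circ m)$. By uniqueness of left adjoints (recalled in Appendix~\ref{subsec:appendix-order}) it equals $\exists_{m'\circ m}$. As a sanity check, I could also compute directly:
\[
\exists_{m'}(\exists_m f)(b)=\bigsqcup_{a:\,m'(a)\leq b}\ \bigsqcup_{x:\,m(x)\leq a} f(x).
\]
Each such $x$ satisfies $m'(m(x))\leq m'(a)\leq b$. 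Conversely, any $x$ with $m'(m(x))\leq b$ appears in the double join by taking $a=m(x)$. The two joins therefore agree.

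I expect no serious obstacle. The only delicate point is bookkeeping: remembering that monotonicity of $m'$ and of the maps in $\Pos(-,L)$ is what makes both the adjunction and the composition step work. Without it, for instance over general relations or non-monotone $g$, the identities could fail.
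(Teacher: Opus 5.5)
Your proof is correct and follows essentially the paper's route. The paper also derives the composition law from the adjunction $\exists_m \dashv (-)\circ m$, inlining your uniqueness-of-left-adjoints argument as the chain $\exists_{m'\circ m}(f)\pleq h \iff f\pleq h\circ m'\circ m \iff \exists_m(f)\pleq h\circ m' \iff \exists_{m'}(\exists_m(f))\pleq h$, and it obtains the identity law from the inequalities $f\pleq\exists_{\id_X}(f)\circ\id_X$ and $\exists_{\id_X}(f\circ\id_X)\pleq f$ rather than from your explicit join formula. Your extra checks (the complete-lattice structure of $\Pos(X,L)$, monotonicity of $\exists_m(f)$, the adjunction itself, and additivity of $\exists_m$) are taken for granted in the paper and only make the argument more complete.
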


\begin{proof}
  For the identity, let $X$ be a poset and $f \in \Pos(X,L)$.
      By Lemmas~\ref{lemma:f-pleq-emf-m}~and~\ref{lemma:em-fm-pleq-f},
      \begin{equation*}
        f
        \pleq \exists_{\id_X}(f) \circ \id_X
        = \exists_{\id_X}(f)
        = \exists_{\id_X}(f \circ \id_X)
        \pleq f,
      \end{equation*}
      hence $\exists_{\id_X}(f)=f$ by antisymmetry.

      For composition, this is a direct a consequence of associativity of $\circ$ via the right adjoint.

      Indeed, let $m \colon X\to Y$, $m'\colon Y\to Z$ and $f\colon X\to L$ be monotone functions.
      For all $h\colon Z\to L$,
      \begin{equation*}
        \exists_{m\circ m'}(f) \pleq h
        \iff
        f \pleq h \circ m \circ m'
        \iff
        \exists_{m'}(f) \pleq h \circ m
        \iff
        \exists_m(\exists_{m'}(f)) \pleq h.
      \end{equation*}
      In particular, $\exists_{m\circ m'}(f)\pleq \exists_m(\exists_{m'}(f))$
      and $\exists_m(\exists_{m'}(f))\pleq\exists_{m\circ m'}(f)$, as required.
\end{proof}

We need one more lemma about the interaction between $\exists_m(-)$
and additive maps $\alpha$.

\begin{lemma}\label{lemma:ex-add-into}
  Let $X, A$ be posets, $L_1, L_2$ be complete lattice, $f \colon X\to L_1$, $m \colon X\to A$ be monotone,
  and $\alpha \colon L_1\to L_2$ be additive.
Then, $\alpha \circ \exists_m(f) = \exists_m(\alpha \circ f)$.
\end{lemma}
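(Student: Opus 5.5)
The plan is to prove the equality pointwise, using the explicit formula of Definition~\ref{def:left-kan-ext-order}. Fix $a \in A$. By definition, $\exists_m(f)(a) = \bigsqcup_{L_1} S_a$, where $S_a \coloneqq \{ f(x) \,|\, x\in X,\, m(x) \leq_A a\} \subseteq L_1$. Applying $\alpha$ gives $\alpha(\exists_m(f)(a)) = \alpha(\bigsqcup_{L_1} S_a)$.

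Since $\alpha$ is additive and $L_1$ is complete, the join $\bigsqcup S_a$ exists, so $\alpha(\bigsqcup S_a) = \bigsqcup_{L_2}\alpha(S_a)$. Here $\alpha(S_a) = \{\alpha(f(x)) \,|\, m(x)\leq_A a\}$, and its join is, again by Definition~\ref{def:left-kan-ext-order}, exactly $\exists_m(\alpha\circ f)(a)$. The definition applies because $\alpha\circ f$ is monotone: additive maps are monotone.

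The one point needing care is the meaning of ``additive''. The appendix definition states only binary joins, whereas the argument needs arbitrary joins, including the empty one when no $x$ satisfies $m(x)\leq a$. In that case both sides must be $\bot_{L_2}$, which requires $\alpha(\bot_{L_1})=\bot_{L_2}$. I would invoke the convention of the main text, where additive means preserving all existing joins, and where $\CLat_\sqcup$ arrows are exactly left adjoints. Equivalently, $\alpha$ has a right adjoint $\gamma$ (Fact~\ref{prop:additive-by-meet}), and left adjoints preserve all joins.

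Alternatively, one can argue abstractly via adjunctions. The composite $\alpha\circ\exists_m(-)$ is left adjoint to $\gamma_m\circ(\gamma\circ -)=\gamma\circ(-)\circ m$. The composite $\exists_m(\alpha\circ -)$ is left adjoint to $\gamma\circ(-)\circ m$ as well, using the postcomposition GC from Example~\ref{ex:GC-lift-2GCs} with identity on the domain. By uniqueness of left adjoints, the two agree. I would present the pointwise computation as the main proof and mention this argument as a check.
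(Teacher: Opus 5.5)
Your proof is correct and follows the paper's own argument. The paper also computes pointwise from the explicit formula for $\exists_m$ using that $\alpha$ commutes with joins. It then gives the same element-free check via the right adjoint $\gamma$: its chain $\alpha\circ\exists_m(f)\pleq h \iff \exists_m(f)\pleq\gamma\circ h \iff \dots \iff \exists_m(\alpha\circ f)\pleq h$ followed by antisymmetry is exactly your uniqueness-of-left-adjoints argument spelled out. Your caveat that additivity must mean preservation of arbitrary (including empty) joins is well taken. Under the appendix's binary-join definition the lemma fails in general, e.g.\ for a constant-$\top$ map $\alpha$ when some $a$ has no $x$ with $m(x)\leq_A a$. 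The paper's proof silently relies on the stronger reading as well.
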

\begin{proof}
  This is immediate from the definition of $\exists_m$,
  and the fact that $\alpha$ commutes with joins: for every $a \in A$,
  \begin{equation*}
    (\alpha \circ \exists_m(f))(a)
    = \alpha\big(\bigsqcup_{L_1} \{ f(x)\,|\, m(x) \leq_A a \}\big)
    = \bigsqcup_{L_2} \{ \alpha(f(x))\,|\, m(x) \leq_A a \}
    = (\exists_m(\alpha \circ f))(a).
  \end{equation*}
  We can also give an element-less proof that relies on the properties of the Galois connection
  instead of the explicit formula for $\exists_m$.
  Introduce the coadditive right-adjoint $\gamma:L_2\to L_1$
  to $\alpha$, and observe that for all $u:Y\to L_1$, $v:Y\to L_2$,
  we have $\alpha\circ u \pleq v \iff u \pleq \gamma \circ v$.
Hence, for any $h:Y\to L_2$, we have
  $\alpha\circ\exists_m(f)       \pleq h
   \iff \exists_m(f)        \pleq \gamma \circ h
   \iff            f        \pleq \gamma \circ h \circ m
   \iff     \alpha\circ f        \pleq h \circ m
   \iff \exists_m(\alpha\circ f) \pleq h$.

  In particular, we get $\alpha\circ\exists_m(f) \pleq \exists_m(\alpha\circ f) \pleq \alpha\circ\exists_m(f)$,
  which concludes by antisymmetry.
\end{proof}

We are now ready to prove Theorem~\ref{theorem:kan-hom-domain-codomain-abstraction}.

\TheoremKanHomDomainCodomainAbstraction*

\begin{proof}
  For the identity, let $X$ be a poset and $L$ be a complete lattice.
For all $f \colon X\to L$,
we simply have $K(\id_X,\id_L)(f) = \id_L\circ\exists_{\id_x}(f) = f$
    by Proposition~\ref{prop:Kan-is-functorial}.

  For composition, consider $m_1 \colon X\to Y$, $m_2 \colon Y\to Z$ two
    monotone maps, and $\alpha_1 \colon L\to L^\sharp$, $\alpha_2 \colon L^\sharp\to L^\sesquisharp$
    be two additive maps.
Then, for any $f \colon X\to L$, we have
\begin{align*}
          & K(m_2\circ m_1,\alpha_2\circ\alpha_1)(f)\\
        = {}& \alpha_2\circ\alpha_1\circ\exists_{m_2\circ m_1}(f)
            & \mathcmt{by definition}\\
        = {}& \alpha_2\circ\alpha_1\circ\exists_{m_2}(\exists_{m_1}(f))
            & \mathcmt{by Proposition~\ref{prop:Kan-is-functorial}}\\
        = {}& \alpha_2\circ\exists_{m_2}(\alpha_1\circ \exists_{m_1}(f))
            & \mathcmt{by Lemma~\ref{lemma:ex-add-into}}\\
        = {}& \big(K(m_2,\alpha_2)\circ K(m_1,\alpha_1)\big)(f).
            & \mathcmt{by definition}
            & \qedhere
    \end{align*}
\end{proof}

\subsection{Proofs of Section~\ref{subsec:state-space-absop}}
\label{subsec:proof-theorem-size-abstraction-kleisli-oplax}

\begin{theorem*}[\ref{theorem:size-abstraction-kleisli-oplax-endofunctor}]
  Consider an arbitrary mapping $\mathcal{M}:\Ob(\Set)\to\Ob(\Set)$,
  and an arbitrary collection of functions $m = \big(m_X : X \to \mathcal{M}X \big)_{X: \Set}$.
Note that we do not require any functoriality or naturality a priori.
Suppose that $(T(-), \eta_T, \circ_T)$ is a $\Set$-monad,
  pointwise order-enriched via a choice of $\CLat$ structure on $T$-objects,
  and that additionally all arrows $Tf$ are \emph{additive} for this \CLat structure
  (in other words, $T$ factorises as $T=U \circ T_s$ via a functor
  $T_s:\Set\to\CLat_\sqcup$).

  Then, $\mathcal{M}$ induces an \emph{additive} \emph{oplax} endofunctor in the corresponding Kleisli category, given by
  \begin{align*}
    \mathcal{M} :
      \Kleisli_T  &\to \Kleisli_T\\
              X   &\mapsto \mathcal{M}X\\
      (f:X\to TY) &\mapsto K(m_X, T m_Y)(f).
  \end{align*}
Moreover, $m:\Id_{\Kleisli_T} \Rightarrow \mathcal{M}$ is an oplax natural transformation.
\end{theorem*}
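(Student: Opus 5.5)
The plan is to work throughout with the explicit form of the arrows. We view sets as discrete posets, as in Corollary~\ref{cor:size-abstraction-bifunctor}. Then for $f\colon X\to TY$ we have $\mathcal{M}(f)=Tm_Y\circ\exists_{m_X}(f)=\exists_{m_X}(Tm_Y\circ f)$, using Lemma~\ref{lemma:ex-add-into} and additivity of $Tm_Y$. Pointwise this reads $\mathcal{M}(f)(a)=\bigsqcup\{Tm_Y(f(x))\mid m_X(x)=a\}$.

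The main tool will be the Galois connection $\exists_{m_X}(-)\dashv(-)\circ m_X$, in the form: $\mathcal{M}(f)\pleq k$ iff $Tm_Y\circ f\pleq k\circ m_X$. Together with it I use Lemma~\ref{lemma:f-pleq-emf-m}, which here says $Tm_Y\circ f\pleq \mathcal{M}(f)\circ m_X$. By Proposition~\ref{prop:monad-by-kleisli-equivalent-to-monad-by-mult}, $T$ is a genuine functor with natural $\eta_T,\mu_T$ and $g\circ_T f=\mu\circ Tg\circ f$. I take this as the source of the auxiliary identities below.

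The first two points are direct.

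\emph{Additivity on arrows.} Each $K(m_X,Tm_Y)$ is an arrow of $\CLat_\sqcup$ by Theorem~\ref{theorem:kan-hom-domain-codomain-abstraction} (Corollary~\ref{cor:size-abstraction-bifunctor}), hence additive.

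\emph{Oplax unit law.} By naturality of $\eta_T$ we have $Tm_X\circ\eta_{T,X}=\eta_{T,\mathcal{M}X}\circ m_X$. The Galois-connection criterion then gives $\mathcal{M}(\eta_{T,X})\pleq\eta_{T,\mathcal{M}X}$, and in fact equality holds off the image of $m_X$ and $\bot$ elsewhere.

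For the oplax composition law $\mathcal{M}(g\circ_T f)\pleq\mathcal{M}(g)\circ_T\mathcal{M}(f)$, by the criterion it suffices to show $Tm_Z\circ(g\circ_T f)\pleq(\mathcal{M}g\circ_T\mathcal{M}f)\circ m_X$. I will chain the following steps.
\begin{itemize}
\item Lemma~\ref{lemma:compatibility-T-circ-circT} rewrites the left side as $(Tm_Z\circ g)\circ_T f$.
\item Lemma~\ref{lemma:f-pleq-emf-m}, together with monotonicity of $\circ_T$, bounds this by $(\mathcal{M}g\circ m_Y)\circ_T f$.
\item The identity $(k\circ m)\circ_T f=k\circ_T(Tm\circ f)$ rewrites the bound as $\mathcal{M}g\circ_T(Tm_Y\circ f)$. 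This identity follows from $\mu\circ T(k\circ m)\circ f=\mu\circ Tk\circ Tm\circ f$.
\item Lemma~\ref{lemma:f-pleq-emf-m} again, with monotonicity of $\circ_T$, bounds this by $\mathcal{M}g\circ_T(\mathcal{M}f\circ m_X)$.
\item Axiom (4) of Definition~\ref{def:premonad} turns the last expression into $(\mathcal{M}g\circ_T\mathcal{M}f)\circ m_X$, as required.
\end{itemize}

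Finally, for oplax naturality of $m$, the Kleisli component is $\eta_{T,\mathcal{M}X}\circ m_X$. By the same rebracketing identity and the unit law, $(\eta\circ m_Y)\circ_T f=Tm_Y\circ f$ and $\mathcal{M}f\circ_T(\eta\circ m_X)=\mathcal{M}f\circ m_X$. The required inequality between these is then exactly Lemma~\ref{lemma:f-pleq-emf-m}.

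The main obstacle I expect is the composition step. The difficulty is choosing rebracketings that are actually valid: postcomposition only commutes with $\circ_T$ for maps of the form $Tv$, as shown by the counterexample in the appendix. The argument must therefore route every rearrangement through Lemma~\ref{lemma:compatibility-T-circ-circT}, axiom (4), and the $\mu$-based identity above.
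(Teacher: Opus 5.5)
Your proof is correct and is essentially the paper's argument, run through the adjoint transpose $\mathcal{M}(f)\pleq k \iff Tm_Y\circ f\pleq k\circ m_X$. Because of that transpose, the paper's separate unit and counit steps (Lemmas~\ref{lemma:f-pleq-emf-m} and~\ref{lemma:em-fm-pleq-f}) become two uses of the naturality square, and your identity $(k\circ m)\circ_T f=k\circ_T(Tm\circ f)$ is exactly what justifies the step the paper labels ``by monadicity''. One small slip in a side remark: $\mathcal{M}(\eta_{T,X})$ coincides with $\eta_{T,\mathcal{M}X}$ \emph{on} the image of $m_X$ and is $\bot$ off it, not the other way round.
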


\begin{proof}
  We start by recalling that
  $\mathcal{M}_{X,Y}(f) = K(m_X, T m_Y)(f) = T(m_Y)\circ (\exists_{m_X}(f))$
  for all $f:X\to TY$.
Moreover, note that by assumption $T(m_Y)$ is monotone (and even
  additive) for the prescribed order on $T$-objects\footnote{
  To be fully formal, we could explicitly write the factorisation $T=U\circ T_s$,
  and distinguish $T(m_Y)\in\Set(TX, T\mathcal{M}Y)$ from
  $T_s(m_Y)\in\CLat_\sqcup(T_s X, T_s \mathcal{M} Y)$,
  but this is largely unimportant: $T(m_Y)=U(T_s(m_Y))$, so that they
  represent the same \Set-map -- $T_s(m_Y)$ simply serves as a witness
  that this map is monotone and additive for the prescribed order on
  objects.}.

\begin{itemize}
  \item First, observe that the construction is well typed:
since $T$ factors as $T=U\circ T_s$ and $T_s:\Pos \to \CLat_\sqcup$,
Corollary~\ref{cor:size-abstraction-bifunctor} immediately provides a functor
    $K(-,T_s -):\Set\times\Set\to\CLat_\sqcup$

    In particular, $\mathcal{M}_{X,Y}=K(m_X, T_s m_Y)$ is \emph{additive}, since
    it is a morphism in $\CLat_\sqcup$.

  \item Now, we immediately show the \emph{oplax naturality} condition
    (before checking oplax functoriality, since the inequality will be reused).

    We need to show that for all $f:X\to TY$, we have the following square.
    \begin{equation*}
    \begin{tikzcd}[column sep={1.5cm,between origins}, row sep={1.5cm,between origins}]
         X
           \arrow[r, "f"]
           \arrow[d, "m_X"']
      & TY
           \arrow[d, "Tm_Y"]
      \\
        \mathcal{M}X
           \arrow[r, "\mathcal{M}f"']
           \arrow[ru, "\geq\," rotate=50, midway, phantom, no line, node font=\Large]
      & T\mathcal{M}Y
    \end{tikzcd}
    \end{equation*}
For this, observe that $f \pleq \exists_{m_X}(f) \circ m_X$
    by Lemma~\ref{lemma:f-pleq-emf-m}.
Now, since $T(m_Y)$ is monotone, we obtain
    $T_s(m_Y)\circ f \pleq T_s(m_Y) \circ \exists_{m_X}(f) \circ m_X = \mathcal{M}(f)\circ m_X$.

  \item Now, let $f:X\to TY$ and $g:Y\to TZ$.
    We need to show that $\mathcal{M}(g\circ_T f)\pleq\mathcal{M}(g)\circ_T \mathcal{M}(f)$.

    Throughout the proof, we use monotonicity of $\circ_T$ in both arguments,
    monotonicity of $\exists_{m_X}(-)$ in its argument,
    and monotonicity of $T(m_z)$.
\begin{align*}
          & \mathcal{M}(g\circ_T f)\\
= {}& T(m_Z)\circ \exists_{m_X}(g\circ_T f)
          & \mathcmt{by definition}\\
\pleq\;{}& T(m_Z)\circ \exists_{m_X}\big(g\circ_T (\exists_{m_X}(f) \circ m)\big)
          & \mathcmt{by Lemma~\ref{lemma:f-pleq-emf-m}}\\
= {}& T(m_Z)\circ \exists_{m_X}\big((g\circ_T \exists_{m_X}(f)) \circ m\big)
          & \mathcmt{by $\circ_T$/$\circ$ compatibility}\\
 \pleq\;{}& T(m_Z)\circ (g\circ_T \exists_{m_X}(f))
          & \mathcmt{by Lemma~\ref{lemma:em-fm-pleq-f}}\\
       ={}& (T(m_Z)\circ g) \circ_T \exists_{m_X}(f)
          & \mathcmt{by monadicity (see Lemma~\ref{lemma:compatibility-T-circ-circT})}\\
 \pleq\;{}& (\mathcal{M}(g) \circ m_Y) \circ_T \exists_{m_X}(f)
          & \mathcmt{by oplax naturality above}\\
= {}& \mathcal{M}(g) \circ \big(m_Y \circ_T \exists_{m_X}(f)\big)
          & \mathcmt{by monadicity}\\
= {}& \mathcal{M}(g) \circ_T \mathcal{M}(f)
          & \mathcmt{by definition}
    \end{align*}

  \item Finally, we prove that for any $X:\Set$,
    $\mathcal{M}(\eta_{T,X}) \pleq \eta_{T,\mathcal{M}X}$.
For this, observe that
    $\mathcal{M}(\eta_{T,X})
      = T(m_X)\circ \exists_{m_X}(\eta_{T,X})
      = \exists_{m_X}(T(m_X)\circ\eta_{T,X})
      = \exists_{m_X}(\eta_{T,\mathcal{M}X}\circ m_{X})$,
    where the second equality is by Lemma~\ref{lemma:ex-add-into},
    and the final equality is by naturality of $\eta_T$.
Finally, applying Lemma~\ref{lemma:em-fm-pleq-f} and monotonicity of $\exists_{m_X}(-)$,
    we get
    $\mathcal{M}(\eta_{T,X}) = \exists_{m_X}(\eta_{T,\mathcal{M}X}\circ m_{X}) \pleq \eta_{\mathcal{M}X}$.
\end{itemize}
\end{proof}

\subsection{Proofs of Section~\ref{subsec:abstract-monads}}
\label{subsec:proofs-abstract-monads}

\begin{theorem*}[\ref{theorem:optimal-oplax-monad-by-GC}]
  Let $(T(-),\eta_T,\circ_T)$ be an abstract monad,
  and $\big(TX \galois{\alpha_X}{\gamma_X} A_X\big)_{X\in\Ob(\Set)}$ be an object-wise
  collection of Galois connections between complete lattices.
Suppose, furthermore, that these GCs are precise enough to represent units,
  in the sense $\gamma_X\circ\alpha_X\circ\eta_{T,X} = \eta_{T,X}$ for all $X$,
  and that these are Galois insertions, i.e. $\alpha_X\circ\gamma_X=\id$.

  Then, we obtain a structure of abstract monad
  $(T^\sharp(-),\eta_{T^\sharp},\circ_{T^\sharp})$,
by setting
$g \circ_{T^\sharp} f~:=~\alpha_Z \circ \big((\gamma_Z \circ g) \circ_T (\gamma_Y \circ f)\big)$
  for all $f\in\Pos(X,T^\sharp Y)$, $g\in\Pos(Y,T^\sharp Z)$,
and
  $\eta_{T^\sharp,X}~:=~\alpha_X \circ \eta_{T,X}$.

  Moreover, $\alpha$ induces the following (additive, normal) oplax functor between the corresponding
  Kleisli (unital) magmoids.
\begin{align*}
    \mathcal{A}:
       \widetilde{\Kleisli_T} &\Rightarrow \widetilde{\Kleisli_{T^\sharp}}\\
                           X &\mapsto X\\
                 (f:X\to TY) &\mapsto \alpha\circ f,
  \end{align*}

\end{theorem*}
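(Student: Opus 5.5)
The plan is to check the axioms of Definition~\ref{def:abstract-monad} directly for $(T^\sharp(-),\eta_{T^\sharp},\circ_{T^\sharp})$, and then the oplax-functor inequalities for $\mathcal{A}$. Every step is a short calculation. Each one rests only on four ingredients: the premonad axioms (4)--(5) of $T$, monotonicity of $\circ_T$, the GC inequalities of Fact~\ref{fact:gc-inequalities}, and the two extra hypotheses $\gamma_X\circ\alpha_X\circ\eta_{T,X}=\eta_{T,X}$ and $\alpha_X\circ\gamma_X=\id$.

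\textbf{Step 1: order structure and well-typedness.} Each $A_X=T^\sharp X$ is a complete lattice, so the pointwise orders on the Kleisli homsets are complete lattices. The map $g\circ_{T^\sharp} f=\alpha_Z\circ((\gamma_Z\circ g)\circ_T(\gamma_Y\circ f))$ is monotone in $f$ and in $g$. This is because postcomposition with the monotone maps $\gamma$ and $\alpha$ preserves $\pleq$, and $\circ_T$ is monotone in both arguments.

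\textbf{Step 2: compatibility axiom (4).} For $u\colon W\to X$ I compute
$(g\circ_{T^\sharp} f)\circ u=\alpha_Z\circ\big(((\gamma_Z\circ g)\circ_T(\gamma_Y\circ f))\circ u\big)$.
Axiom (4) for $T$ rewrites this as $\alpha_Z\circ((\gamma_Z\circ g)\circ_T(\gamma_Y\circ f\circ u))=g\circ_{T^\sharp}(f\circ u)$.

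\textbf{Step 3: unit laws (5).} This is the only place where both extra hypotheses are needed, so it is the step to get right. For the left unit,
$\eta_{T^\sharp,Y}\circ_{T^\sharp} f=\alpha_Y\circ((\gamma_Y\circ\alpha_Y\circ\eta_{T,Y})\circ_T(\gamma_Y\circ f))$.
Representability of units turns this into $\alpha_Y\circ(\eta_{T,Y}\circ_T(\gamma_Y\circ f))$. The unit law of $T$ then gives $\alpha_Y\circ\gamma_Y\circ f$, and the Galois insertion hypothesis gives $f$. The right unit $f\circ_{T^\sharp}\eta_{T^\sharp,X}=f$ follows symmetrically, using the right unit law $(\gamma_Y\circ f)\circ_T\eta_{T,X}=\gamma_Y\circ f$. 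No associativity is required at any point, which is consistent with $T^\sharp$ being only a premonad.

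\textbf{Step 4: the functor $\mathcal{A}$.}
\begin{itemize}
  \item \emph{Well-defined prefunctor.} $\mathcal{A}$ is the identity on objects and sends $f\colon X\to TY$ to $\alpha_Y\circ f$.
  \item \emph{Normality.} $\mathcal{A}(\eta_{T,X})=\alpha_X\circ\eta_{T,X}=\eta_{T^\sharp,X}$ holds by definition.
  \item \emph{Additivity and monotonicity on arrows.} Joins in Kleisli homsets are computed pointwise and $\alpha_Y$ is additive, so $f\mapsto\alpha_Y\circ f$ is additive, and in particular monotone.
  \item \emph{Oplax composition.} Using $\id\pleq\gamma\circ\alpha$ on both arguments, together with monotonicity of $\circ_T$ and of $\alpha_Z$, I get
  $\mathcal{A}(g\circ_T f)=\alpha_Z\circ(g\circ_T f)\pleq\alpha_Z\circ((\gamma_Z\alpha_Z g)\circ_T(\gamma_Y\alpha_Y f))=\mathcal{A}(g)\circ_{T^\sharp}\mathcal{A}(f)$.
\end{itemize}
This completes the plan. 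I expect no genuine obstacle beyond making sure that the unit computations in Step 3 invoke exactly representability together with the insertion property.
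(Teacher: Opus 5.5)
Your proposal is correct and follows essentially the same route as the paper. You directly check monotonicity of $\circ_{T^\sharp}$, the compatibility axiom via axiom (4) of $T$, and the unit laws via representability of units, the unit laws of $T$ and $\alpha\circ\gamma=\id$; you then obtain normality by definition, additivity as codomain abstraction, and the oplax inequality from $\id\pleq\gamma\circ\alpha$ and monotonicity of $\circ_T$. The only detail worth making explicit is that your ``symmetric'' right-unit computation uses representability at $X$, i.e.\ $\gamma_X\circ\alpha_X\circ\eta_{T,X}=\eta_{T,X}$, exactly as the paper's ``similarly'' does.
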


\begin{proof}
  \begin{itemize}
    \item We show that $(T^\sharp(-),\eta_{T^\sharp},\circ_{T^\sharp})$ is an abstract monad,
      i.e. a premonad that is pointwise order-enriched, for a given choice of $\CLat$ order structure on objects
      $T^\sharp(X)$.
\begin{itemize}
      \item The order structure on objects is just the complete lattice structure on the $A_X$
        given in the collection of Galois connections.
      \item $\circ_{T^\sharp}$ is monotone in both arguments, by monotonicity of $\circ_T$, $\alpha$ and $\gamma$.
      \item
        Let $u:W\to W$, $f:X\to T^\sharp Y$, and $g:Y \to T^\sharp Z$.
We have
        \begin{align*}
        (g\circ_{T^\sharp}f) \circ u
        & = \big(\alpha_Z \circ ((\gamma_Z \circ g) \circ_T (\gamma_Y \circ f))\big) \circ u
          & \text{(definition)}\\
        & = \alpha_Z \circ \big(((\gamma_Z \circ g) \circ_T (\gamma_Y \circ f)) \circ u\big)
          & \text{($\circ$-associativity)}\\
        & = \alpha_Z \circ \big((\gamma_Z \circ g) \circ_T ((\gamma_Y \circ f) \circ u)\big)
          & \text{(compatibility $\circ$/$\circ_T$)}\\
        & = \alpha_Z \circ \big((\gamma_Z \circ g) \circ_T (\gamma_Y \circ (f \circ u))\big)
          & \text{($\circ$-associativity)}\\
        & = g \circ_{T^\sharp} (f \circ u)
          & \text{(definition)}.\\
        \end{align*}
     \item Let $f: X \to T^\sharp Y$.
       First, notice that
       \begin{align*}
         \eta_{T^\sharp,Y}\circ_{T^\sharp} f
         & = \alpha_Y \circ \big((\gamma_Y \circ \alpha_Y \circ \eta_{T, Y}) \circ_T (\gamma_Y \circ f)\big)
           & \text{(definition)}\\
         & = \alpha_Y \circ \big(\eta_{T, Y} \circ_T (\gamma_Y \circ f)\big)
           & \text{(assumption $\gamma\circ\alpha\circ\eta_T=\eta_T$)}\\
         & = \alpha_Y \circ \gamma_Y \circ f
           & \text{($\circ_T$-unit)}\\
         & = f
           & \text{(assumption $\alpha\circ\gamma=\id$)}.
       \end{align*}

       Similarly, $f \circ_{T^\sharp} \eta_{T^\sharp,X} = \alpha_Y \circ \gamma_Y \circ f = f$.
    \end{itemize}

    \item We show that $\mathcal{A}:\widetilde{\Kleisli_T}\Rightarrow\widetilde{\Kleisli_{T^\sharp}}$
      is additive normal oplax.
    \begin{itemize}
       \item Additivity of $\mathcal{A}$ (on arrows) is just codomain abstraction.
       \item For composition, let $f:X\to TY$, $g:Y\to TZ$.
We have
        $\mathcal{A}(g \circ_T f)
         = \alpha_Z \circ \big( g \circ_T f \big)
         \pleq \alpha_Z \circ \big( (\gamma_Z \circ \alpha_Z \circ g) \circ_T (\gamma_Y \circ \alpha_Y \circ f) \big)
         = \mathcal{A}(g) \circ_{T^\sharp} \mathcal{A}(f)$,
         where we have used monotonicity of $\alpha$, $\circ_T$, and $\id\pleq\gamma\circ\alpha$.
Hence $\mathcal{A}$ is oplax.
       \item For units, we have
         $\mathcal{A}(\eta_{T,X}) = \eta_{T^\sharp,X} = \eta_{T^\sharp,\mathcal{A}X}$
         hence $\mathcal{A}$ is \emph{normal}.
    \end{itemize}
  \end{itemize}
\end{proof}

One could wonder whether the constructions of optimal abstract monad (Theorem~\ref{theorem:optimal-oplax-monad-by-GC})
  and of optimal oplax functor (Proposition~\ref{prop:optimal-oplax-functor-by-GC}) are compatible.

  This is indeed the case when we start from some order-enriched concrete monad $(T(-),\eta_T,\circ_T)$ where
  the $Tf$ are monotone for the prescribed order structure (i.e. there is a factorisation of functors $T=U\circ
  T_s$ with $T_s:\Set\to\CLat$).

  We provide the following (additional) proposition to account for this.

\begin{proposition}
  \label{prop:oplax-functor-and-abstract-monad-compat}
  Let $(T(-), \eta_T, \circ_T)$ be an monad, pointwise order-enriched via some choice of complete lattice
  structure on $T$-objects, and assume that the $Tf$ are monotone for this order structure, i.e. that $T$
  factorises as $T=U\circ T_S$ for a functor $T_s:\Set\to\CLat$.

  Let $\big(TX \galois{\alpha_X}{\gamma_X} A_X\big)_{X\in\Ob(\catex)}$ be an object-wise
  collection of Galois connections as in Theorem~\ref{theorem:optimal-oplax-monad-by-GC}.

  Then, in the optimal abstract monad constructed by Theorem~\ref{theorem:optimal-oplax-monad-by-GC},
  $(T^\sharp(-),\eta_{T^\sharp},\circ_{T^\sharp})$,
  the construction $T^\sharp(f)=(\eta_{T^\sharp,Y}\circ f)\circ_{T^\sharp} \id_{T^\sharp X}$
  gives $T^\sharp$ the structure of an oplax functor, and this structure is identical
  to that obtained in Proposition~\ref{prop:optimal-oplax-functor-by-GC}
  via by setting instead $T^\sharp(f)=\gamma_Y\circ Tf \circ \alpha_X$.
\end{proposition}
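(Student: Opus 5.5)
The plan is a direct computation: I will show that the two candidate actions on arrows agree pointwise. Oplax functoriality then comes for free from Proposition~\ref{prop:optimal-oplax-functor-by-GC}, applied to the functor $T_s:\Set\to\CLat$, which is in particular oplax. Here the formula from Proposition~\ref{prop:optimal-oplax-functor-by-GC} is $T^\sharp(f)=\alpha_Y\circ Tf\circ\gamma_X$. I read the ``$\gamma_Y\circ Tf\circ\alpha_X$'' in the statement as this composite; that is the only well-typed reading for a map $T^\sharp X\to T^\sharp Y$.

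First I record an identity for the concrete monad: $Tf=(\eta_{T,Y}\circ f)\circ_T \id_{TX}$. Proposition~\ref{prop:monad-by-kleisli-equivalent-to-monad-by-mult} takes this as the definition of $T$ on arrows. Here, though, $T$ is already a functor with the classical monad structure, so I check agreement directly. Writing Kleisli composition as $g\circ_T f=\mu_{T,Y}\circ Tg\circ f$, I get
$(\eta_{T,Y}\circ f)\circ_T\id_{TX}=\mu_{T,Y}\circ T\eta_{T,Y}\circ Tf=Tf$.
This uses functoriality of $T$ and the unit law $\mu\circ T\eta=\id$.

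Next, fix $f\in\Set(X,Y)$ and unfold the definitions from Theorem~\ref{theorem:optimal-oplax-monad-by-GC}:
\begin{align*}
(\eta_{T^\sharp,Y}\circ f)\circ_{T^\sharp}\id_{T^\sharp X}
&=\alpha_Y\circ\big((\gamma_Y\circ\alpha_Y\circ\eta_{T,Y}\circ f)\circ_T(\gamma_X\circ\id_{T^\sharp X})\big)\\
&=\alpha_Y\circ\big((\eta_{T,Y}\circ f)\circ_T(\id_{TX}\circ\gamma_X)\big)\\
&=\alpha_Y\circ\big((\eta_{T,Y}\circ f)\circ_T\id_{TX}\big)\circ\gamma_X
=\alpha_Y\circ Tf\circ\gamma_X .
\end{align*}
The second step uses the hypothesis $\gamma\circ\alpha\circ\eta_T=\eta_T$. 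The third step uses the $\circ_T/\circ$ compatibility law (4) of premonads with $u=\gamma_X$. The last step is the identity above.

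So the two prefunctor structures coincide on objects (both send $X\mapsto A_X$) and on arrows. Proposition~\ref{prop:optimal-oplax-functor-by-GC} then makes this common structure an oplax functor $\Set\to\CLat$, with $\alpha$ an additive oplax natural transformation.

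The only delicate point is the typing of the inner composite. I must check that $\gamma_X\circ\id_{T^\sharp X}$ is a Kleisli arrow $T^\sharp X\to TX$, so that law (4) applies with source object $W=T^\sharp X$. I must also check that the identity $Tf=(\eta\circ f)\circ_T\id$ really concerns the same functor $T_s$ whose monotonicity is assumed. Once these bookkeeping issues are settled, the rest is mechanical.
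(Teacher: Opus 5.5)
Your proposal is correct and follows the paper's proof essentially line by line. In both, you unfold $\circ_{T^\sharp}$, replace $\gamma_Y\circ\alpha_Y\circ\eta_{T,Y}$ by $\eta_{T,Y}$, pull $\gamma_X$ out with the $\circ_T/\circ$ compatibility law (4), and identify $(\eta_{T,Y}\circ f)\circ_T\id_{TX}$ with $Tf$ via the monad structure. Your reading of the statement's $\gamma_Y\circ Tf\circ\alpha_X$ as a typo for $\alpha_Y\circ Tf\circ\gamma_X$ matches what the paper actually computes, and invoking Proposition~\ref{prop:optimal-oplax-functor-by-GC} on the functor $T_s$ to get oplaxity simply makes explicit what the paper leaves implicit.
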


\begin{proof}
  Let $X,Y$ be two sets, and $f:X\to Y$ be a function. We have
\begin{align*}
  (\eta_{T^\sharp,Y}\circ f)\circ_{T^\sharp} \id_{T^\sharp X}
  & = \alpha_Y \circ \big(
         (\gamma_Y \circ \alpha_Y \circ \eta_{T, Y} \circ f)
          \circ_T
          (\gamma_X \circ \id_{T^\sharp X})
         \big)
    & \text{(definition)}\\
  & = \alpha_Y \circ \big(
         (\gamma_Y \circ \alpha_Y \circ \eta_{T, Y} \circ f)
          \circ_T
          (\id_{T X} \circ \gamma_X)
         \big)
    & \text{($\circ$-unit)}\\
  & = \alpha_Y \circ \big(
         (\eta_{T, Y} \circ f)
          \circ_T
          (\id_{T X} \circ \gamma_X)
         \big)
    & \text{(assumption $\gamma\circ\alpha\circ\eta_T=\eta_T$)}\\
  & = \alpha_Y \circ \big(
         ((\eta_{T, Y} \circ f) \circ_T \id_{T X})
          \circ \gamma_X
         \big)
    & \text{(compatibility $\circ/\circ_T$)}\\
  & = \alpha_Y\circ Tf \circ \gamma_X
    & \text{($(T,\eta_T,\circ_T)$ monad)}.
  \end{align*}
\end{proof}

 \begin{proposition*}[\ref{prop:best-bracketing}]
   For all compatible maps $f,g,h$ in $\widetilde{\Kleisli}_\ii$,
   $(h \circ_\ii g) \circ_\ii f {}\mathop{\dot\sqsubseteq}{} h \circ_\ii (g \circ_\ii f)$.
\end{proposition*}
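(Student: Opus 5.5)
We unfold both sides with the explicit formula for $\circ_\ii$ from the Interval Abstract Monad example. Equivalently, we use the general formula of Theorem~\ref{theorem:optimal-oplax-monad-by-GC}, $g\circ_\ii f = \alpha\circ((\gamma\circ g)\circ_\pp(\gamma\circ f))$. Write $\alpha,\gamma$ for the object-wise Galois insertions $\pp(-)\galois{\alpha}{\gamma}\ii_\iota(-)$. Since $\circ_\pp$ is the concrete Kleisli composition, $(\gamma\circ g)\circ_\pp(\gamma\circ f)$ sends $x$ to $\bigcup_{y\in\gamma(f(x))}\gamma(g(y))$.

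The first step is to compute the left bracketing. We have
\[((h\circ_\ii g)\circ_\ii f)(x)=\alpha\Big(\bigcup_{y\in\gamma(f(x))}\gamma\big(\alpha\big(\textstyle\bigcup_{z\in\gamma(g(y))}\gamma(h(z))\big)\big)\Big).\]
Since $\alpha$ is additive, this equals $\bigsqcup_{y\in\gamma(f(x))}\alpha\gamma\alpha(\cdots)$. By the Galois identity $\alpha\gamma\alpha=\alpha$, it reduces to
\[\bigsqcup_{y\in\gamma(f(x))}\alpha\Big(\bigcup_{z\in\gamma(g(y))}\gamma(h(z))\Big)=\alpha\Big(\bigcup_{y\in\gamma(f(x))}\bigcup_{z\in\gamma(g(y))}\gamma(h(z))\Big),\]
using additivity of $\alpha$ once more. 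The point is that the inner $\gamma\circ\alpha$ is absorbed because $\alpha$ commutes with unions; this is where the footnote's remark that $\alpha$ commutes with $\mu_\pp$ enters.

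The second step is to compute the right bracketing:
\[(h\circ_\ii(g\circ_\ii f))(x)=\alpha\Big(\bigcup_{z\in\gamma\alpha(S_x)}\gamma(h(z))\Big),\qquad S_x:=\bigcup_{y\in\gamma(f(x))}\gamma(g(y)).\]
The first step shows the left side equals $\alpha\big(\bigcup_{z\in S_x}\gamma(h(z))\big)$. The comparison is then immediate. Extensivity $S_x\subseteq\gamma\alpha(S_x)$ makes the index set of the union grow, so the union is monotone in it, and monotonicity of $\alpha$ gives the pointwise inequality $\dot\sqsubseteq$.

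I expect the only delicate point to be the careful handling of $\alpha\circ\gamma\circ\alpha=\alpha$ together with additivity of $\alpha$ over the arbitrary union. Both are standard Galois-connection facts, and the Galois insertion hypothesis is not even needed here. One should also check that the formula in the Interval Abstract Monad example indeed coincides with the general $\alpha\circ((\gamma\circ g)\circ_\pp(\gamma\circ f))$, which holds by construction of Theorem~\ref{theorem:optimal-oplax-monad-by-GC}. Beyond this, no monotonicity of $h$ is required, since the argument only enlarges an index set.
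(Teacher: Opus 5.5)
Your proof is correct and is essentially the paper's argument written elementwise. The paper isolates the absorption identity $\alpha\circ((\gamma\circ\alpha\circ v)\circ_\pp u)=\alpha\circ(v\circ_\pp u)$, proved via additivity of $\alpha$ and $\alpha\circ\gamma\circ\alpha=\alpha$, and uses associativity of $\circ_\pp$ to rewrite the left bracketing as $\alpha\circ((\gamma\circ h)\circ_\pp(\gamma\circ g)\circ_\pp(\gamma\circ f))$; it then bounds the right bracketing with $\gamma\circ\alpha\geq\id$ and monotonicity of $\circ_\pp$, exactly as your enlargement $S_x\subseteq\gamma\alpha(S_x)$ does.
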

\begin{proof}
   This comes from the fact that $\mu_\pp$ is $(\pp-)$ join, and that $\alpha$ commutes with join,
   so that for any $k:X\to \pp(Y)$ and $E\in\pp(X)$,
   $(\alpha \circ \mu_\pp \circ \pp(\gamma\circ\alpha\circ k))(E) = \alpha\big(\bigcup_{x\in E} k(x)\big)
    = \bigsqcup_{x \in E}\big((\alpha\circ k)(x)\big)$.

   In particular, because $\alpha \circ \gamma \circ \alpha = \alpha$,
   for any $u:X \to \pp(Y)$ and $v:Y\to\pp(Z)$, we have
   $\alpha \circ ((\gamma \circ \alpha \circ v) \circ_\pp  u) = \alpha \circ (v \circ_\pp u)$.
Indeed, for all $x\in X$,
   $\big(\alpha \circ ((\gamma \circ \alpha \circ v) \circ_\pp  u)\big)(x)
    = \big(\alpha \circ \mu_\pp \circ \pp(\gamma \circ \alpha \circ v) \circ u\big)(x)
    = \alpha\Big(\mu_\pp\big(\;\{(\gamma\circ\alpha\circ v)(y) \,|\, y\in u(x)\}\;\big)\Big)
    = \alpha\big(\bigcup_{y\in u(x)} (\gamma\circ\alpha\circ v)(y)\big)
    = \bigsqcup_{y\in u(x)} (\alpha\circ\gamma\circ\alpha\circ v)(y)
    = \bigsqcup_{y\in u(x)} (\alpha\circ v)(y)
    = \alpha\big(\bigcup_{y\in u(x)} v(y)\big)
    = \big(\alpha \circ \mu_\pp \circ \pp(v) \circ u\big)(x)
    = \big(\alpha \circ (v \circ_\pp u)\big)(x)$.

   This can be applied to our question as follows.
   \begin{align*}
   (h \circ_\ii g) \circ_\ii f
      &= \big(\alpha \circ ((\gamma \circ h) \circ_\pp (\gamma \circ g))\big) \circ_\ii f \\
      &= \alpha \circ \Big(
            \big(\gamma \circ \alpha \circ ((\gamma \circ h) \circ_\pp (\gamma \circ g))\big)
             \circ_\pp (\gamma \circ f)\Big)\\
      &= \alpha \circ \Big(
            ((\gamma \circ h) \circ_\pp (\gamma \circ g))
             \circ_\pp (\gamma \circ f)\Big)\\
      &= \alpha \circ \Big( (\gamma \circ h) \circ_\pp (\gamma \circ g) \circ_\pp (\gamma \circ f) \Big).
    \end{align*}
In contrast, the over bracketing leads to
    \begin{align*}
    h \circ_\ii (g \circ_\ii f)
      &= h \circ_\ii \big(\alpha \circ ((\gamma \circ g) \circ_\pp (\gamma \circ f))\big) \\
      &= \alpha \circ \Big(
            \Big((\gamma \circ h) \circ_\pp
             \big(\gamma \circ \alpha \circ ((\gamma \circ g) \circ_\pp (\gamma \circ f))\big)
             \Big)\\
      &\pgeq \alpha \circ \Big(
            \Big((\gamma \circ h) \circ_\pp
             \big((\gamma \circ g) \circ_\pp (\gamma \circ f)\big)
             \Big)\\
      &= \alpha \circ \Big( (\gamma \circ h) \circ_\pp (\gamma \circ g) \circ_\pp (\gamma \circ f) \Big)\\
      &= (h \circ_\ii g) \circ_\ii f
   \end{align*}
   where we have used $\gamma\circ\alpha \geq \id$ to obtain the inequality.
\end{proof}

\end{document}